\pdfoutput=1
\newif\ifFull
\Fulltrue

\ifFull
\documentclass[11pt,runningheads]{llncs}
\usepackage[margin=1in]{geometry}
\else
\documentclass{llncs}
\fi

\usepackage{hyperref}
\usepackage{graphicx}
\usepackage{subfigure}
\usepackage{enumerate}
\usepackage{times}
\usepackage{cite}
\usepackage{amsmath,amssymb}

\newcommand{\RR}{{\mathbb R}}
\newcommand{\Oh}{{\ensuremath{\mathcal{O}}}}
\newcommand{\Th}{{\Theta}}
\newcommand{\eps}{\ensuremath{\varepsilon}}

\renewenvironment{proof}{\medskip\noindent\textbf{Proof:}}{\mbox{}\hfill\qed\par\medskip}
\newenvironment{proofof}[1]{\medskip\noindent\textbf{Proof of #1:}}{\mbox{}\hfill\qed\par\medskip}

\title{Balanced Circle Packings for Planar Graphs}

\author{
Md.~Jawaherul~Alam\inst{1}
\and
David Eppstein\inst{2}
\and
Michael~T.~Goodrich\inst{2}
\and\\
Stephen~G.~Kobourov\inst{1}
\and
Sergey Pupyrev\inst{1,3}
}

\authorrunning{J.~Alam et al.}

\institute{
    Department of Computer Science, University of Arizona, Tucson, Arizona, USA
\and
    Department of Computer Science, University of California, Irvine, California, USA
\and
  Institute of Mathematics and Computer Science, Ural Federal University, Russia
}



\begin{document}

\maketitle

\begin{abstract}
We study \emph{balanced} circle packings and circle-contact representations
for planar graphs,
where
the ratio of the largest circle's diameter to the smallest circle's diameter
is polynomial in the number of circles.
We provide a number of positive
and negative results for the existence
of such balanced configurations.
%
\end{abstract}

\section{Introduction}

\ifFull
Circle packings are a frequently used and important tool in graph drawing~\cite{Malitz1994,BerEpp-WADS-01,Moh-GD-99,Keszegh2010,Rot-GD-11,AicRotSch-CGTA-12,Epp-GD-12,BekRaf-GD-12,EppHolLof-GD-13}.
\else
Circle packings are a frequently used and important tool in graph drawing~\cite{bs-rpg-93,Malitz1994,BerEpp-WADS-01,Epp-GD-12,EppHolLof-GD-13}.
\fi
In this application, they can be formalized using the notion of
a \emph{circle-contact representation} for a planar graph; this is a collection of interior-disjoint circles in $\RR^2$, corresponding one-for-one with the vertices of the graph, such that two vertices are adjacent if and only if their corresponding two circles are
\ifFull
tangent to each other~\cite{h-contact-96,Hli98}.
\else
tangent to each other~\cite{Hli98}.
\fi
\ifFull
Graphs with circle-contact representations are also known as ``coin graphs''~\cite{Sachs1994133}.
\fi
In a classic paper, Koebe~\cite{Koebe36}
proved that every triangulated planar graph
has a circle-contact representation, and this has been subsequently
re-proved several times.
Generalizing this,
every planar graph has a circle-contact representation:
we can triangulate the graph by adding ``dummy'' vertices connected to the existing vertices within each face, produce a circle-contact representation for this augmented graph,
and then remove the circles corresponding to dummy vertices.
It is not always possible to describe a  circle-contact representation for a given graph
by a symbolic formula involving radicals~\cite{bs-rpg-93,Galois},
but they can nevertheless be constructed numerically and efficiently by
polynomial-time iterative schemes~\cite{ColSte-CGTA-03,Moh-DM-93}.

One of the drawbacks of some of these constructions, however,
is that the sizes of the circles in some of these configurations
may vary exponentially,
leading to drawings with very high area or with portions that are
so small that they are below the resolution of the display.
For this reason, we are interested in
\emph{balanced} circle packings and
circle-contact representations for planar graphs,
where the ratio of the
maximum and minimum diameters for the set of circles is
polynomial in the number of vertices in the graph; see \autoref{fig:motivation}.
\ifFull
Such drawings could be drawn with polynomial area, for instance,
where the smallest circle determines the minimum resolution.
\fi

\begin{figure}[t]
\centering
\includegraphics[width=\textwidth]{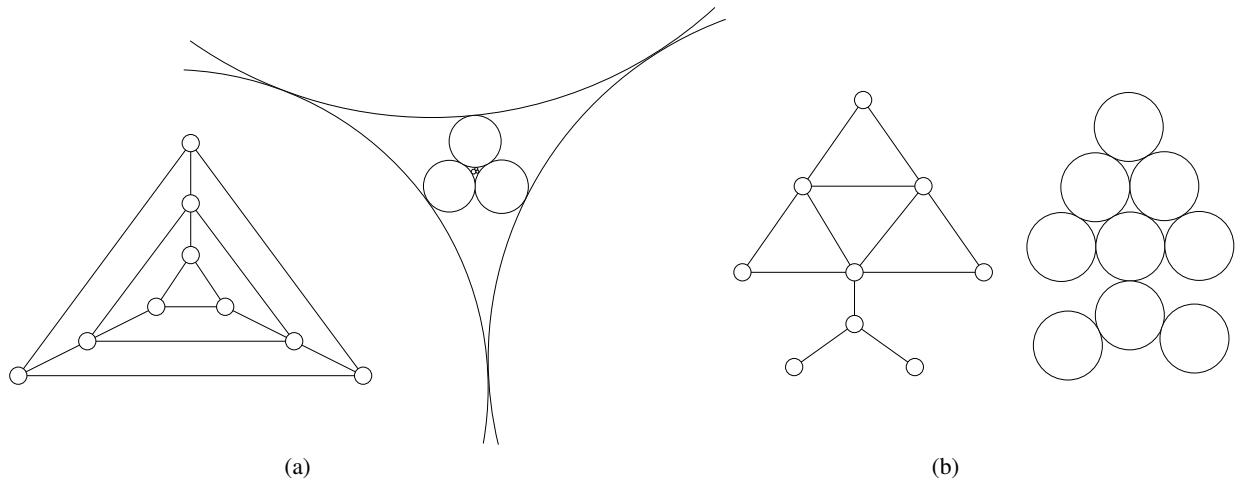}\\
(a)\hspace{0.5\textwidth}(b)
\caption{Two planar graphs with possible circle-contact representations: (a) a representation that is not optimally balanced; (b) a perfectly-balanced representation}
\label{fig:motivation}
\end{figure}

\ifFull
\subsection{Related Work}
\else
\paragraph{Related Work.}
\fi
There is a large body of work about representing
planar graphs as contact graphs,
where vertices are represented by geometrical objects
and edges correspond to two objects touching in some pre-specified fashion.
For example,
\ifFull
Hlin\v{e}n\'{y}~\cite{h-contact-96,Hli98} studies
\else
Hlin\v{e}n\'{y}~\cite{Hli98} studies
\fi
contact representations using curves and line segments as
objects.
Several authors have considered contact graphs of triangles of various types.
For instance,
de~Fraysseix \textit{et al.}~\cite{FMR94} show that
every planar graph has a triangle-contact representation,
\ifFull
Badent \textit{et al.}~\cite{homothetic07} show that
partial planar 3-trees and some series-parallel graphs
have contact representations
with homothetic triangles,
\fi
and Gon\c{c}alves \textit{et al.}~\cite{GonLevPin-DCG-12}
prove that every 3-connected
planar graph and its dual can be simultaneously
represented by touching triangles (and
they point out that 4-connected planar graphs also
have contact representations with
homothetic triangles).
Also, Duncan \textit{et al.}~\cite{ghkk10} show that
every planar graph has a contact representation
with convex hexagons all of whose sides have one of three possible
slopes, and that hexagons are necessary for some graphs, if convexity is required.
With respect to balanced circle-contact representations,
Breu and Kirkpatrick~\cite{Breu19983} show that it is NP-complete to
test whether a graph has a perfectly-balanced
circle-contact representation, in which every circle is the same size.
\ifFull
Circle-contact graphs are related to \emph{disk graphs}~\cite{HK01}, which
represent a graph by intersecting disks; unlike for circle contact graphs, the interiors of the disks are not required to be
disjoint.
Regarding the resolution of disk graphs,
McDiarmid and M{\"{u}}ller~\cite{McDiarmid2013114} show that
there are $n$-vertex graphs such that in every realization by disks
with integer radii, at least one coordinate or radius is $2^{2^{\Omega(n)}}$,
and they also show that every disk graph can be realized by
disks with integer coordinates and radii that are at most $2^{2^{\Oh(n)}}$.
\fi

\ifFull
\subsection{New Results}
\else
\paragraph{New Results.}
\fi
In this paper, we provide a number of positive and negative results
regarding
balanced circle-contact representations for planar graphs:
\begin{itemize}
\item Every planar graph with bounded maximum vertex degree and
 logarithmic outerplanarity admits a balanced circle-contact representation.
\item There exist planar graphs with bounded maximum degree and linear outerplanarity, or with linear maximum degree and bounded outerplanarity, that do not admit a balanced circle-contact representation.
\item Every tree admits a balanced circle-contact representation.
\item Every outerpath admits a balanced circle-contact representation.
\item Every cactus graph admits a balanced circle-contact representation.
\item Every planar graph with bounded tree-depth admits a balanced circle-contact representation.
\end{itemize}

\section{Bounded Degree and Logarithmic Outerplanarity}

A plane graph
(that is, a combinatorially fixed planar embedding of a planar graph)
is \emph{outerplanar} if
all of its vertices are on the outer face.
A \emph{$k$-outerplanar graph} is defined recursively.
As a base case, if a plane graph is outerplanar,
then it is a $1$-outerplanar graph.
A plane graph is $k$-outerplanar, for $k>1$,
if the removal of all the
outer vertices (and their incident edges)
yields a graph such that each of the remaining components
is $(k-1)$-outerplanar.
The \emph{outerplanarity} of a plane graph $G$ is the minimum
value for $k$ such that $G$ is $k$-outerplanar.

\ifFull
In this section, we show that every $n$-vertex plane graph with
bounded maximum degree
and $\Oh(\log n)$ outerplanarity admits a balanced
circle-contact representation.
We also show that these two restrictions are necessary, by
demonstrating examples of planar
 graphs where either of these restrictions are
violated and there is no balanced circle-contact representation.
\fi

\subsection{Balanced Circle-Contact Representations}

\begin{theorem}
\label{th:bounded}
Every $n$-vertex $k$-outerplanar graph with maximum degree $\Delta$
admits a circle-contact representation where the ratio of the maximum
and the minimum diameter is at most $f(\Delta)^{k+\log n}$,
for some positive function~$f$.
In particular, when $\Delta$ is a fixed constant
and $k$ is $\Oh(\log n)$, this ratio is polynomial in~$n$.
\end{theorem}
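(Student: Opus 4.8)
The plan is to realize $G$ inside a triangulation, apply Koebe's theorem, and then bound the spread of radii using the \emph{Ring Lemma} of Rodin and Sullivan, which states that if a circle is completely surrounded by a ring of $m$ mutually tangent circles, then the ratio of its radius to that of any ring circle is bounded by a constant $\rho(m)$ depending only on $m$. First I would augment $G$ to a triangulation $T$ by adding dummy vertices and edges, taking care that every vertex of $T$ has degree bounded by some $f_0(\Delta)$; I would then take a Koebe circle-contact representation of $T$ and delete the circles for the dummy vertices, leaving a representation of $G$. Since deletion does not change the remaining radii, it suffices to bound the ratio of the largest to the smallest radius in the packing of $T$.

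The Ring Lemma supplies the per-step estimate: every interior vertex $v$ of $T$ has degree at most $f_0(\Delta)$ and is surrounded by a full flower, so each radius adjacent to $r_v$ is at least $\rho(f_0(\Delta))\,r_v$; applying this in both directions along an edge between two interior vertices shows that their radii differ by at most a factor $f(\Delta) := 1/\rho(f_0(\Delta))$. Consequently, if two circles are joined by a path of length $\ell$ through interior bounded-degree vertices, their radii (equivalently, diameters) differ by at most $f(\Delta)^{\ell}$. Vertices on the outer face lack full flowers; I would make the few relevant ones interior by surrounding the whole configuration with an extra triangulated ring of dummy vertices, so the estimate applies uniformly. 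The problem therefore reduces to constructing $T$ so that every pair of vertices is connected by such a path of length $\Oh(k+\log n)$.

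Here I would exploit the layered structure twice. The $k$ outerplanar layers of $G$ nest radially, and passing from one layer to the next inner one crosses essentially a single ring; this accounts for the additive $k$ in the exponent. Within this radial scaffold I would triangulate each annular region recursively, splitting it with balanced separators---of size $\Oh(k)$, since $k$-outerplanar graphs have treewidth $\Oh(k)$---so that the recursion has depth $\Oh(\log n)$ and the dual of the resulting triangulation is a tree of height $\Oh(\log n)$ whose tree-paths have bounded per-edge radius ratio. Composing the radial depth $k$ with the recursive depth $\log n$, every circle is reachable from a fixed reference circle by a path of length $\Oh(k+\log n)$, giving the claimed ratio $f(\Delta)^{k+\log n}$.

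The main obstacle is controlling the two recursions simultaneously. Triangulating with separators of size $\Oh(k)$ naively inflates vertex degrees, so I must fill each separated region with a bounded-degree triangulation (a balanced recursive fan, say) rather than coning from a single apex; this is exactly what keeps $f_0(\Delta)$---and hence $f(\Delta)$---independent of $n$ and $k$. The more delicate point is ensuring that the radial layer-depth and the within-region recursion depth \emph{add} rather than multiply in the final exponent, so that we obtain $k+\log n$ and not $k\log n$: a path from the reference circle to a given circle should be charged once to the target's layer index and once to its position in the balanced recursion, rather than restarting the $\Oh(\log n)$ recursion independently inside each of the $k$ layers. The cycle $C_n$, which is $1$-outerplanar yet forces the construction to triangulate its interior to depth $\Oh(\log n)$, is a useful sanity check that the bound is an existence guarantee and need not be tight.
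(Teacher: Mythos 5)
Your reduction is sound and matches the paper's: augment $G$ to a bounded-degree triangulation of graph diameter $\Oh(k+\log n)$, invoke a per-edge radius-ratio bound (your Ring Lemma argument plays the role of the Malitz--Papakostas bound, \autoref{lm:bounded-packing}, in the paper), and delete the dummy circles. That part would work. The genuine gap is that the construction of such a triangulation --- which is the actual content of \autoref{th:bounded} --- is never carried out. You correctly identify the crux (``the radial layer-depth and the within-region recursion depth must \emph{add} rather than multiply''), but you only state that this \emph{should} happen; the mechanism you offer, recursively splitting each annular region with balanced separators of size $\Oh(k)$, does not explain how it happens. The danger is concrete: a single face between consecutive layers can have unbounded size, so with bounded degree, entering that face at one vertex and reaching an arbitrary vertex of the next layer costs $\Omega(\log n)$ steps; doing this independently in each of the $k$ layers gives diameter $\Theta(k\log n)$, i.e.\ ratio $f(\Delta)^{k\log n}$, which is superpolynomial precisely in the regime $k=\Theta(\log n)$ that the theorem is about. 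Nothing in your separator sketch prevents this restart-per-layer behavior.

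The paper's proof supplies exactly the missing device: \emph{weight-balanced binary trees} inserted into the faces between layers, with weights defined bottom-up so that a leaf's weight equals the total weight of the subtree rooted at it in the next layer down. A leaf of weight $w_i$ then sits at depth $\Oh(\lceil\log(W/w_i)\rceil)$ in its tree, so the cost of descending from the outer root $r$ through roots $u_1,\dots,u_{l-1}$ to a vertex $v$ telescopes:
\[
\sum_{i} \Oh\bigl(\lceil \log\left(w(u_{i-1})/w(u_i)\right)\rceil\bigr)
= \Oh\bigl(k + \log\left(w(r)/w(v)\right)\bigr) = \Oh(k+\log n),
\]
where the additive $k$ comes only from the ceilings. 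This telescoping is what makes the layer count and the logarithm add instead of multiply, and it is the step your proposal acknowledges but does not replace. Unless you can show that your separator recursion admits an analogous amortization across layers (which is not apparent, and which you yourself flag as unresolved), the proof is incomplete at its central point.
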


In order to prove the theorem,
we need the following result from~\cite{Malitz1994}.
\ifFull
See also~\cite{Keszegh2010}.
\fi

\begin{lemma}[Malitz-Papakostas]
\label{lm:bounded-packing}
The vertices of every triangulated planar graph $G$ with the maximum degree
$\Delta$ can be represented by nonoverlapping disks in the plane so that two disks are tangent to each other if and only if the corresponding vertices are adjacent, and
for each two disks that are tangent to each other, the ratio of the radii of the smaller to the larger disk
 is at least $\alpha^{\Delta-2}$ with $\alpha = \frac{1}{3+2\sqrt{3}} \approx 0.15$.
\end{lemma}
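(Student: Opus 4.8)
The plan is to start from Koebe's theorem, already quoted in the introduction: since $G$ is triangulated it has a circle-contact representation by interior-disjoint disks, tangent exactly along the edges. So the existence half of the statement is free, and the entire content is the radius bound. Fix an edge $uv$ with $r_u \le r_v$; it suffices to show $r_u \ge \alpha^{\Delta-2}\,r_v$. I would work entirely inside the neighborhood of $v$. Because $G$ is triangulated, the neighbors $w_0 = u, w_1, \dots, w_{d-1}$ of $v$ (with $d = \deg(v) \le \Delta$) occur in a cyclic order in which each consecutive pair $w_i,w_{i+1}$ is itself an edge; hence $D_{w_i}$ is tangent to $D_v$ and to both $D_{w_{i-1}}$ and $D_{w_{i+1}}$.

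First I would record the angular bookkeeping around $v$. For each neighbor set $t_i = r_{w_i}/(r_v + r_{w_i}) \in (0,1)$, a quantity monotone in $r_{w_i}/r_v$. Applying the law of cosines to the triangle of centers $O_v,O_{w_i},O_{w_{i+1}}$ (whose sides are $r_v + r_{w_i}$, $r_v + r_{w_{i+1}}$, and $r_{w_i} + r_{w_{i+1}}$), the angle $\theta_i$ it makes at $O_v$ satisfies $\sin(\theta_i/2) = \sqrt{t_i\,t_{i+1}}$. Since these angles sweep once around $O_v$,
\[ \sum_{i=0}^{d-1} 2\arcsin\sqrt{t_i\,t_{i+1}} \;=\; 2\pi . \]
With at most $\Delta$ summands this forces some consecutive pair to have $\sqrt{t_i t_{i+1}} \ge \sin(\pi/\Delta)$, so $v$ has at least one neighbor $w^\ast$ whose radius is within a fixed (polynomial-in-$\Delta$) factor of $r_v$; this $w^\ast$ is my anchor.

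The heart of the argument is then to propagate from $w^\ast$ around the cycle to $u$, controlling how fast the radius can shrink per step. The key geometric input is that $D_{w_i}$ is tangent to the three disks $D_v, D_{w_{i-1}}, D_{w_{i+1}}$, so its size is pinned by an Apollonius relation to its two cyclic neighbors; the extremal (smallest) case is the Soddy configuration of three mutually tangent equal disks, whose inner tangent disk has radius exactly $1/(3+2\sqrt3) = \alpha$ times theirs, which is precisely where the constant $\alpha$ is born. Quantifying this shows the radius drops by a factor no worse than $\alpha$ from one neighbor to the next, and $u$ lies at most $\Delta-2$ steps from $w^\ast$ along the cycle, yielding $r_u \ge \alpha^{\Delta-2}\,r_v$.

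The main obstacle is exactly this per-step bound. A single tangency constrains nothing: a disk can be wedged arbitrarily small into the cusp between two tangent disks, so the angle identity above is necessary but not sufficient, and the naive estimate fails. The bound must instead be extracted from the full triple-tangency constraint on $D_{w_i}$ together with the bounded degree of $v$; it is the degree bound that limits the number of shrinking steps and so produces the exponent $\Delta-2$. Making the extremal Soddy analysis tight, and checking that the anchoring step costs only a constant rather than another power of $\alpha$, is where the real work lies; this is the route taken by Malitz and Papakostas.
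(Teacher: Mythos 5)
First, a point of order: the paper does not actually prove this lemma. It is imported verbatim from Malitz and Papakostas~\cite{Malitz1994} (``we need the following result from \cite{Malitz1994}''), so there is no in-paper proof to compare against; your reconstruction has to stand on its own. Several of your ingredients are sound: Koebe's theorem for existence, the law-of-cosines identity $\sin(\theta_i/2)=\sqrt{t_i t_{i+1}}$ with $t_i=r_{w_i}/(r_v+r_{w_i})$, the angle-sum anchor (valid for interior vertices only --- it fails for the three boundary vertices of the packing), and the identification of $\alpha$ as the inner Soddy radius via Descartes' Circle Theorem.

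The fatal problem is your per-step claim that ``the radius drops by a factor no worse than $\alpha$ from one neighbor to the next'' around the link of $v$. This is not just unproven; it is false. Take three mutually tangent unit disks $A$, $Y$, $X_0$, and for $i\ge 1$ let $X_{i+1}$ be the inner Soddy circle of $A$, $X_{i-1}$, $X_i$; stop after $X_k$. Descartes' theorem gives, for the curvatures $c_i$ of the $X_i$, the inequality $\sqrt{c_{i+1}}\ge\sqrt{c_i}+\sqrt{c_{i-1}}$, so the curvatures grow Fibonacci-fashion and $r_{X_i}=\Oh(\varphi^{-2i})$, where $\varphi$ is the golden ratio. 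The resulting graph is an Apollonian network, hence maximal planar, with $\Delta=\deg(A)=k+2$. Now the triangle $A X_{k-2} X_k$ is a face, so $A$ and $X_k$ are \emph{consecutive} in the link of $v=X_{k-2}$, yet $r_{X_k}/r_A=\Oh(\varphi^{-2(\Delta-2)})$: a single step along a link can lose an exponential factor. Applied to this $v$, your procedure anchors at $A$ and concludes after one step that $r_{X_k}\ge\alpha\, r_A$, which is wildly false. The source of the error is visible in the Descartes computation itself: what it actually yields is $r_{w_i}\ge\alpha\min(r_v,r_{w_{i-1}},r_{w_{i+1}})$, and the minimum can be attained by the disk you have not yet bounded, so the recursion is two-sided and cannot be telescoped outward from an anchor; in the worst case it compounds exactly as in the Fibonacci example. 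Indeed, if your per-step bound were true it would prove a bound of roughly $\alpha^{\Delta/2}$ (times your $\Th(1/\Delta)$ anchor loss, which, as you note, is itself a problem), and the example above violates any such bound for large $\Delta$. The exponent $\Delta-2$ in the true statement is generated by charging shrinkage to the \emph{degree} of the large disk --- its $\Delta-2$ nested neighbors each cost one factor --- not by walking around one link at constant cost per step.

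There is a second, independent gap in the setup. You fix an arbitrary representation produced by Koebe's theorem and try to prove the ratio bound for it, but the bound is false for some contact representations, so no such argument can succeed without selecting or normalizing the packing. For $K_4$ (so $\Delta=3$ and the claimed bound is $\alpha$), take circles with curvatures $1$, $1$, $t$, and $t+2+2\sqrt{2t+1}$: these are four mutually tangent circles (two small ones stacked in the cusp between two unit circles), a legitimate contact representation of $K_4$, in which a tangent pair has radius ratio $1/t$, arbitrarily small. The lemma is genuinely existential --- Malitz and Papakostas establish the bound for a suitably normalized packing --- and any correct proof must confront this, rather than treating the representation as given and the bound as a purely local consequence of tangency.
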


As a direct corollary,
every maximal planar graph with maximum degree
$\Delta=\Oh(1)$ and diameter $d=\Oh(\log n)$
has a balanced circle-contact representation.
\autoref{th:bounded} goes beyond this.

\begin{proofof}{\autoref{th:bounded}}
To prove the claim, it is sufficient to show how to augment a given
$k$-outerplanar graph
into a maximal planar graph with additional vertices so that its maximum degree remains
$\Oh(\Delta)$ and its diameter becomes $\Oh(k+\log n)$. By \autoref{lm:bounded-packing},
the resulting graph admits a circular contact representation with the given bounds on the
ratio of radii. Removing the circles corresponding to the added
vertices yields the desired balanced representation of the original graph.

\begin{figure}[b!]
\centering
\subfigure[]{
	\includegraphics[width=0.2\textwidth]{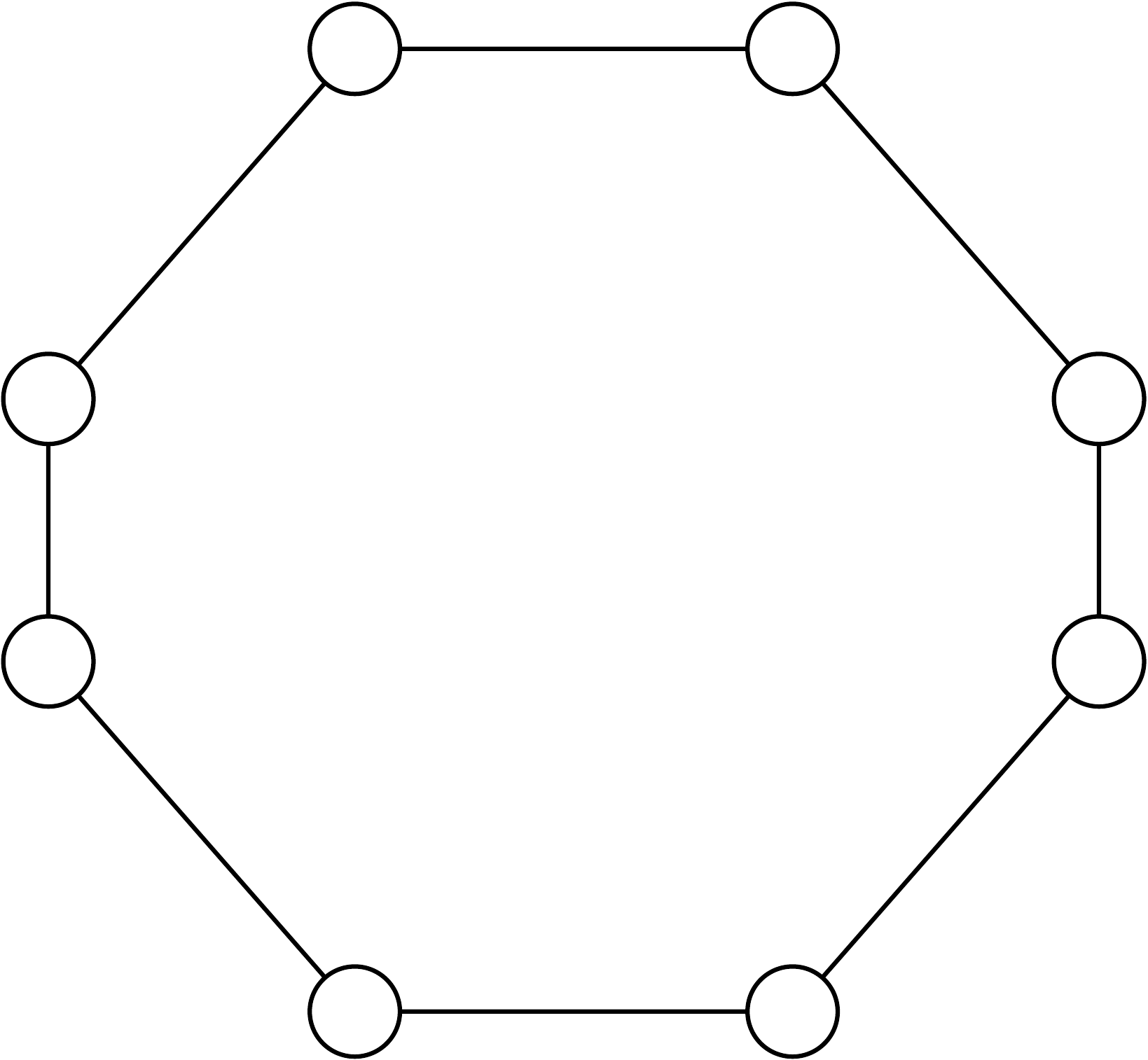}}
~~~~~~~~~~~~
\subfigure[]{
	\includegraphics[width=0.2\textwidth]{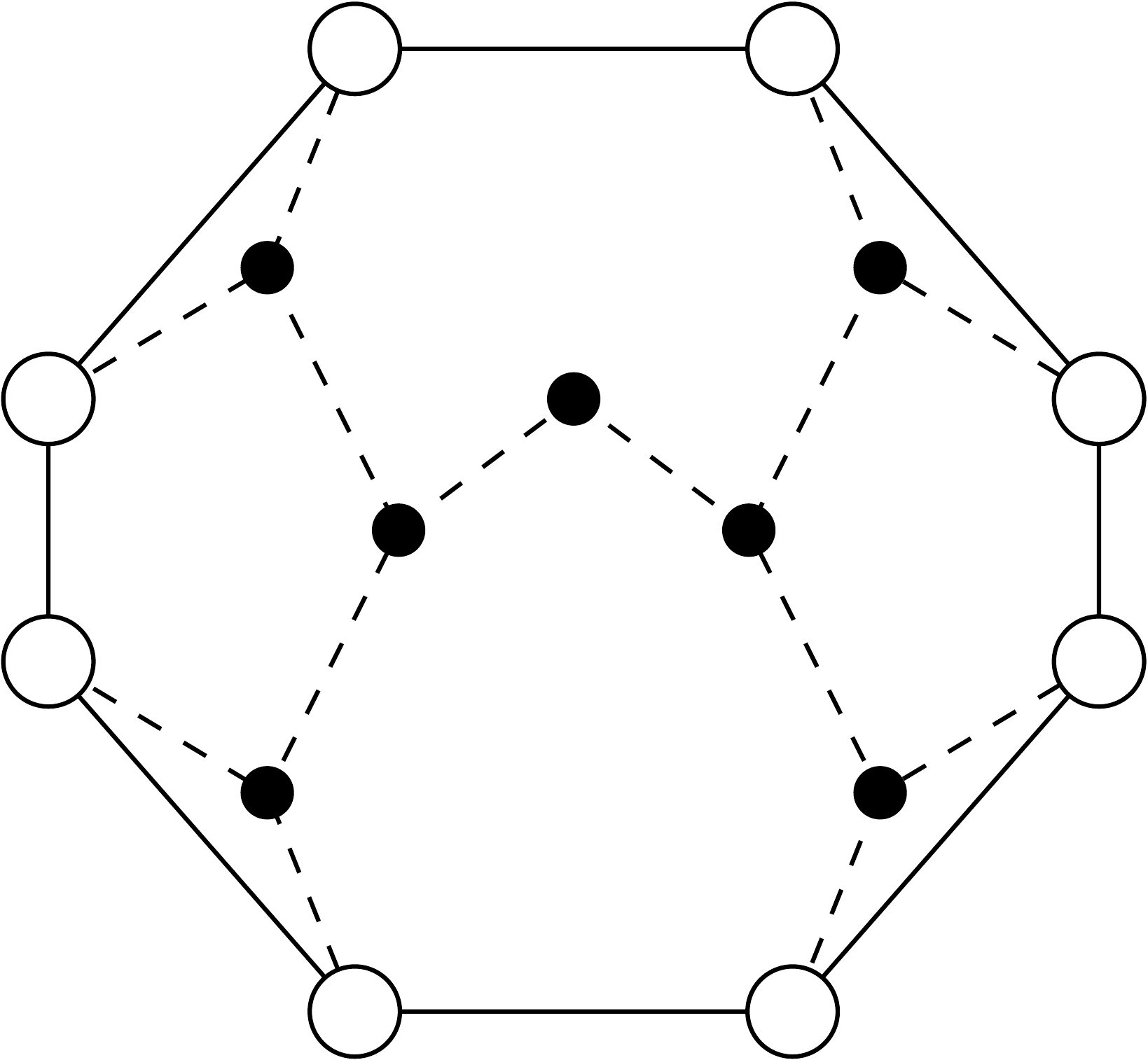}}
~~~~~~~~~~~~
\subfigure[]{
	\includegraphics[width=0.2\textwidth]{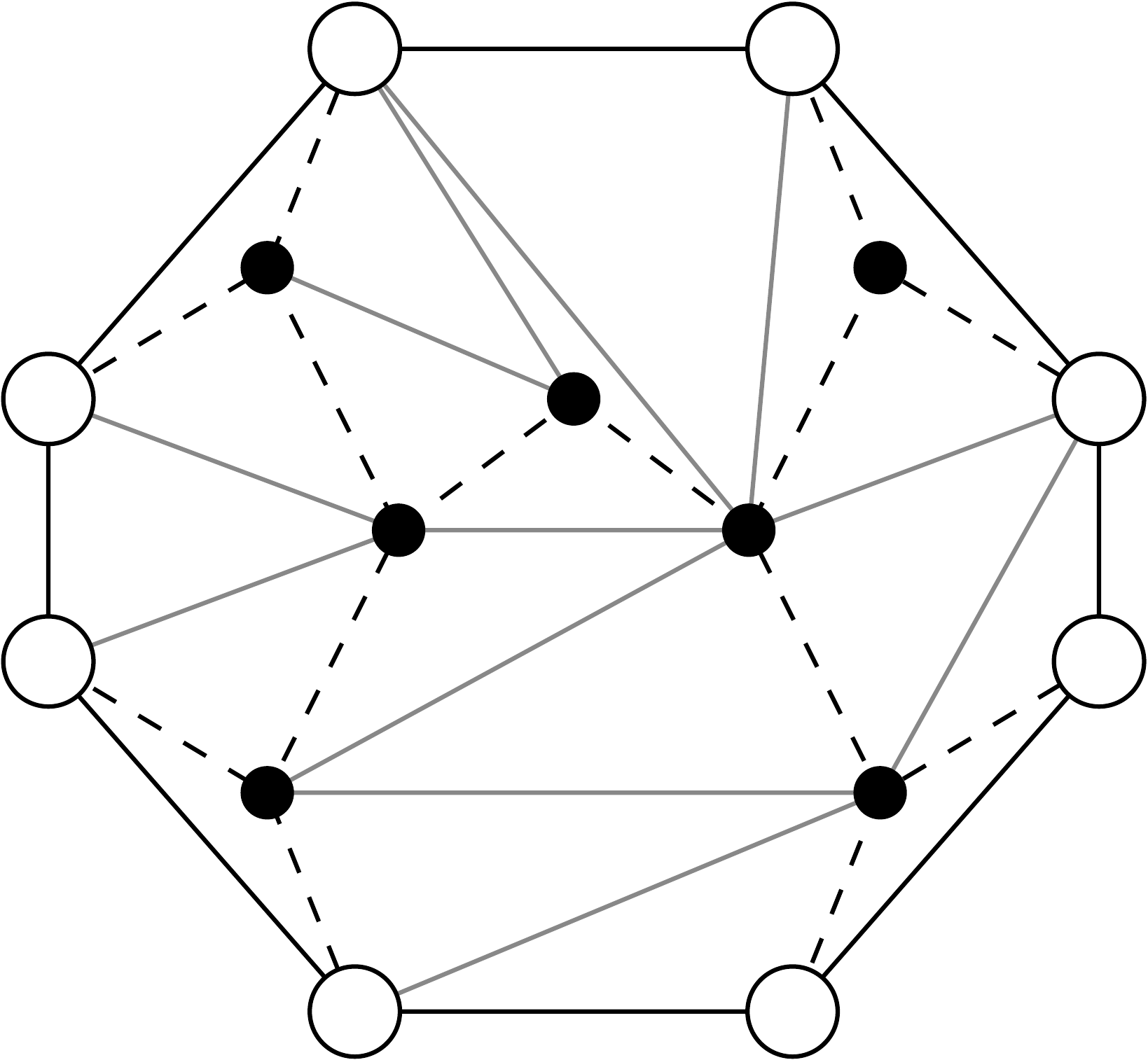}}
\caption{(a)~A face, (b)~augmentation with a balanced binary tree,
 (c)~triangulation with grey edges}
\label{fig:augment}
\end{figure}

Let $G$ be an $n$-vertex $k$-outerplanar graph with the maximum degree $\Delta$. If the
outerplanarity $k$ of $G$ is bounded by a constant, we can easily augment $G$ to logarithmic
diameter, preserving its constant maximum degree, as follows. Inside each non-triangular face $f$
of $G$, insert a balanced binary tree with
$\lceil \log |f|\rceil$ levels and $|f|$ leaves and then triangulate the
remaining non-triangular faces by inserting an \emph{outerpath}
(an outerplanar graph whose weak dual is a path) with constant maximum degree; see
\autoref{fig:augment}. However, such an augmentation results in a maximal
planar graph with the diameter $d=\Oh(k\log n)$, which does not yield a balanced
circle-contact representation when $k$ is non-constant. For $k=\Omega(\log n)$,
we present a different augmentation to achieve the diameter
$d=\Oh(k+\log n)$ in the resulting graph.

We augment the graph using \emph{weight-balanced binary trees}.
Let $T$ be a binary tree with
leaves $l_1, l_2, \ldots, l_{|f|}$
and a prespecified weight $w_i$ assigned to each leaf $l_i$.
The tree $T$ is \emph{weight-balanced} if the depth of each leaf $l_i$ in $T$ is $\Oh(\lceil \log (W/w_i) \rceil)$, where
$W=\sum_{i=1}^{f}w_i$. There exist several algorithms for producing a weight-balanced
binary tree with positive integer weights
defined on its leaves~\cite{GM59,NieRei-SJC-73}.

\begin{figure}[b!]
\center
\includegraphics[height=5.5cm]{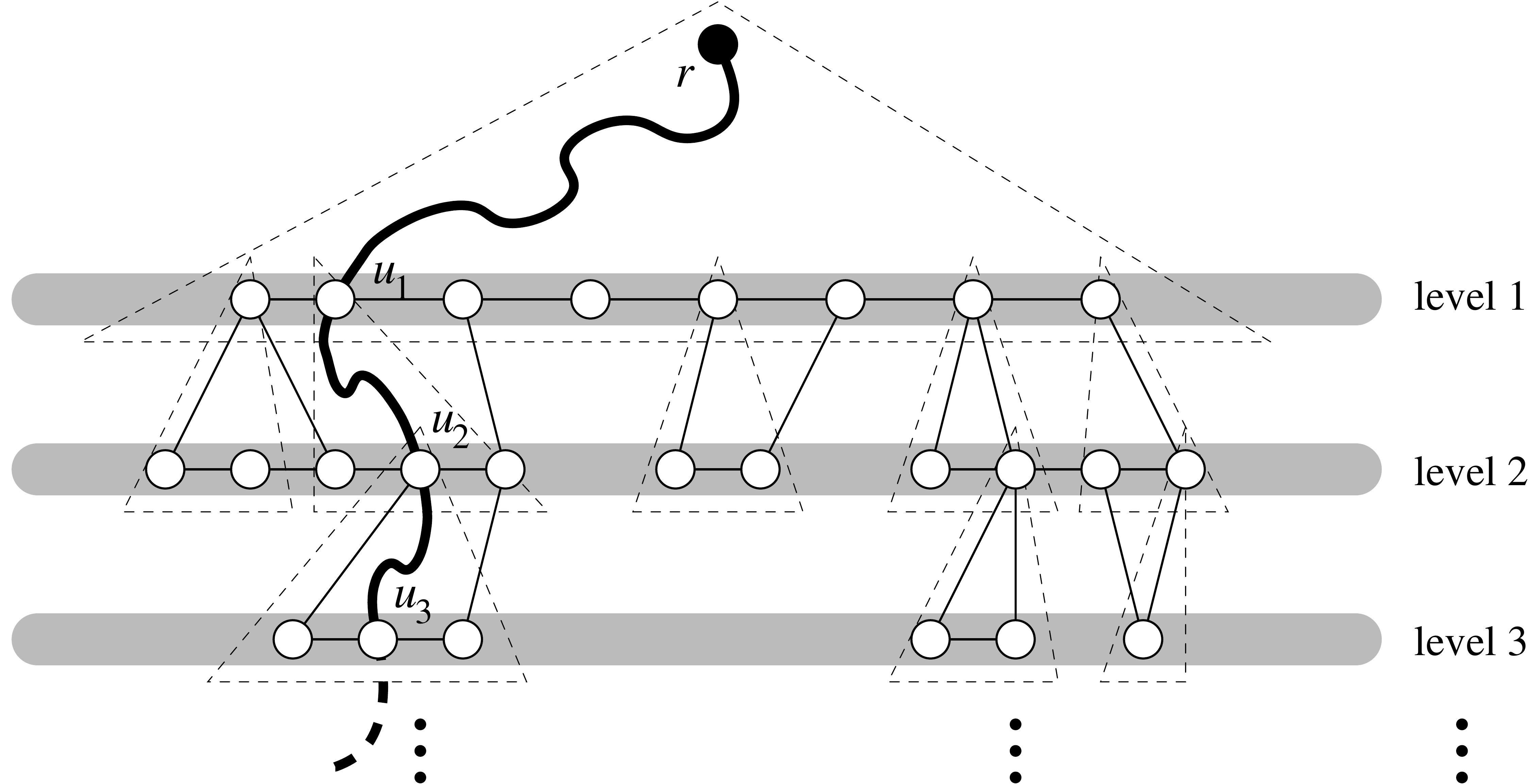}
\caption{Augmentation of $G$ with a weight-balanced binary trees}
\label{fig:weighted}
\end{figure}

To augment $G$, we label each vertex $v$ of $G$ with the number $l+1$,
where $l$ is the number of outer cycles that need to be removed before $v$ becomes
an outer vertex. By our assumption that the outerplanarity of $G$ is $k$, the label of every vertex is at most
$k$. It follows from this labeling that, for each vertex $v$ of $G$ with label $l>1$, there
exists a face $f$ containing $v$ such that $f$ has at least one vertex of label $l-1$ and such that all
the vertices on $f$ have label either $l$ or $l-1$. We insert a weight-balanced binary tree
inside $f$; we choose an arbitrary vertex of $f$ with label $l-1$ as the root of the tree, and a subset of vertices with
label $l$ as the leaves; see \autoref{fig:weighted}. We construct these trees inside the different
faces in such a way that each vertex of $G$ with label $l>1$ becomes a leaf in exactly one of
the trees.
Finally, we insert another weight-balanced tree $T_0$ on the outer face containing
all the outer vertices as the leaves.
Note that we have yet to specify the weights
we assign to these leaves for producing the weight-balanced trees.
By the construction, the union of all these trees forms a connected
spanning tree of $G$; we can consider the root of $T_0$ to be the root of the whole spanning tree.

Let us now specify the
weights assigned to the leaves of the different weight-balanced
trees. We label each tree with the label of its root, and define the weights for the leaves of each tree in a bottom-up ordering, by decreasing order of the labels of the trees.
In a tree $T$ with label $l=(k-1)$, all the leaves have label $k$ and are not the root
of any other tree; we assign each of these leave the weight~$1$. In this case, the total weight of $T$
is the number of its leaves. Similarly, for a tree with label
$l<k-1$, we assign a weight of $1$ to those leaves $v$ that do not have any tree rooted
at them; otherwise, if $v$ is the root of a tree $T_v$ with label $l+1$, the weight of $v$ is
the total weight of $T_v$. The total weight of $T$ is defined as the
summation of the weights of all its leaves.

Now, for each vertex $v$ of $G$, the distance to $v$ from the root $r$ of $T_0$
is $\Oh(k+\log n)$. Indeed, assume that $v=u_l$ is a vertex with label $l$ and $u_{l-1}$, $\ldots$,
$u_1$, $u_0=r$ are the root vertices of the successive weight-balanced trees $T_{u_{l-1}}$,
$\ldots$, $T_{u_1}$, $T_0$ with labels $l-1, \ldots, 1, 0$, respectively on the way from $v$
to $r$; see \autoref{fig:weighted}. Then the distance from $v$ to $r$ is
\ifFull
$$\Oh(\lceil \log w(r)/w(u_1) \rceil) + \Oh(\lceil \log w(u_1)/w(u_2) \rceil) + \ldots + 
\Oh(\lceil \log w(u_{l-1})/w(v) \rceil) = \Oh(k+\log w(r)).$$
\else
$\Oh(\lceil \log w(r)/w(u_1) \rceil) + \Oh(\lceil \log w(u_1)/w(u_2) \rceil) + \ldots +
\Oh(\lceil \log w(u_{l-1})/w(v) \rceil) = \Oh(k+\log w(r))$.
\fi
Here $w(u_i)$ denotes the weight of vertex $u_i$ as the root;
$w(r)$ is the weight of the root of $T_0$, which is equal to the total number of
vertices, $n$, in $G$. Therefore, the diameter of the augmented graph is $\Oh(k+\log n)$, 
where the first term, $k$, comes from the ceilings in the summation.
Finally, we triangulate the graph by
inserting outerpaths with constant maximum degree inside each non-triangular face to
obtain a maximal planar graph with constant maximum degree and $\Oh(k+\log n)$ diameter.
The result follows from \autoref{lm:bounded-packing}.
\end{proofof}

\subsection{Negative Results}
Next we show that, for a graph with unbounded maximum degree or
unbounded outerplanarity, there might not be a balanced
circle-contact representation with circles.

\begin{lemma}
\label{lem:no-balanced} There is no balanced
circle-contact representation for the graphs in
 \autoref{fig:no-balanced}.
\end{lemma}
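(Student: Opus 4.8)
The plan is to show that each graph in \autoref{fig:no-balanced} contains a ``witness'' substructure which, in \emph{every} circle-contact representation, forces the maximum circle radius to decay by a constant multiplicative factor across many layers, so that the ratio of the largest to the smallest diameter is forced to be exponential (hence super-polynomial) in $n$; this contradicts the polynomial bound required of a balanced representation. The single geometric fact driving everything is a bound on the ``hole'' left by three mutually tangent circles: if circles $A,B,C$ with radii $r_A,r_B,r_C$ are pairwise tangent, then the bounded curvilinear-triangular gap they enclose has inradius at most $\tfrac{1}{3}\max(r_A,r_B,r_C)$. I would derive this from Descartes' Circle Theorem: the inner Soddy circle of the gap has curvature $\kappa \ge \tfrac{1}{r_A}+\tfrac{1}{r_B}+\tfrac{1}{r_C} \ge 3/\max(r_A,r_B,r_C)$, and since the inradius of a region is the radius of its largest inscribed circle, every circle lying inside the gap has radius at most $1/\kappa$.

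For the bounded-degree, linear-outerplanarity graph the natural candidate is a stack of nested triangles $\Delta_1 \supset \Delta_2 \supset \cdots \supset \Delta_k$ with $k=\Th(n)$, joined by radial edges and triangulated so that every vertex has degree $\Oh(1)$. Because the embedding nests the $3$-cycles, in any representation the three (pairwise tangent) circles of $\Delta_{i+1}$ lie in the bounded region enclosed by the three circles of $\Delta_i$, and, being interior-disjoint from them, they lie inside the gap of $\Delta_i$. By the geometric fact, each such circle has radius at most $\tfrac{1}{3}$ of the largest radius in $\Delta_i$, so $r_{\max}(\Delta_{i+1}) \le \tfrac{1}{3}\,r_{\max}(\Delta_i)$. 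Iterating over all $k$ layers gives $r_{\max}(\Delta_k) \le 3^{-(k-1)}\,r_{\max}(\Delta_1)$, so the diameter ratio is at least $3^{\,k-1}=2^{\Omega(n)}$, which is not polynomial. Deleting the triangulating edges does not change any radius, so the original graph has no balanced representation.

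The main obstacle is the second graph, with linear maximum degree but only bounded outerplanarity: here the topological nesting has constant depth, so the ``one constant factor per layer'' cascade above can accumulate only a constant number of factors, which is not enough. Instead I would have to show that a \emph{single} high-degree neighborhood already forces a super-polynomial spread of sizes. The delicate point is rigidity: a high-degree vertex on its own is flexible --- its neighbors can be realized as an almost-equal Steiner chain, which would remain balanced --- so the gadget must pin down the cyclic order of the neighbors and the tangencies among them tightly enough that some incident circle is \emph{provably} forced into a chain of three-circle gaps, each shrinking by a constant factor, all packed inside the constant-depth layers permitted by the bounded outerplanarity. Establishing this forced lower bound against \emph{all} admissible representations, rather than merely exhibiting one bad representation, is the crux; once a super-polynomial forced spread is secured for the high-degree gadget, the contradiction with balance follows exactly as in the nested-triangle case.
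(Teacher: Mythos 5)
Your treatment of the nested-triangles graph (\autoref{fig:no-balanceda}) is sound and is essentially the paper's argument for that case: 3-connectivity forces the embedding, each layer's three mutually tangent circles confine the next layer to their curvilinear gap, Descartes' theorem gives a constant-factor decay per layer, and $\Theta(n)$ layers give exponential imbalance. (Two details you assert rather than prove: that the largest circle inside the gap is the one tangent to all three bounding circles --- the paper shows this by a short perturbation argument --- and the freedom in choosing the outer face, which the paper resolves by noting that whichever triangle is outermost, at least half of the chain remains nested inside it.)

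The genuine gap is the second graph (\autoref{fig:no-balancedb}), which you explicitly leave unresolved; and the reason you got stuck is that your key geometric bound is too weak, not that a new rigidity gadget is needed. You bound a circle in the gap of three mutually tangent circles by $\frac{1}{3}\max(r_A,r_B,r_C)$. In \autoref{fig:no-balancedb} every triangle of the nested chain contains the hub vertex $x$, so one of the three bounding circles is always the hub circle, whose radius may be enormous; a bound relative to the maximum of the triple then says nothing, which is exactly why you felt forced to look for a separate ``forced spread'' argument for the high-degree vertex. The paper's \autoref{lem:descartes} instead bounds the inscribed circle by half the \emph{median} of the three radii: with $r_1\le r_2\le r_3$, Descartes' identity gives $\frac{1}{r}=\frac{1}{r_1}+\frac{1}{r_2}+\frac{1}{r_3}+2\sqrt{\frac{1}{r_1r_2}+\frac{1}{r_2r_3}+\frac{1}{r_3r_1}}\ge\frac{1}{r_1}+\frac{1}{r_2}\ge\frac{2}{r_2}$, i.e.\ $r\le r_2/2$. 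Since the median of $\{r, r_{i+1}, r'_{i+1}\}$ never exceeds $\max(r_{i+1},r'_{i+1})$ no matter how large the hub radius $r$ is, the circles for $a_i,b_i$ --- which the unique embedding forces into the gap bounded by the hub circle and the circles for $a_{i+1},b_{i+1}$ --- satisfy $r_i,\,r'_i < \frac{1}{2}\max(r_{i+1},r'_{i+1})$. The cascade therefore runs along the $t=(n-1)/2$ triangles of the chain exactly as in case (a), with the hub's size rendered irrelevant; your Steiner-chain worry never materializes because 3-connectivity already forces the nesting, and no gadget beyond the chain itself is required.
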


\begin{figure}[t!]
\centering
	\subfigure[]{
    \includegraphics[width=0.4\textwidth]{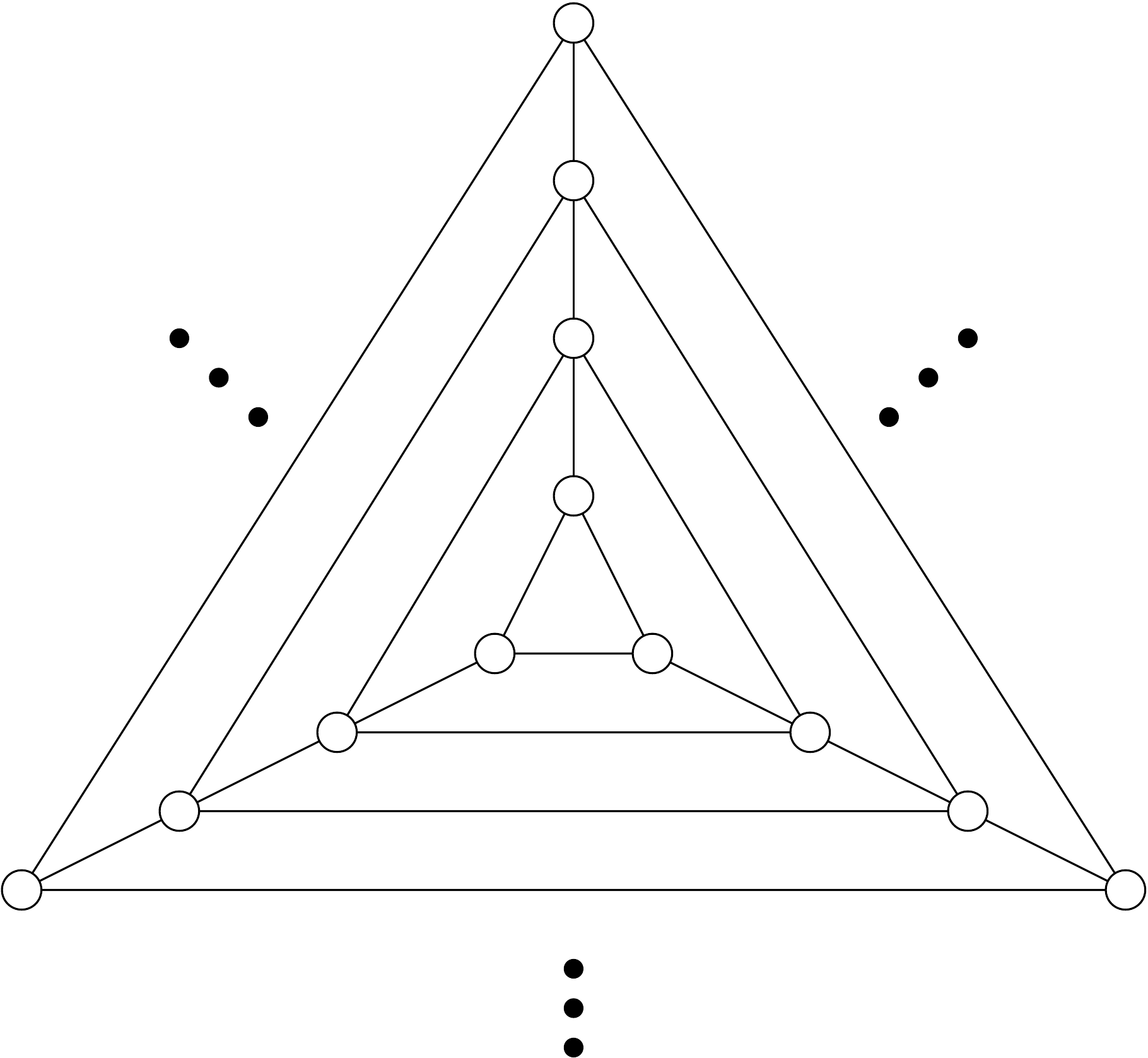}
    \label{fig:no-balanceda}}
    ~~~~~~~~~~~~~
	\subfigure[]{
    \includegraphics[width=0.4\textwidth]{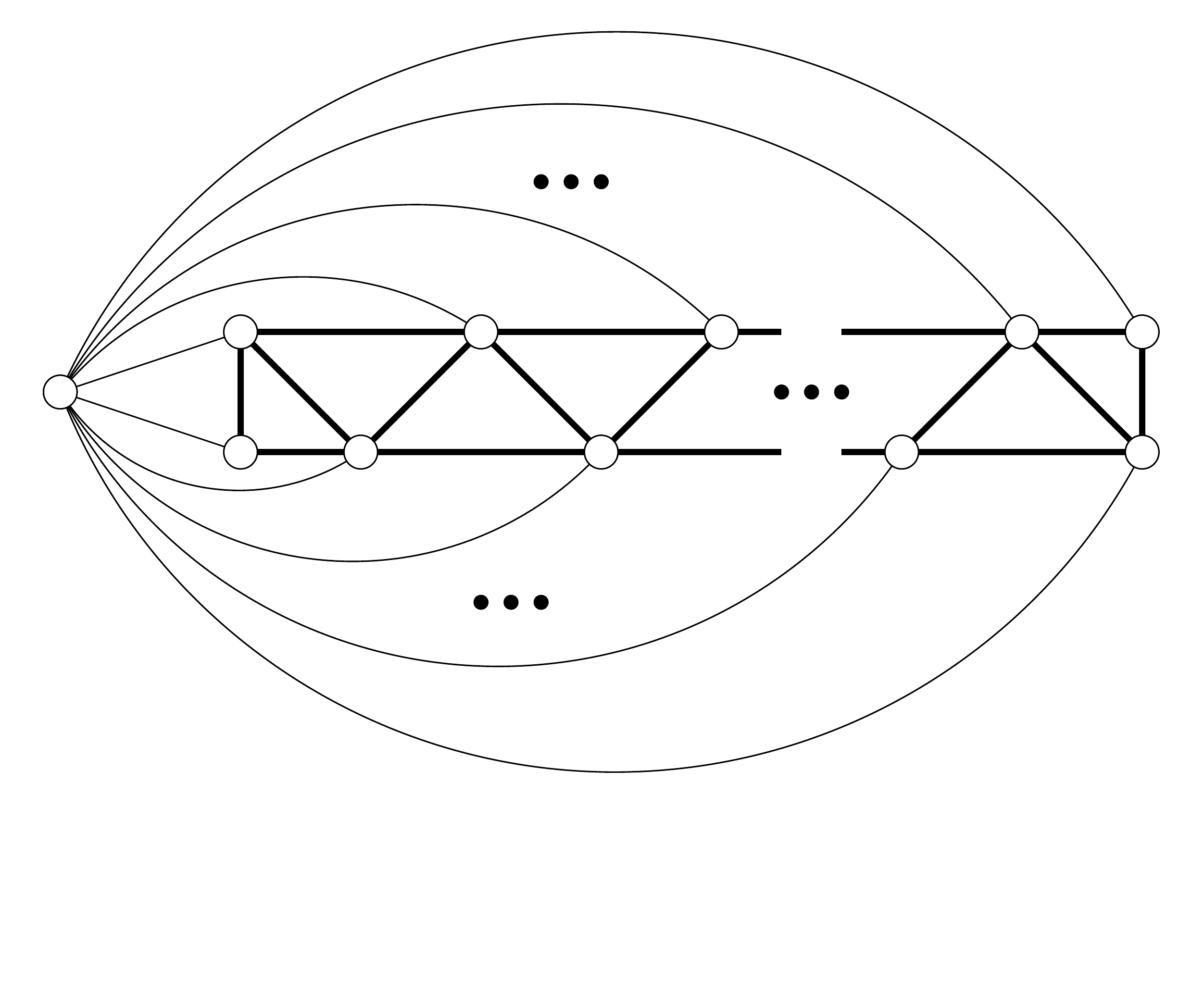}
    \label{fig:no-balancedb}}
\caption{Planar graphs with no balanced
circle-contact representation: (a)~the nested-triangles graph~\cite{DolLeiTri-ACR-84};
 (b)~a 2-outerplanar graph}
\label{fig:no-balanced}
\end{figure}

\ifFull

Recall that \autoref{lem:no-balanced}
states that there is no balanced
circle-contact representation for the graphs in
 \autoref{fig:no-balanced}.
We need the following auxiliary lemma to prove this.

\begin{figure}[htbp]
\centering
\includegraphics[width=0.25\textwidth]{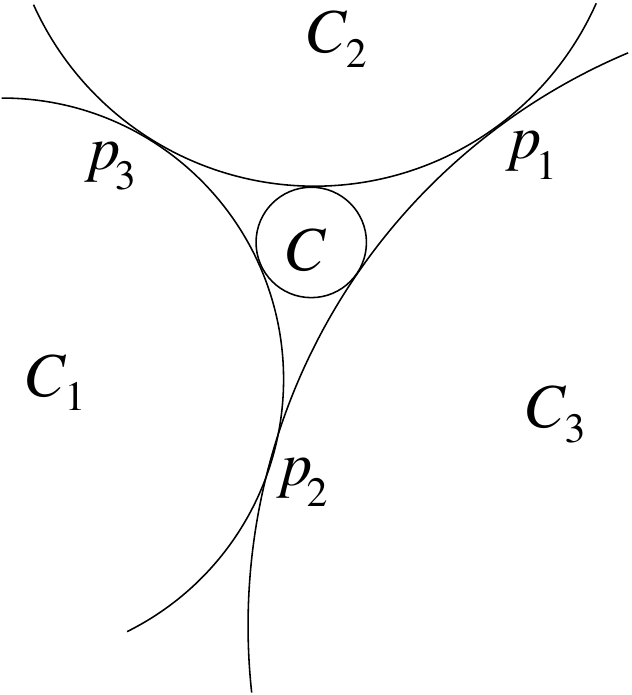}
\caption{Illustration for the proof of \autoref{lem:descartes}}
\label{fig:descartes}
\end{figure}

\begin{lemma}
\label{lem:descartes} Let $C_1$, $C_2$, $C_3$ be three interior-disjoint circles with radii $r_1$, $r_2$, $r_3$,
 respectively that touch each other mutually at the three points $p_1$, $p_2$ and
 $p_3$. Then every circle that lies completely inside the curvilinear triangle $p_1p_2p_3$ bounded by the three circles has radius at most half the median of $r_1$, $r_2$ and $r_3$.
\end{lemma}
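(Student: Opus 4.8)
The plan is to reduce the claim to a single extremal circle and then pin down its radius exactly. Since $C_1,C_2,C_3$ are mutually externally tangent, each of the three arcs bounding the curvilinear triangle $p_1p_2p_3$ bulges \emph{inward}, so the region is concave and the largest circle it contains is the unique one tangent to all three arcs, i.e.\ the inner Soddy circle. I would make this precise with a growing argument: any circle sitting strictly inside the triangle can be expanded (shifting its center as needed) without leaving the region until it becomes tangent to each of $C_1,C_2,C_3$, and expansion only increases the radius. Hence it suffices to bound the radius $r_4$ of this inner Soddy circle.

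To compute $r_4$, I would invoke the Descartes Circle Theorem for the four mutually tangent circles $C_1,C_2,C_3$ and the Soddy circle. Writing $k_i=1/r_i$ for the curvatures --- all four tangencies here are external, so every curvature is taken positive --- the theorem gives
\[
(k_1+k_2+k_3+k_4)^2 = 2\,(k_1^2+k_2^2+k_3^2+k_4^2),
\]
and solving the resulting quadratic for the small inscribed circle selects the larger root,
\[
k_4 = k_1+k_2+k_3+2\sqrt{k_1k_2+k_2k_3+k_3k_1}\,.
\]

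The bound then falls out with essentially no computation. Assume without loss of generality that $r_1\le r_2\le r_3$, so that $r_2$ is the median and $k_1\ge k_2\ge k_3>0$. Dropping the nonnegative square-root term and the positive curvature $k_3$ gives
\[
k_4 \;\ge\; k_1+k_2+k_3 \;>\; k_1+k_2 \;\ge\; 2k_2,
\]
where the final step uses $k_1\ge k_2$. Inverting yields $r_4<\tfrac12 r_2$, so the inner Soddy circle --- and therefore every circle inscribed in the curvilinear triangle --- has radius at most half the median of $r_1,r_2,r_3$.

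I expect the only genuine obstacle to be the first step rather than the algebra: one must argue carefully that the extremal inscribed circle is the one tangent to all three arcs, and that it corresponds to the ``inner'' (plus-sign) root of Descartes' equation with all curvatures positive. Once the concavity of the region and the sign conventions are settled, the curvature inequality $k_4\ge k_1+k_2+k_3$ does all the work, and the appearance of the \emph{median} in the statement is precisely the observation that the two circles of smallest radius contribute total curvature at least $2k_{\mathrm{med}}$.
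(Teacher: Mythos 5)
Your proposal is correct and follows essentially the same route as the paper's proof: reduce to the extremal circle tangent to all three given circles via a growing/perturbation argument, apply Descartes' Circle Theorem selecting the larger (inner Soddy) root, and drop the square-root term and the smallest curvature to get $k_4 \ge k_1 + k_2 \ge 2k_{\mathrm{med}}$, hence $r_4 \le \tfrac12 r_{\mathrm{med}}$. The only cosmetic difference is notation (curvatures $k_i$ versus reciprocals $1/r_i$); the logic is identical.
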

\begin{proof} Assume without loss of generality that $r_1\le r_2\le r_3$, so that the median radius is~$r_2$. Let $C$ be a circle
 lying completely inside $p_1p_2p_3$ with radius $r$; see \autoref{fig:descartes}. The radius
 of $C$ is maximum when $C$ touches all the three circles $C_1$, $C_2$ and $C_3$. Indeed
 if $C$ does not touch at least one of the three circles, say $C_3$, then it can be moved slightly
 towards $C_3$ and then its radius can be increased by a positive amount so that it still lies inside
 the region. Thus, we can assume without loss of generality that $C$ touches all the three circles $C_1$, $C_2$, $C_3$.
 Then by Descartes' Circle Theorem~\cite{Descartes}, we have
 $$(\frac{1}{r_1}+\frac{1}{r_2}+\frac{1}{r_3}+\frac{1}{r})^2
 = 2((\frac{1}{r_1})^2+(\frac{1}{r_2})^2+(\frac{1}{r_3})^2+(\frac{1}{r})^2),$$
 which simplifies to
 $$\frac{1}{r} = \frac{1}{r_1} + \frac{1}{r_2} + \frac{1}{r_3}
 + 2\sqrt{\frac{1}{r_1r_2} + \frac{1}{r_2r_3} + \frac{1}{r_3r_1}}.$$
 (The other solution to this quadratic equation, with a negative sign on the square root term, gives the radius for the larger circle surrounding and tangent to $C_1$, $C_2$, and $C_3$.)
 Removing two positive terms from this equation gives the inequality $\frac{1}{r} \ge \frac{1}{r_1}+\frac{1}{r_2} \ge \frac{2}{r_2}$ or, equivalently, $r\le r_2/2$.
\end{proof}

\begin{proofof}{\autoref{lem:no-balanced}}
We prove the claim only for the 2-outerplanar
 graph in \autoref{fig:no-balancedb}, since the proof for the nested-triangles graph in
 \autoref{fig:no-balanceda} follows a similar argument.

 Because this graph is 3-connected, its planar embedding is unique up to the choice of the outer face. To begin with, assume that the outer face has been chosen as triangle $xa_1b_1$.
 Let $\Gamma$ be a contact
 representation for the graph with circles. We prove that $\Gamma$ is not balanced.
 Let $r$ be the radius of the circle $C$ representing $x$ and let $r_i$ and $r'_i$, $1\le i\le t$
 be the radii of the circles $C_i$ and $C'_i$ representing $a_i$ and $b_i$, respectively.
 Since $\Gamma$ is a contact representation, for any value $i\in\{1,\ldots t-1\}$,
 $C_i$ and $C'_i$ are completely inside the region created by three circles $C$, $C_{i+1}$
 and $C'_{i+1}$. Therefore by \autoref{lem:descartes}, the values of $r_i$ and $r'_i$
 are both less than $\frac{1}{2}\max(r_{i+1}, r'_{i+1})$. Since $t=(n-1)/2$, this implies that
 $\Gamma$ is not balanced.

 In the general case, in which the outer face can be an arbitrary triangle of the graph, the proof is similar. No matter which triangle is chosen as the outer face, the remaining subgraph between that face and one of the two ends of the zigzag chain of triangles in the graph has the same form as the whole graph, with at least half as many vertices, and the result follows in the same way for this subgraph.
\end{proofof}

\autoref{lem:no-balanced}
shows the tightness of the two conditions for balanced
\else
\autoref{lem:no-balanced}, which we prove in the full version of this paper~\cite{AEGKP14},
shows the tightness of the two conditions for balanced
\fi
circle-contact representations in \autoref{th:bounded}. Note that the example of the graph in
\autoref{fig:no-balancedb} can be extended for any specified maximum degree,
by adding a simple path to the high-degree vertex.
Furthermore,
the example is a $2$-outerplanar graph with no balanced
circle-contact representation.

\section{Trees and Outerplanar Graphs}

\ifFull
Here we address balanced circle-contact representation for trees
and outerplanar graphs.
We prove that every tree, every outerpath, and every cactus graph admits a balanced circle-contact
representation.
\fi

\ifFull
\subsection{Trees}
\fi

\begin{theorem}
\label{th:tree}
Every tree has a balanced circle-contact representation. Such a representation
can be found in linear time.
\end{theorem}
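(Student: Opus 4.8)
The plan is to give a direct recursive geometric construction, the key idea being that assigning each circle a radius proportional to the size of its subtree makes the representation balanced essentially for free. First I would root the tree $T$ at an arbitrary vertex and, in a single depth-first traversal, compute for every vertex $v$ the size $s_v=|T_v|$ of the subtree hanging from $v$. I then declare the radius of the circle $C_v$ representing $v$ to be $r_v=s_v$ (any fixed constant of proportionality works). Since the root has $s=n$ and every leaf has $s=1$, the ratio of the largest to the smallest diameter is exactly $n$, which is polynomial; so balance is \emph{automatic}, and the entire difficulty is reduced to showing that these prescribed radii can be realized by an actual interior-disjoint, contact-correct placement.

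For the placement I would use a top-down recursive wedge layout. The invariant is that the whole subtree $T_v$ is drawn inside a wedge (cone) $W_v$ that opens away from the parent of $v$, with $C_v$ near the apex and tangent to the parent's circle. Given $W_v$, I place the children $c_1,\dots,c_d$ of $v$ as circles tangent to $C_v$ at angular positions around the outward-facing half of $C_v$, and I assign to each child a disjoint sub-wedge $W_{c_i}\subseteq W_v$ in which $T_{c_i}$ is drawn recursively. The enabling inequality is that the children's angular footprints at $v$ sum to less than $\pi$: a child of radius $r_{c_i}$ tangent to a parent of radius $r_v$ subtends angle $2\arcsin\bigl(r_{c_i}/(r_v+r_{c_i})\bigr)$, and since $\arcsin$ is convex on $[0,1]$ we have $\arcsin(x)\le\tfrac{\pi}{2}x$, so
\[
\sum_{i=1}^{d} 2\arcsin\Bigl(\frac{r_{c_i}}{r_v+r_{c_i}}\Bigr)\;\le\;\frac{\pi}{r_v}\sum_{i=1}^{d} r_{c_i}\;=\;\frac{\pi}{s_v}\sum_{i=1}^{d} s_{c_i}\;=\;\pi\cdot\frac{s_v-1}{s_v}\;<\;\pi .
\]
Thus all children fit in the outward half of $C_v$ with angular room to spare, which is exactly what leaves enough slack to separate their sub-wedges.

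The step I expect to be the main obstacle is the rigorous verification that this recursion is self-consistent: that each sub-wedge can be made narrow enough to keep the $W_{c_i}$ pairwise disjoint while still being wide enough (and its apex placed far enough out) to contain a circle of the prescribed radius $r_{c_i}=s_{c_i}$ tangent to $C_v$. The tension is that narrow wedges force deep circles to lie far from the apex, whereas tangency to the parent fixes the distance from $v$; the reason it works out is precisely that radii grow toward the root, so a deep vertex is automatically pushed far out by the accumulated radii of its ancestors, compensating for the geometric narrowing of its wedge. Making this quantitative --- choosing the sub-wedge angles (for instance proportional to the $s_{c_i}$, spending the slack from the displayed bound) and proving by induction that $T_{c_i}\subseteq W_{c_i}$ with no two non-adjacent circles touching --- is the technical heart of the proof. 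Once the invariant is established, correctness is immediate, since wedge-disjointness guarantees that the only tangencies are the parent--child ones and hence the contact graph is exactly $T$; and the linear-time claim follows because the sizes $s_v$ and all circle centers and radii are produced by two linear-time traversals of $T$.
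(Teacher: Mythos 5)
Your reduction is sound as far as it goes: prescribing $r_v=s_v$ does make balance automatic, the footprint bound $\sum_i 2\arcsin\bigl(r_{c_i}/(r_v+r_{c_i})\bigr)<\pi$ is correct, and so the whole problem is indeed to realize these radii. But the step you flagged as the ``technical heart'' is not merely technical --- the wedge invariant you propose is unsatisfiable, not just hard to verify. A wedge is scale-invariant, so being far from the apex buys no \emph{angular} room; your intuition that ``accumulated radii of ancestors push deep vertices far out, compensating for the narrowing'' is precisely what a straight-sided infinite wedge cannot express. Concretely: suppose each subtree $T_v$ must lie in a wedge $W_v$ with apex at (or near) the parent's circle, the sub-wedges $W_{c_1},\dots,W_{c_d}$ share the apex region of $C_v$, are pairwise disjoint, and satisfy $W_{c_i}\subseteq W_v$. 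Writing $h(v)$ for the half-angle of $W_v$, containment forces (tilt of $W_{c_i}$'s axis)$\,+\,h(c_i)\le h(v)$, and disjointness forces the angular intervals of the $W_{c_i}$ to be non-overlapping, so $h(v)\ge\sum_i h(c_i)$; moreover $h(c_i)\ge\arcsin\bigl(r_{c_i}/(r_v+r_{c_i})\bigr)$ just so that $C_{c_i}$ itself fits in $W_{c_i}$. Now take a complete binary tree with $r_v=s_v$: every non-root internal vertex $v$ has $r_v/(r_{p(v)}+r_v)\approx 1/3$, so $h(v)\ge\arcsin(1/3)\approx 0.34$ for all of them, and the recursion $h(v)\ge h(c_1)+h(c_2)$ doubles this bound at every level. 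At height $5$ (a tree with $63$ vertices) the root's two children each need half-angle exceeding $\pi$, while the outward half of the root's circle offers only $\pi$ in total. So no assignment of sub-wedge angles --- proportional to the $s_{c_i}$ or otherwise --- can maintain the invariant $T_{c_i}\subseteq W_{c_i}\subseteq W_v$; the angular requirements compound multiplicatively with branching depth instead of adding.

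The paper avoids exactly this by using parallel-sided regions instead of wedges. It also prescribes sizes up front (square side $l(v)+\eps\,(n(v)-1)$, essentially the leaf count of the subtree), but it first builds a \emph{square}-contact representation in which the children's squares sit side by side along the parent's bottom edge --- widths add exactly, with $\eps$ gaps, so nothing is lost per level of the tree --- and then replaces each square by its inscribed circle and slides each circle vertically upward until it touches its parent's circle; disjointness of the vertical strips guarantees that no unwanted tangencies are created. Strips nest with zero angular loss (you can think of them as wedges whose apexes have receded to infinity), which is what makes the induction go through where wedges fail. Note that your radius function would in fact be compatible with this strip-and-slide technique, since the children's diameters sum to $2(s_v-1)<2s_v$; the flaw in your proposal is not the choice of radii but the wedge-based containment mechanism. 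If you replace your wedges by vertical strips and restore tangency by vertical sliding, your argument essentially becomes the paper's proof.
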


\begin{proof}
We first find a contact representation $\Gamma$ of a given tree $T$ with squares such that
 the ratio of the maximum and the minimum sizes for the squares is polynomial in the number
 of vertices $n$ in $T$. To this end, we consider $T$ as a rooted tree with an arbitrary vertex $r$
 as the root. Then we construct a contact representation of $T$ with squares where each
 vertex $v$ of $T$ is represented by a square $R(v)$ such that $R(v)$ touches the square for
 its parent by its top side and it touches all the squares for its children by its bottom side;
 see Figs.~\ref{fig:treea}~and~\ref{fig:treeb}. We choose the size of $R(v)$ as $l(v)+\eps (n(v)-1)$, where
 $\eps>0$ is a small positive constant and $n(v)$ and $l(v)$ denote the number of vertices
 and the number of leaves in the subtree of $T$ rooted at $v$. In particular, the
 size of $R(v)$ is 1 when $v$ is a leaf. If $v$ is not a leaf, then suppose $v_1$, $\ldots$,
 $v_d$ are the children of $v$ in the counterclockwise order around $v$. Then we place the
 squares $R(v_1)$, $\ldots$, $R(v_d)$ from left-to-right touching the bottom side of $R(v)$
 such that for each $i\in\{1,\ldots,d-1\}$, $R(v_{i+1})$ is placed $\eps$ unit to the right
 of $R(v_i)$; see \autoref{fig:treeb}. There is sufficient space to place all these squares
 in the bottom side of $R(v)$, since $n(v)=(\sum_{i=1}^dn(v_i))-1$ and
 $l(v)=\sum_{i=1}^dl(v_i)$. The representation contains no crossings or unwanted contacts
 since for each vertex $v$, the representation of the subtrees rooted at $v$ is bounded in the
 left and right side by the two sides of $R(v)$, and all the subtrees rooted at the children of $v$
 are in disjoint regions $\eps$ unit away from each other. The size of the smallest square
 is 1, while the size of the largest square (for the root) has size $l(T)+\eps (n-1) = \Oh(n)$,
 where $l(T)$ is the number of leaves in $T$.

\begin{figure}[t]
\centering
\subfigure[]
	{\includegraphics[width=0.33\textwidth]{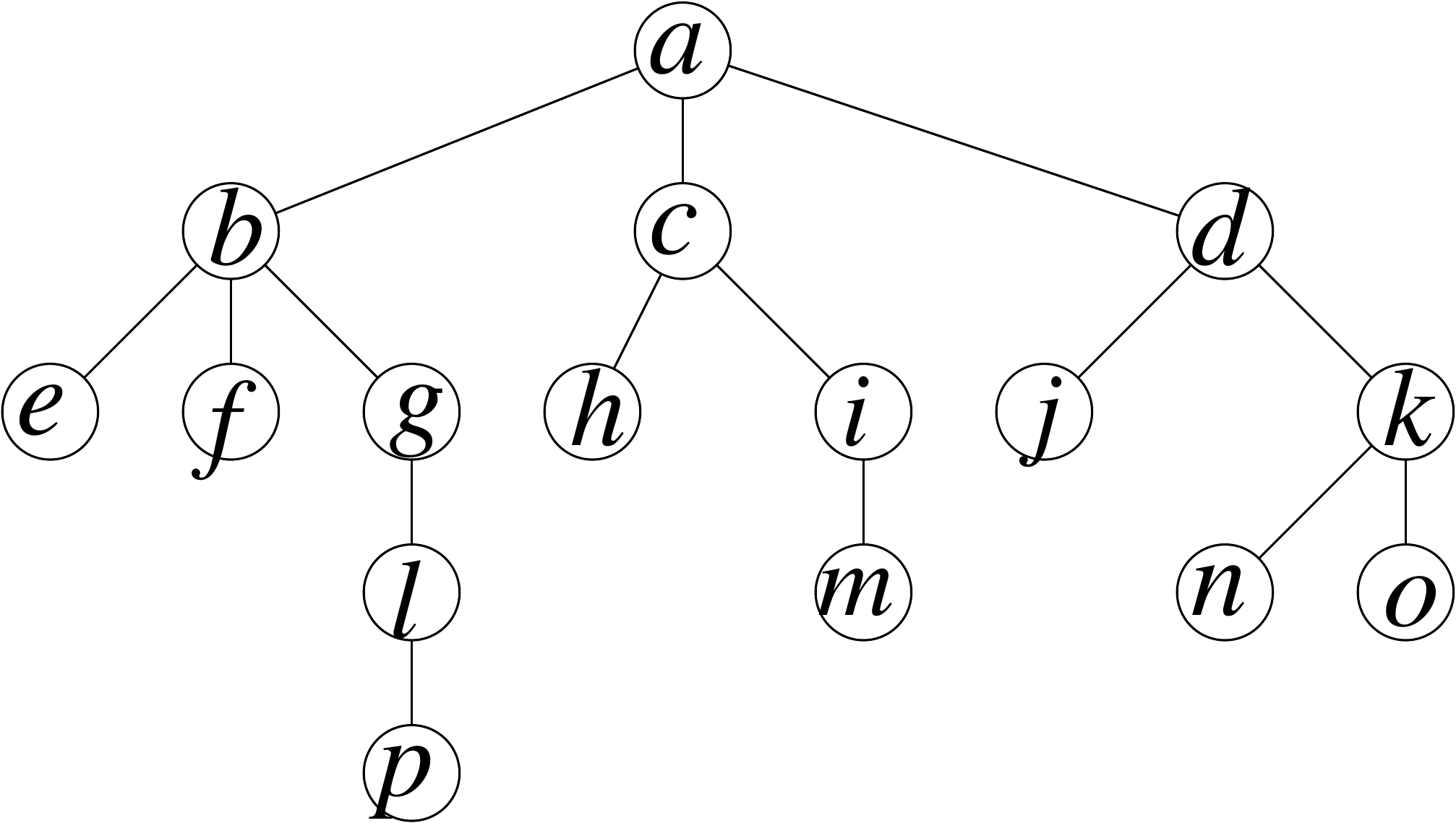}
    \label{fig:treea}}
\hfill
\subfigure[]
	{\includegraphics[width=0.3\textwidth]{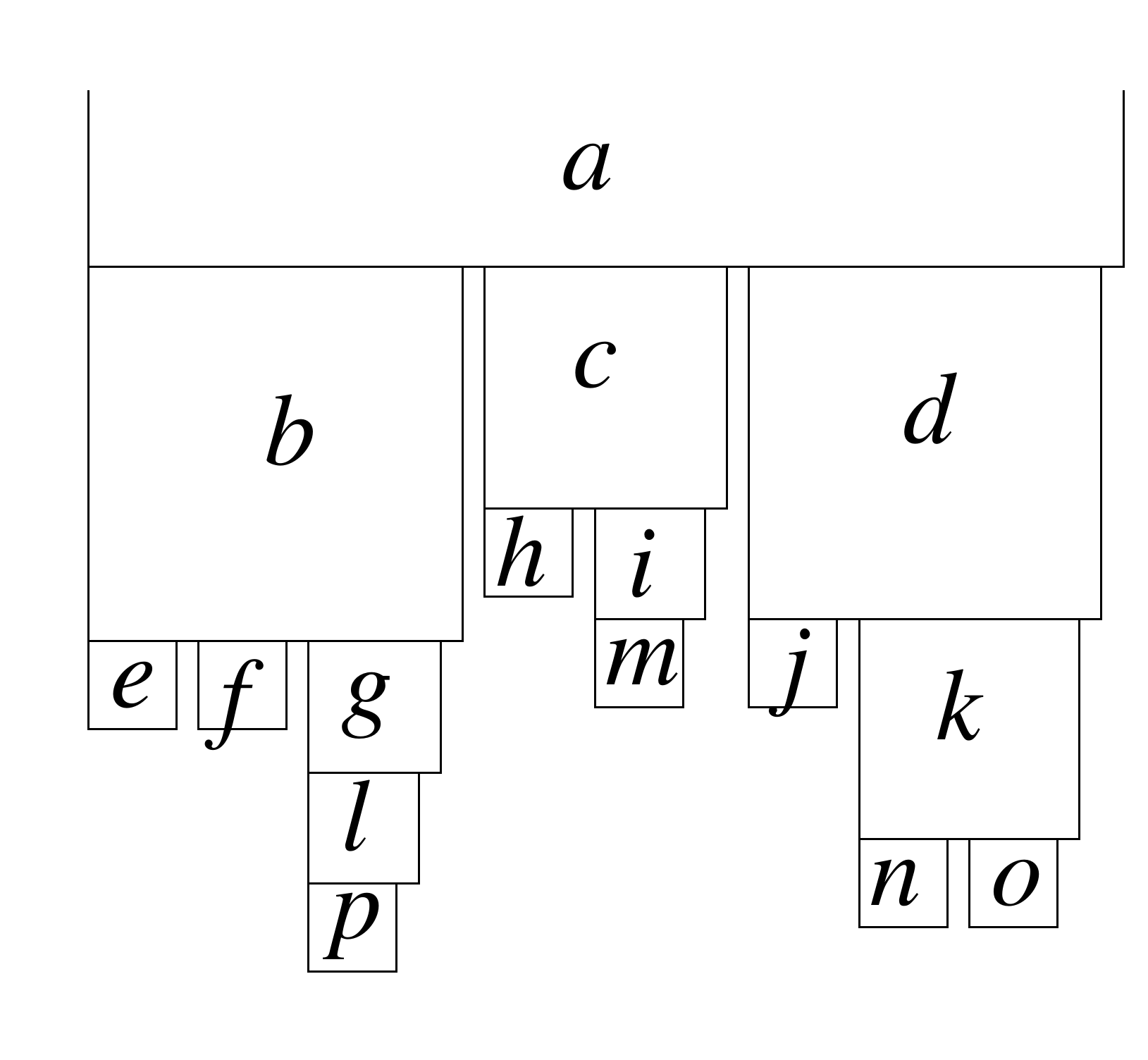}
    \label{fig:treeb}}
\hfill
\subfigure[]
	{\includegraphics[width=0.3\textwidth]{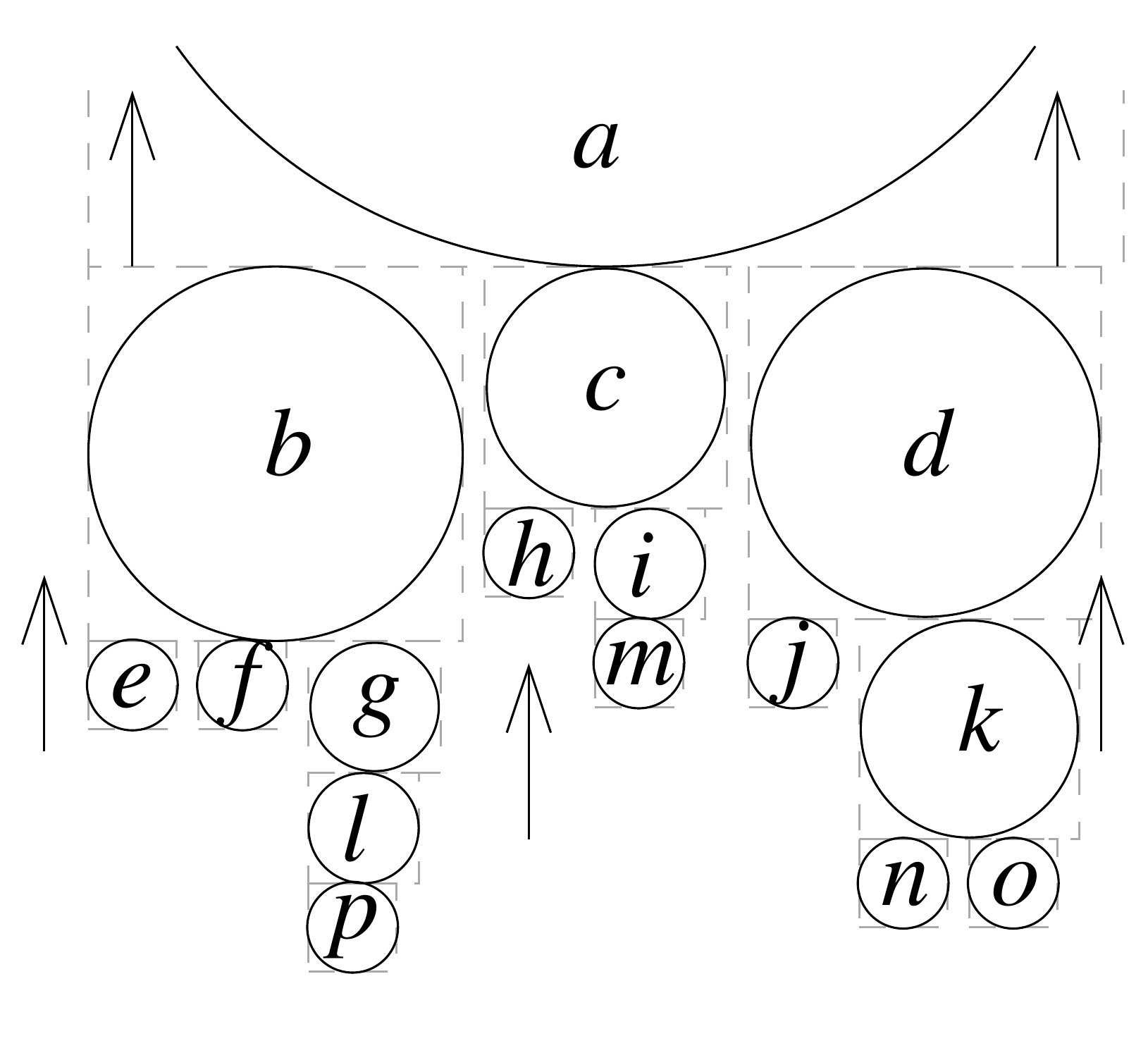}
    \label{fig:treec}}
\caption{Construction of a balanced circle-contact representation}
\label{fig:tree}
\end{figure}

Using $\Gamma$, we find a balanced circle-contact
 representation of $T$ as follows. We replace each square $R(v)$, representing vertex $v$,
 by an inscribed circle of $R(v)$; see \autoref{fig:treec}. The operation removes some contacts
 from the representation. We re-create these contacts by a top-down traversal
 of $T$ and moving each circle upward until it touches its parent. Note that a given circle
 will not touch or intersect any circle other than the circles for its parent and its children,
 as for every vertex in the infinite strip between its leftmost and rightmost point for
 its circle, the closest
 circle in the upward direction is its parent's one. Thus, we obtain a contact representation of
 $T$ with circles. The representation is balanced since the diameter for every circle is equal to
 the side-length for its square and we started with a balanced representation $\Gamma$.

The linear running time can be achieved by a linear-time traversal of $T$.
 First, by a bottom-up traversal of $T$, we compute the values $n(v)$ and $l(v)$ for each
 vertex $v$ of $T$. Using the values for each vertex, we compute the square-contact
 representation for $T$ by a linear-time top-down traversal of $T$. Finally, in another top-down
 traversal of $T$, for each vertex $v$ of $T$, we can compute the exact translation required
 for the inscribed circles of $R(v)$ to touch the parent circle.
\end{proof}

\ifFull
\subsection{Cactus Graphs}
\fi

Let us now describe how to compute a balanced circle-contact representation for a
\textit{cactus} graph, which is a connected graph in which every biconnected component
is either an edge or a cycle. We use the algorithm described in the proof of \autoref{th:tree},
and we call it \textbf{Draw\_Tree}.

Let $T$ be a rooted tree with a plane embedding. For each vertex $v$ of  $T$, add an edge
 between every pair of the children of $v$ that are consecutive in the clockwise order around
 $v$. Call the resulting graph an \textit{augmented fan-tree} for $T$. Clearly for any rooted
 tree $T$, the augmented fan-tree is outerplanar. We call an outerplanar graph a
 \textit{fan-tree} graph if it is an augmented fan-tree for some rooted tree. A \textit{star}
 is the complete bipartite graph $K_{1,n-1}$. The center of a star is the vertex that is adjacent
 to every other vertex. An augmented fan-tree for a star is obtained by taking the center
 as the root. Thus, an augmented fan-tree for a star is a \textit{fan}. The \textit{center} of
 a fan is again the vertex adjacent to all the other vertices.

\begin{lemma}
\label{lem:fan}
Every subgraph of a fan admits a contact representation with circles in which,
for each circle $c(v)$ representing a vertex $v$ other than the center, the vertical strip
containing $c(v)$ is empty above $c(v)$.
\end{lemma}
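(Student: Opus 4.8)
A fan is the augmented fan-tree of a star: it has a center $c$ adjacent to all other vertices $v_1, \ldots, v_{n-1}$, plus edges $v_iv_{i+1}$ between consecutive non-center vertices, so the non-center vertices form a path and together with the center form a sequence of triangles sharing the apex $c$. A subgraph of a fan drops some edges, so the non-center vertices split into several maximal paths (sub-fans), each still attached to the center. I want a circle-contact representation where, for every non-center circle $c(v)$, the vertical strip of width $\operatorname{diam}(c(v))$ sitting above $c(v)$ contains no other circle — an emptiness condition that will let me stack sub-fans later without collisions.

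**The plan.** The natural picture is to place the center circle as one large circle and hang the path of non-center circles along its lower boundary, like a string of beads resting under an arch. First I would handle a full fan (the path $v_1 \cdots v_{n-1}$ is connected): place the center circle $c(c)$ with its lowest point high up, and place the $v_i$ circles left to right so that each $c(v_i)$ is tangent to $c(c)$ from below and each consecutive pair $c(v_i), c(v_{i+1})$ is tangent to each other. Since the non-center circles lie strictly below the center and are arranged monotonically along $c(c)$'s lower arc, nothing sits directly above any $c(v_i)$ except possibly the center itself — and I can arrange the center's own position/size so that the strip above each $c(v_i)$ clears $c(c)$ as well, or at worst only touches it at the tangency. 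The emptiness condition is about the strip being empty, so I must verify that the only circle a non-center circle's upward strip can meet is excluded; choosing all non-center circles of comparable (indeed equal or near-equal) radius and the center large keeps the arc gently curving so the strips point up into free space.

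**Handling a subgraph and recursion.** For a genuine subgraph of a fan, the non-center vertices break into disjoint paths $P_1, \ldots, P_m$. I would draw each $P_j$ as its own little chain of mutually tangent circles hanging under the center, and place these chains side by side along the lower boundary of $c(c)$ with horizontal gaps between consecutive chains (exactly as the tree construction in \autoref{th:tree} leaves $\eps$ gaps). Because every non-center circle sits below the center and no circle is placed above another non-center circle, the vertical-strip-empty property is inherited chain by chain. Isolated non-center vertices (degree-one, adjacent only to $c$) are just single circles tangent to the center. Uniform radii for all non-center circles make the emptiness check uniform and avoid having to track many cases.

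**The main obstacle.** The delicate part is simultaneously achieving (i) each non-center circle tangent to the large center circle from below, (ii) consecutive non-center circles mutually tangent, and (iii) the vertical strip above each non-center circle empty — including empty of the center circle, not merely of its siblings. If I force equal non-center radii and a single large center, the tangency points march along the center's lower arc and the chain of small circles curves upward at the ends; I must confirm that this upward curvature does not push a later circle into the strip above an earlier one, and that the center circle's boundary stays clear of those strips. This amounts to a monotonicity argument on the tangency points' horizontal coordinates together with a bound showing the small circles never rise to the center's level inside any strip. I expect this geometric monotonicity-plus-clearance verification — rather than the combinatorial decomposition into sub-fans — to be where the real work lies, and I would discharge it by making the center's radius large relative to the number of non-center circles so the lower arc is nearly flat over the region where the chain sits.
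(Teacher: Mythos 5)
Your construction has two genuine gaps, one combinatorial and one geometric.

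First, the combinatorial one: a subgraph of a fan may omit \emph{spoke} edges, so a non-center vertex can be adjacent to its path neighbors while \emph{not} being adjacent to the center. Your decomposition claim (``the non-center vertices split into several maximal paths, each still attached to the center'') is therefore false, and your construction, which makes every non-center circle tangent to the center circle, produces tangencies that do not correspond to edges --- so it is not a contact representation of the subgraph at all. Handling exactly these missing spokes is the main content of the lemma; the paper does it by keeping each such circle at a fixed distance $\eps$ from the center circle (its property (P3)) and then recovering the required path tangencies by a maximality argument: circles are grown to maximal size subject to constraints (P1)--(P4), and an exchange step shows that if two circles that should touch do not, one of them could still be enlarged, a contradiction.

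Second, the geometric verification you defer --- that the upward curvature of the chain never pushes a later circle into the strip above an earlier one --- in fact fails, and making the center large does not repair it. Take unit circles whose centers lie at distance $10$ from the center circle's center: if $c(v_i)$ is centered at $(0,-10)$ and $c(v_{i+1})$ at $(1.99,-9.80)$, these two circles are tangent, yet the leftmost point $(0.99,-9.80)$ of $c(v_{i+1})$ lies inside the vertical strip $[-1,1]$ of $c(v_i)$ and strictly above the boundary of $c(v_i)$, whose height at $x=0.99$ is about $-9.86$. This is not an artifact of the numbers: for consecutive tangent equal circles whose tangency point sits at angle $\psi>0$ from the bottom of the arc, the horizontal offset of their centers is $2r\cos\psi<2r$ while their vertical offset is $2r\sin\psi$, which always exceeds $2r\sqrt{(1-\cos\psi)\cos\psi}$, the height of the lower circle's boundary at the overlapping abscissa. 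So equal radii on a circular arc can never satisfy the emptiness condition in the sense the paper needs it (nothing in the strip above $c(v)$ other than the center's circle, which is how the representation is reused in the recursion of \autoref{th:fan-tree}); flattening the arc only buys you the weaker statement that nothing lies above the \emph{topmost point} of $c(v)$. The paper sidesteps this entirely: it inherits pairwise disjoint vertical strips from the square representation produced by \textbf{Draw\_Tree}, keeps every non-center circle inscribed in its own strip (so no circle can ever enter another's strip), and lets the maximality argument do the rest. Any explicit construction along your lines would have to use non-uniform radii shrinking away from the bottom so that each circle stays inside a fixed strip --- which is essentially what the paper's strip-based argument accomplishes.
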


\begin{proof} Let $G$ be a subgraph of a fan and let $T$ be the star contained in the fan. We
 now use the contact representation $\Gamma$ of $T$ obtained by \textbf{Draw\_Tree}
 to compute a representation for $G$. Consider the square-contact representation computed
 for $T$ in the algorithm. This defines a vertical strip for each circle $c(v)$ in $\Gamma$
 representing a vertex $v$, and for all the vertices other than the center, these strips are
 disjoint; see \autoref{fig:subfana}. Call the left and right boundary of this strip the
 \textit{left-} and \textit{right-line} for $c$, respectively.

\begin{figure}[tb]
\centering
\subfigure[]{
	\includegraphics[width=0.45\textwidth]{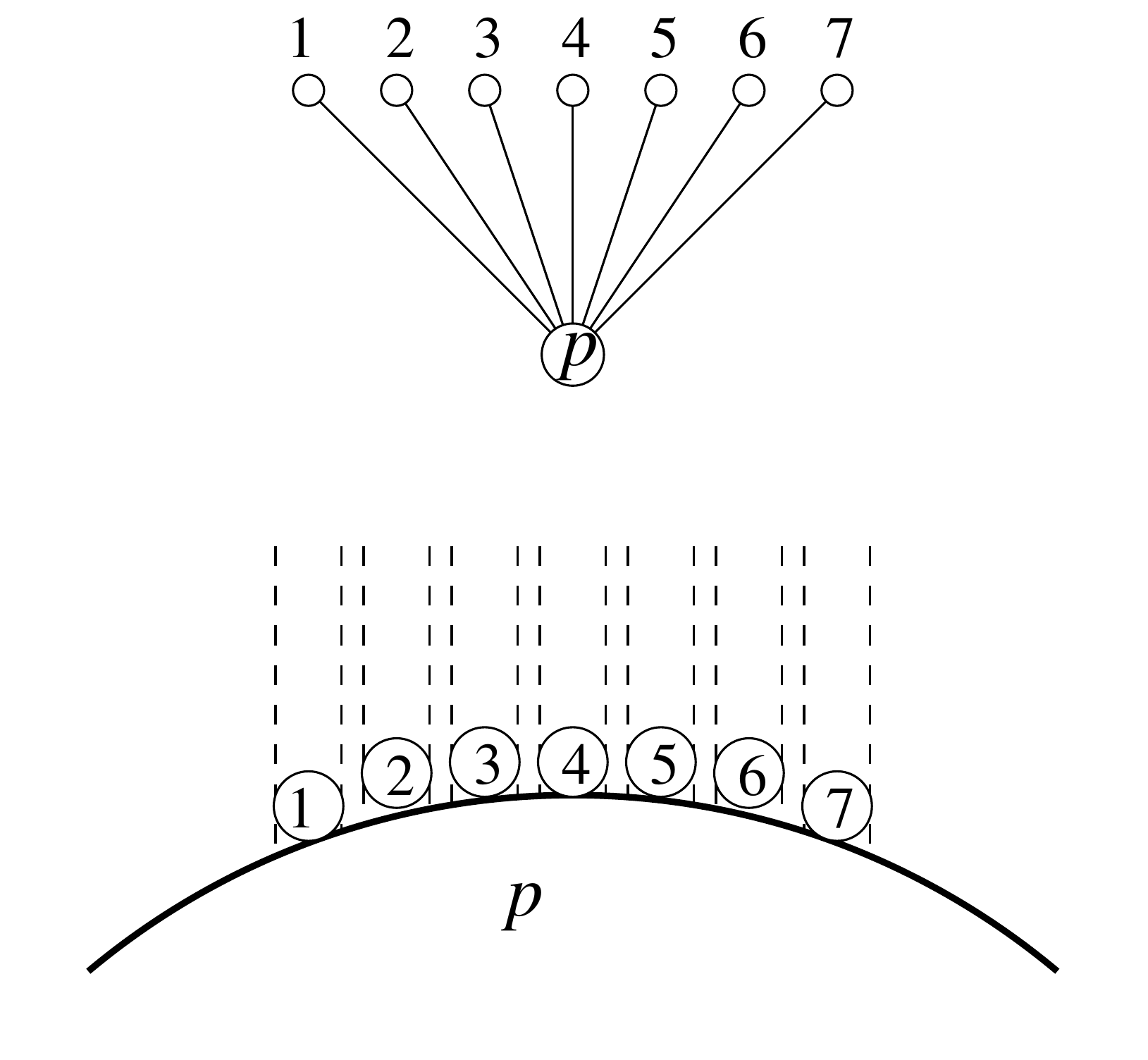}
    \label{fig:subfana}}
~~~~~~~~
\subfigure[]{
	\includegraphics[width=0.45\textwidth]{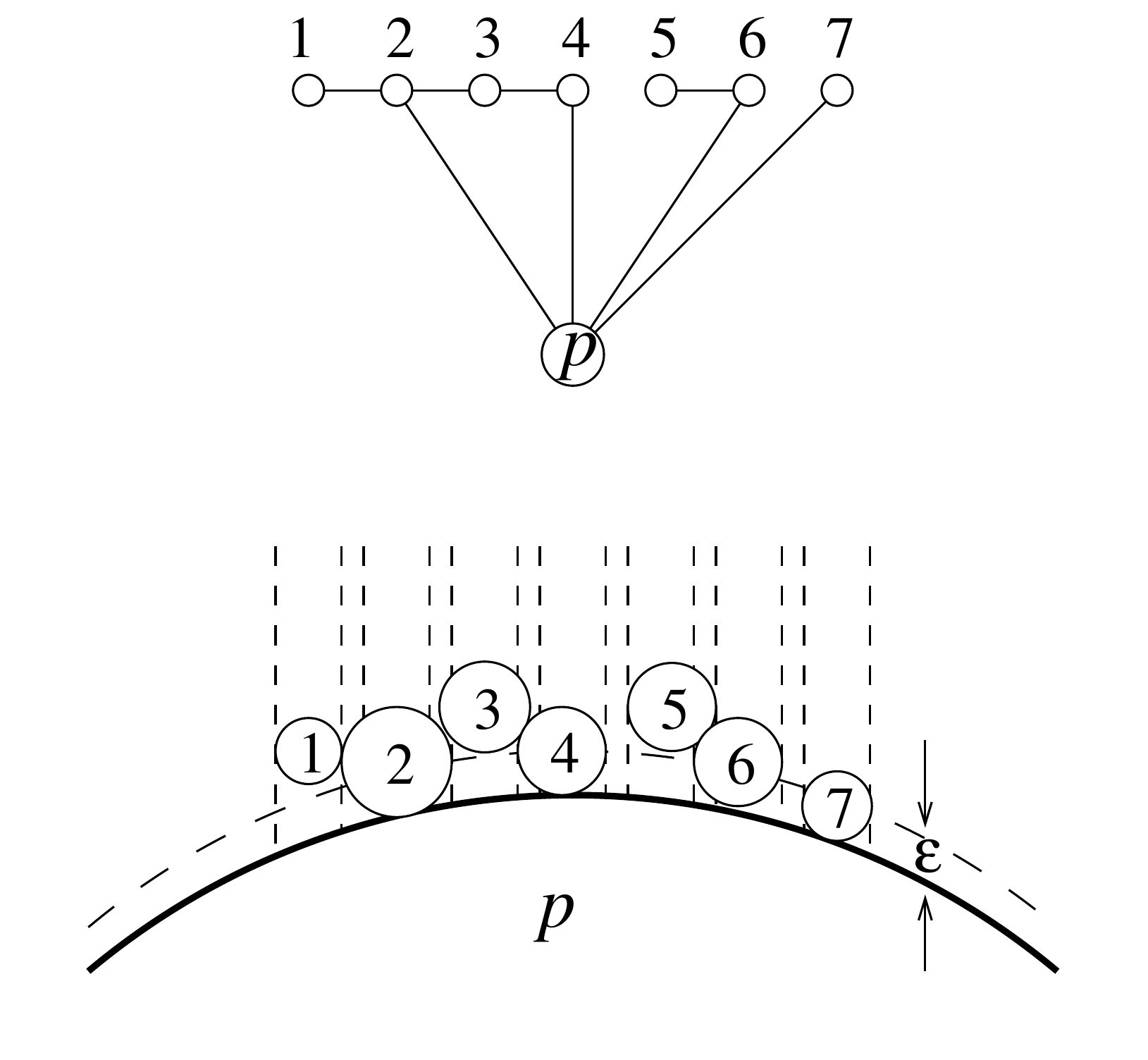}
    \label{fig:subfanb}}
\caption{(a)~A star $T$ and a contact representation of $T$ with circles;
	(b)~a subgraph of the fan for $T$ and its contact representation with circles}
\end{figure}

We now consider a set $S$ of circles, one for each vertex of $G$ other than the center,
with the following properties:

\begin{enumerate}[(P1)]
	\item The circles are interior-disjoint.

	\item Each circle $c'(v)$ representing a vertex $v$ spans the entire width of the vertical
		strip for $v$, and the vertical strip above $c'(v)$ is empty.

	\item For each vertex $v$, the circle $c'(v)$ touches the circle $c_0$ representing the center in
		$\Gamma$ if $v$ is adjacent to the center; otherwise, $c'(v)$ is exactly $\eps$
		distance away from $c_0$, for some fixed constant $\eps > 0$.

	\item If a vertex $v$ is not adjacent to the vertex on its left (or if $v$ is the leftmost vertex),
		then the leftmost point of $c'(v)$ is on the left-line of $v$;
              similarly, if $v$ is not
		adjacent to the vertex on its right (or if $v$ is the rightmost vertex), then the rightmost
		point of $c'(v)$ is on the right-line of $v$.

	\item The sizes for the circles are maximal with respect to the above properties.
\end{enumerate}

Note that there exists a set of circles with the properties (P1)--(P4); in particular, the set of
 circles in $\Gamma$ representing the vertices of $T$ other than the center is such a set.
We now claim that the set $S$ of circles with properties (P1)--(P5) together with the circle
 $c_0$ gives a contact representation for $G$; see \autoref{fig:subfanb}. First note that
 a circle $c'(v)$ cannot touch any circle other than $c_0$ and the two circles $c(v_l)$ and
 $c(v_r)$ representing the vertices $v_l$ and $v_r$ on its left and right, respectively. Indeed,
 it cannot pass the vertical strip for $v_l$ and $v_r$ above them due to (P2) and behind them
 due to (P3). Furthermore, the $\eps$ distance between $c_0$ and the circles for vertices
 non-adjacent to the center and the restriction on the left and right side in (P4) ensures that
 there is no extra contact. Hence, it is sufficient to show that for each edge in $G$, we have the
 contact between the corresponding circles.

Since each circle $c'(v)$ is maximal in size, it must touch at least three objects. One of them is
 either the circle $c_0$ or the $\eps$ offset line for $c_0$. Thus, if $v_l$ and $v_r$ are the
 left and right neighbors of $v$ (if any), then $c'(v)$ must touch two of the followings:
 (i)~$c'(v_l)$ (or the left line of $v$ if $v_l$ does not exists), (ii)~the right line for $v_l$,
 (iii)~$c'(v_r)$ (or the right line of $v$ if $v_r$ does not exists), and (iv) the left line for $v_r$.
 Assume without loss of generality that both $v_l$ and $v_r$ exist for $v$. Then if $c'(v)$
 touches both $c'(v_l)$ and $c'(v_r)$, we have the desired contacts for $v$.
Therefore,
 for a desired contact of $c'(v)$ to be absent,
 either $c'(v)$ touches both $c'(v_l)$ and the
 right-line of $v_l$ (and misses the contact with $c'(v_r)$), or it touches both $c'(v_r)$ and
 the left-line of $v_r$ (and misses the contact with $c'(v_r)$).

Assume, for the sake of a contradiction,
 that there are two consecutive vertices $x$ and $y$ that are
 adjacent in $G$ but $c'(x)$ and $c'(y)$ do not touch each other. Let $l$ and $r$ be the
 vertices to the left of $x$ and to the right of $y$, respectively. Then it must be the case that $x$ touches
 both $c'(l)$ and the right line for $l$ and $y$ touches both $c'(r)$ and the left line of $r$;
 see \autoref{fig:maximala}. One can then increase the size of either $c'(x)$ or $c'(y)$
 (say $c'(y)$) such that it now touches $c'(x)$ and the left-line for $r$ (but not $c'(r)$),
 a contradiction to the maximality for the circles; see \autoref{fig:maximalb}.
\end{proof}

\begin{figure}[t!]
\centering
\subfigure[]{
	\includegraphics[width=0.47\textwidth]{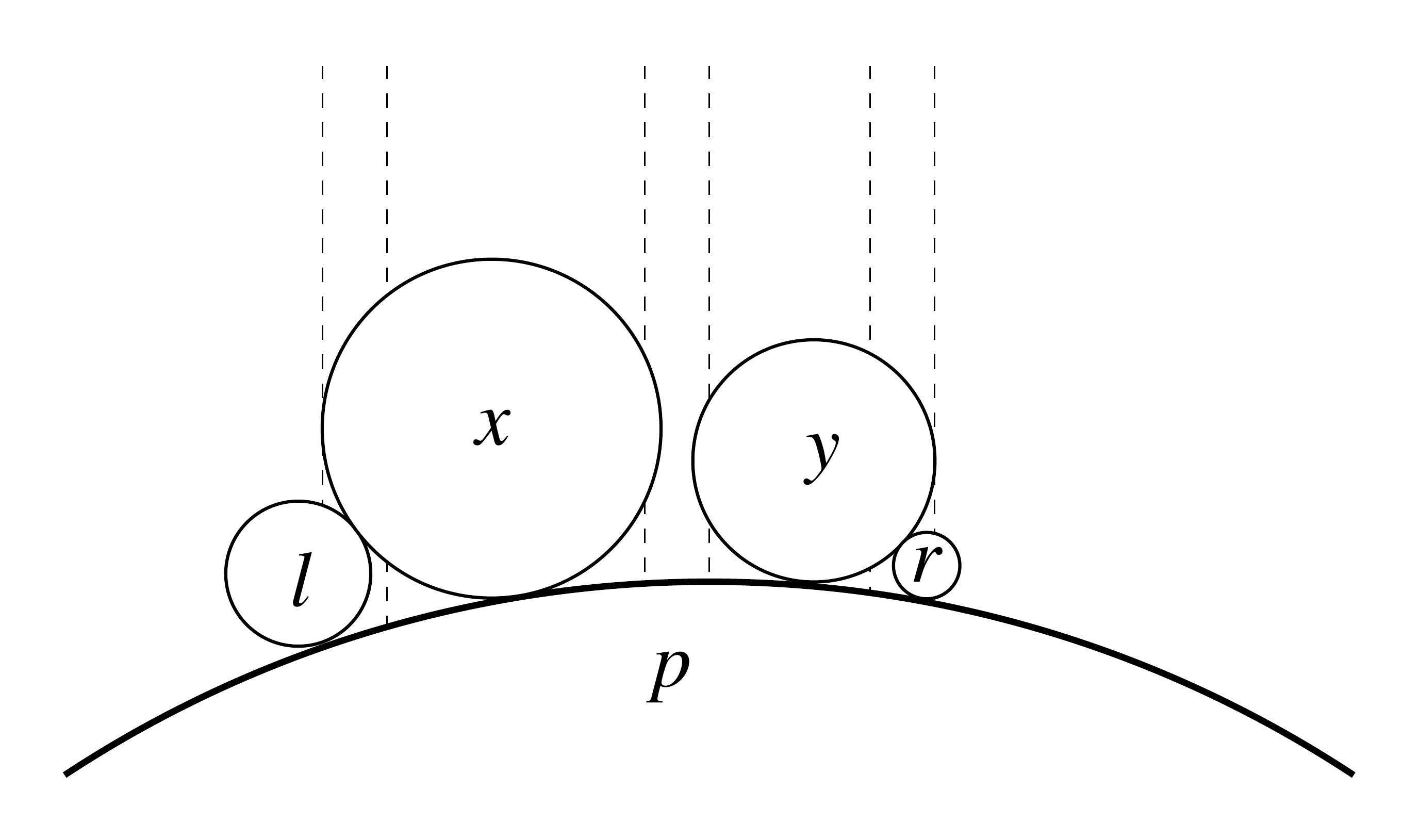}
    \label{fig:maximala}}
\hfill
\subfigure[]{
	\includegraphics[width=0.47\textwidth]{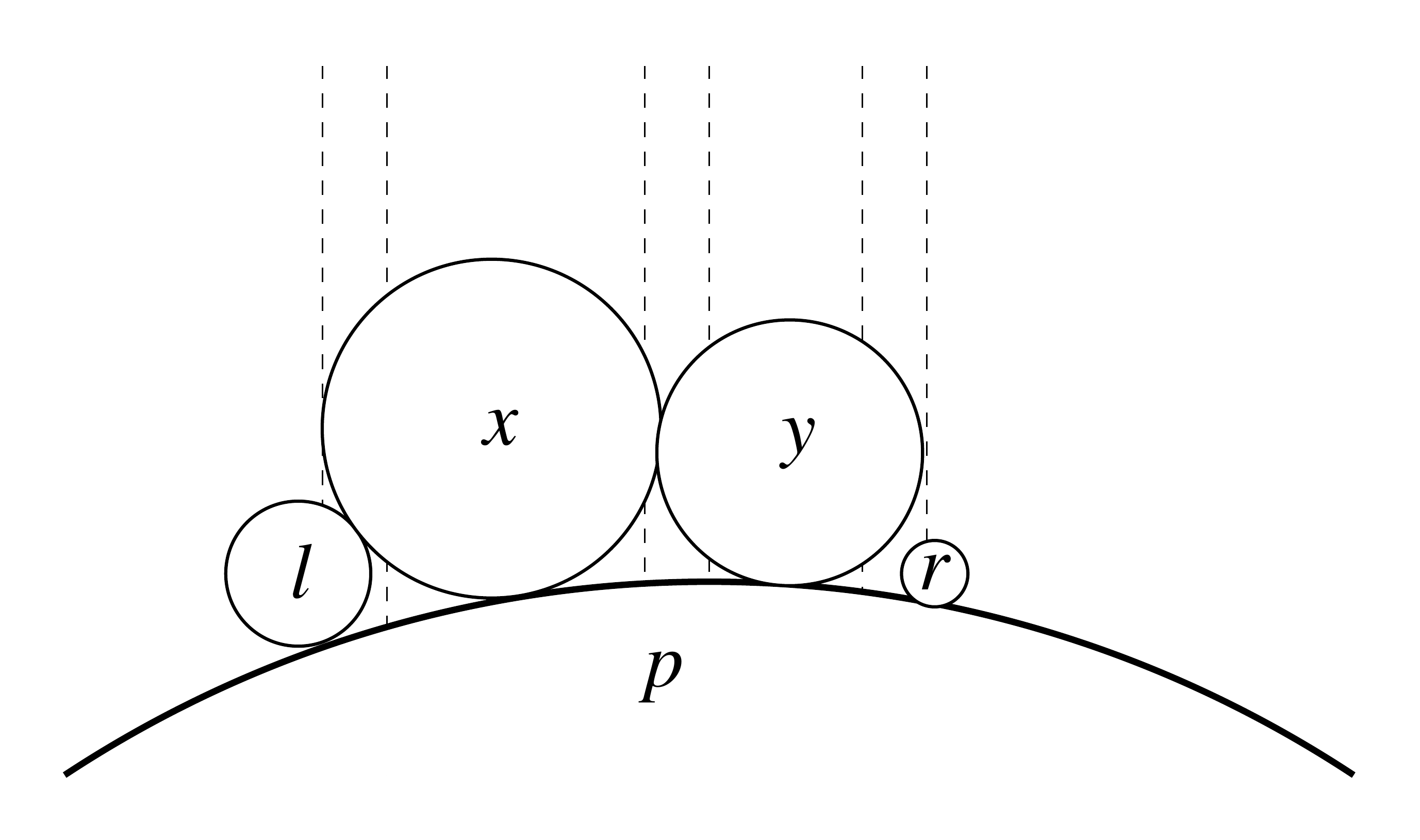}
    \label{fig:maximalb}}
\caption{Illustration for the proof of \autoref{lem:fan}: if the circles for $x$ and $y$ do not touch each other,
 at least one can be increased in size}
\end{figure}

Using the lemma, we can obtain a quadratic-time algorithm as follows. Given a subgraph $G$
of a fan, compute the balanced circle-contact representation $\Gamma$ for the corresponding star $T$ using
 \textbf{Draw\_Tree}. Then pick the vertices of $T$ other than the center in an arbitrary order and
 for each vertex $v$, replace the circle $c(v)$ in $\Gamma$ by a circle of maximum size
 that does not violate any of the properties (P1)--(P4) in the proof of \autoref{lem:fan}.
 This takes a linear time. Now for every edge $(x,y)$ for which $c(x)$ and $c(y)$ do not touch,
 replace one of the two circles (say, $c(y)$) with a circle that touches $c(x)$ as in
 \autoref{fig:maximalb}. Note that this may result in a loss of a contact between $c(y)$ and the
 circle to its right. We perform a similar operation for the circle to the right of $c(y)$, then
 possibly for the circle on its right and so on, until all missing contact are repaired. This
 process requires linear time per edge; hence, the total running time to compute the desired
 contact representation is quadratic. The contact representation is
 balanced since the representation obtained by \textbf{Draw\_Tree} is balanced
 and afterwards we only increase the size of circles that are not of the largest size.

\begin{theorem}
\label{th:fan-tree}
Every $n$-vertex fan-tree graph has a balanced circle-contact representation. Such a representation
can be found in $\Oh(n^2)$ time.
\end{theorem}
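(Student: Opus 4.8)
The plan is to exhibit the fan-tree as a recursive assembly of single fans and to invoke \autoref{lem:fan} at each vertex. Recall that an $n$-vertex fan-tree graph is the augmented fan-tree of some rooted plane tree $T$, and that its edge set splits cleanly into the \emph{tree edges} of $T$ and, at each vertex $v$, the \emph{fan edges} joining the children of $v$ that are consecutive in the rotation around $v$. The subgraph induced by $v$ together with its children is exactly a fan centered at $v$, so \autoref{lem:fan} governs the geometry at each vertex, and the whole drawing will be obtained by stitching these local fans together along a common coordinate frame.

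First I would run \textbf{Draw\_Tree} on $T$ to obtain the balanced square-contact layout from the proof of \autoref{th:tree}. This assigns to every vertex $v$ a vertical strip of width $l(v)+\eps(n(v)-1)$, with the strips of the children of $v$ nested inside the strip of $v$ and separated by gaps of width $\eps$; replacing each square by its inscribed circle and sliding each circle up to its parent realizes every tree edge while leaving the strip above each non-root circle empty. These nested strips are the frame in which all local fans will live, and they already encode a polynomial max-to-min diameter ratio (namely $\Oh(n)$, from the root of size $\Oh(n)$ down to the leaves of size $1$).

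Next I would realize the fan edges top-down, processing the vertices of $T$ in order of increasing depth. At each vertex $v$ I treat $v$ as the center of the fan formed by $v$ and its children and apply the maximal-circle procedure used to prove \autoref{lem:fan}: starting from the inscribed circles sitting in the children's strips, I enlarge each child circle to be maximal subject to properties (P1)--(P5), growing them horizontally into the $\eps$ gaps until every pair of consecutive children touches. Because in a full fan every consecutive pair of children is adjacent, the ``not adjacent'' hypothesis of (P4) never triggers and no circle is forced to stop at a strip boundary, so all fan edges at $v$ are obtained; likewise every child is adjacent to the center, so (P3) makes each child circle touch $c(v)$ rather than stand off by $\eps$. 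Since \autoref{lem:fan} keeps an empty strip above each child circle $c(v_i)$, that circle continues to touch $c(v)$ from above while simultaneously serving as the clear apex of the fan of its own children; once the geometry of $c(v_i)$ is fixed, I recurse into $v_i$ and lay out its subtree in the strip beneath $c(v_i)$, where the bounded (sub-$\eps$) horizontal growth still leaves room to attach the grandchildren.

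The main obstacle is precisely this double role: the rim vertices of the fan at $v$ are not leaves, as they are in \autoref{lem:fan}, but are themselves centers of fans, and enlarging $c(v_i)$ to touch its siblings perturbs the circle on which the recursion for $v_i$ depends. Processing in top-down order resolves the conflict, since the apex circle is finalized before its subtree is drawn and the empty-above invariant guarantees a free region below it; what remains to verify is that the enlargement step only ever \emph{increases} a radius and never touches the unique largest circle, the root, which is never a rim vertex. This is exactly the invariant that keeps the construction balanced: starting from the polynomial ratio of the \textbf{Draw\_Tree} layout, every later modification enlarges only non-maximal circles, so the maximum diameter stays $\Oh(n)$, the minimum stays at least $1$, and the ratio remains polynomial in $n$ regardless of the depth of $T$. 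For the running time, I would charge each edge-repair to a single fan edge exactly as in the single-fan algorithm following \autoref{lem:fan}: each repair costs time linear in the size of the fan being processed and is performed once per fan edge, and since the children counts $d_v$ satisfy $\sum_v d_v = n-1$ and hence $\sum_v d_v^2 \le (\max_v d_v)\sum_v d_v = \Oh(n^2)$, the total is $\Oh(n^2)$.
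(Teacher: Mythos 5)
Your proposal is correct and takes essentially the same route as the paper: run \textbf{Draw\_Tree} on the underlying tree $T$, then in a top-down traversal apply the maximal-circle procedure and repair algorithm of \autoref{lem:fan} to the fan at each vertex within the inherited vertical strips, with the running time bounded by $\sum_{v\in V(T)}\deg_T^2(v) = \Oh(n^2)$. Your discussion of the rim/center double role and the enlargement-only balance invariant simply makes explicit what the paper's terser proof leaves implicit.
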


\begin{proof} Let $G$ be a fan-tree graph and let $T$ be the corresponding tree for which
 $G$ is the augmented fan-tree. Using \textbf{Draw\_Tree}, we first obtain a
 balanced circle-contact representation of $T$. As in the proof of \autoref{lem:fan},
 this defines a vertical strip for each vertex in $T$. In a top-down traversal of $T$, we
 can find a contact representation of $G$ with circles by repeating the
 quadratic-time algorithm for the subgraphs of fans. Hence, the total complexity is
 $\displaystyle\sum_{v\in V(T)}\deg_T^2(v) = \Oh(n^2)$.
\end{proof}

As a corollary of \autoref{th:fan-tree}, we obtain an algorithm
for creating balanced circle-contact representation of a cactus graph.

\begin{corollary}
Every $n$-vertex cactus graph has a balanced circle-contact representation. Such a representation
can be found in $\Oh(n^2)$ time.
\end{corollary}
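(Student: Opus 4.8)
The plan is to reduce the cactus case to the fan-tree machinery of \autoref{th:fan-tree} and \autoref{lem:fan}, by exhibiting every cactus as a subgraph of a fan-tree graph and checking that the drawing algorithm behind \autoref{th:fan-tree} extends to such subgraphs. First I would root the cactus $G$ at an arbitrary vertex $r$; this orients its block structure so that each biconnected component (an edge or a cycle) acquires a unique \emph{top} vertex, namely the one closest to $r$. From $G$ I build an auxiliary rooted tree $T$ as follows: every bridge is kept as a tree edge, and every cycle with top vertex $v$ and remaining vertices $u_1,\dots,u_m$ (in the cyclic order of the cycle) is replaced by a star with center $v$ and leaves $u_1,\dots,u_m$, ordered $u_1,\dots,u_m$ around $v$. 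The children coming from the different blocks at a common vertex are concatenated left to right in an arbitrary block order.

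Next I would verify the central structural claim: $G$ is a subgraph of the augmented fan-tree of $T$. At a vertex $v$, the augmented fan-tree contains every edge $vu_i$ together with every edge $u_iu_{i+1}$ between consecutive children. The edges of $G$ incident to $v$ from below are exactly, for each cycle block, the two edges $vu_1,vu_m$ to the ends of that block's child path and the path edges $u_1u_2,\dots,u_{m-1}u_m$, plus the single edge of each bridge block; all of these appear in the augmented fan-tree, so indeed $G$ is a subgraph of $\mathrm{AugFanTree}(T)$. In particular, restricted to the children of a single vertex $v$, the downward part of $G$ is a subgraph of one fan centered at $v$: the cactus simply omits the ``interior'' spokes $vu_i$ (for $1<i<m$ within a cycle) and every path edge that would join two \emph{different} blocks.

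With this in hand I would run the algorithm of \autoref{th:fan-tree} essentially verbatim, but applying \autoref{lem:fan} at each vertex to the specific subgraph-of-a-fan induced by $v$'s child blocks rather than to a full fan. \autoref{lem:fan} already produces a valid contact representation of any subgraph of a fan, inserting the required $\eps$ gaps so that the non-adjacent pairs---the interior cycle vertices relative to the center, and consecutive children belonging to different blocks---do not touch; and it guarantees that the vertical strip above each non-center circle is empty. This empty-strip invariant is exactly what lets the representations of the child subtrees nest, unchanged, inside the strip allocated to their parent, so a single top-down traversal of $T$ assembles a balanced circle-contact representation of all of $G$. The running time is bounded, as in \autoref{th:fan-tree}, by $\sum_{v}\deg_T^2(v)=\Oh(n^2)$.

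The main obstacle is the structural step, and specifically guarding against spurious tangencies: because the augmented fan-tree adds both interior spokes and cross-block consecutive edges that are absent from $G$, a naive reuse of the exact fan-tree drawing would create contacts not present in the cactus. The resolution is precisely that \autoref{lem:fan} is stated for \emph{subgraphs} of fans and already enforces the $\eps$ separations for all such non-edges, so no work beyond the per-vertex bookkeeping of which child pairs are genuinely adjacent is needed.
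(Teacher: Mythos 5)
Your proposal is correct and takes essentially the same route as the paper: root the cactus, add spokes from each cycle's closest-to-root vertex so that every block becomes (part of) a fan, and then draw via the machinery of \autoref{th:fan-tree} in a top-down traversal, giving $\Oh(n^2)$ time. In fact you are somewhat more careful than the paper's two-line proof, which simply asserts that the augmented supergraph \emph{is} a fan-tree and invokes \autoref{th:fan-tree} directly; you correctly observe that the cactus (and even its spoke-augmented supergraph, when a vertex has children in several blocks) is only a \emph{subgraph} of the augmented fan-tree, and that the spurious tangencies this could cause are exactly what \autoref{lem:fan}'s treatment of subgraphs of fans, with its $\eps$-separations, is designed to avoid.
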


\begin{proof}
Given cactus graph $G$, choose a root vertex $v$ arbitrarily. For each cycle $C$ of~$G$, add an edge from each vertex of $C$ to the (unique) closest vertex to $v$ in~$C$ (\autoref{fig:cactus}). The resulting supergraph of $G$ is a fan-tree; the result follows by \autoref{th:fan-tree}.
\end{proof}

\begin{figure}[tb]
\centering
\subfigure[]
	{\includegraphics[width=0.35\textwidth]{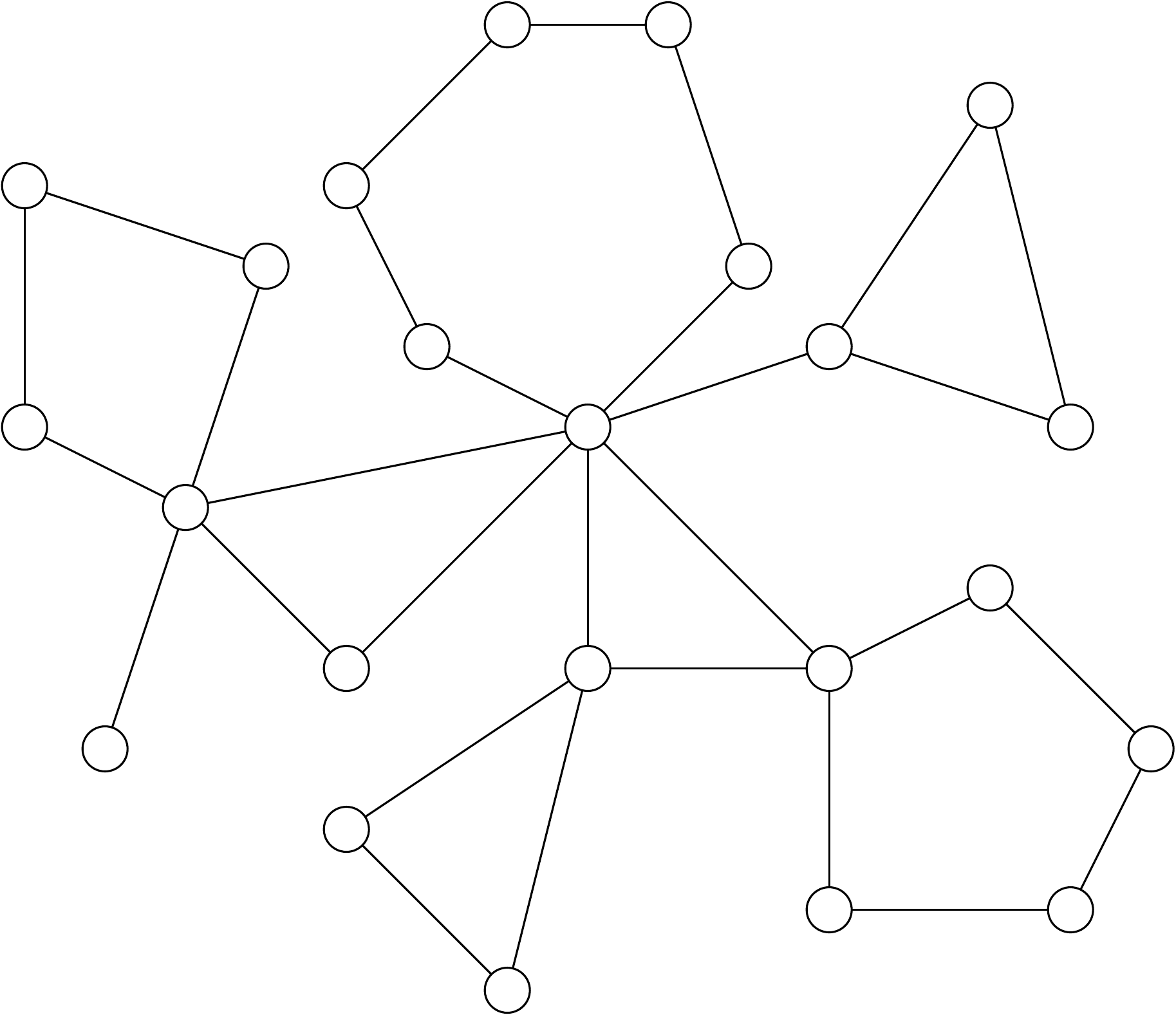}}
~~~~~~~~~~~~~~~~~~
\subfigure[]
	{\includegraphics[width=0.35\textwidth]{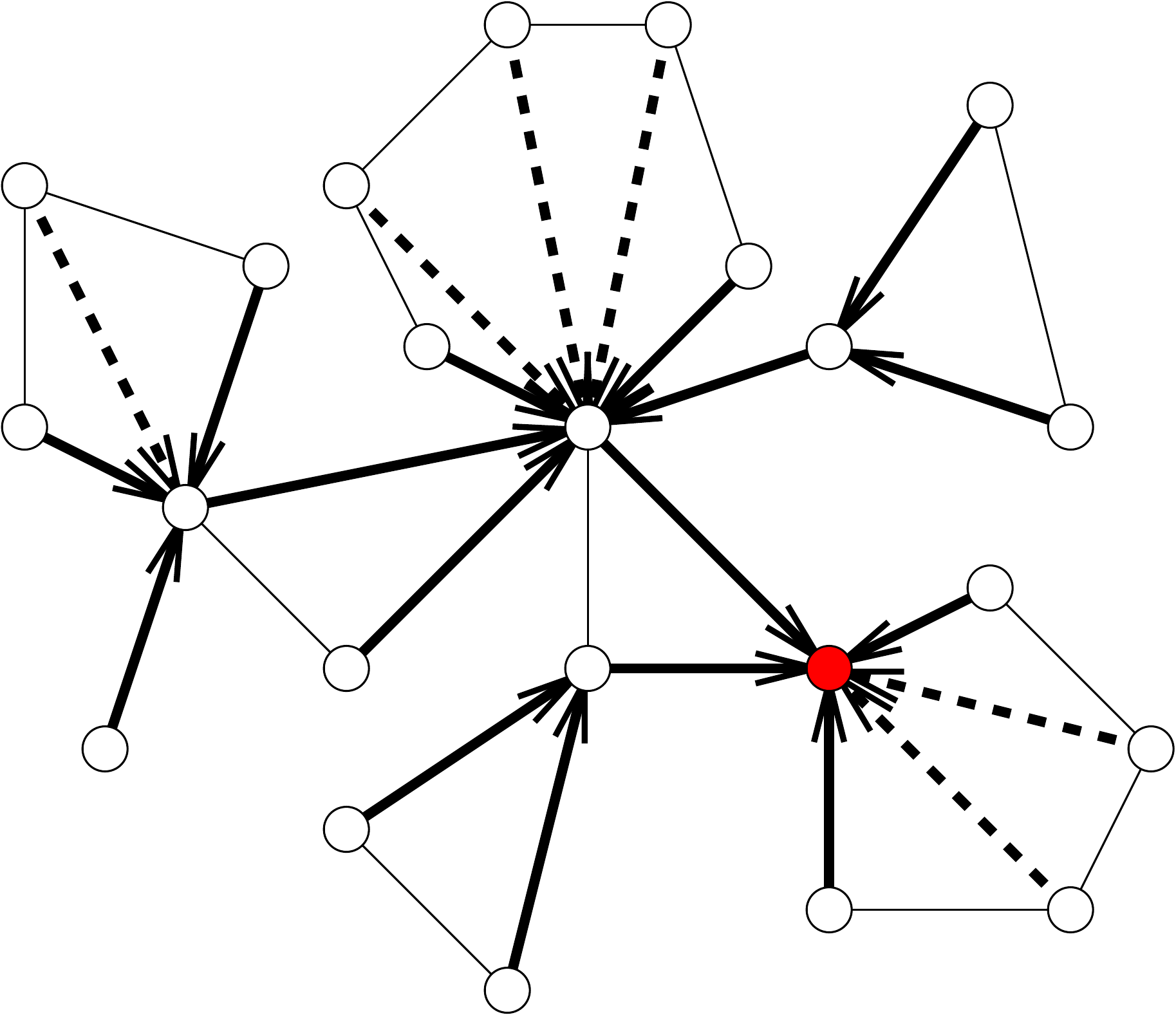}}
\caption{(a)~A cactus graph $G$; (b)~augmenting $G$ to a fan-tree so that the directed edges
 form a rooted tree and are oriented towards the root}
\label{fig:cactus}
\end{figure}

\ifFull

\subsection{Outerpaths}

We end this section with a linear-time algorithm for balanced circle-contact representation of
outerpaths. Recall that \textit{outerpaths} are outerplanar graphs whose weak dual is a path.

\begin{theorem}
\label{th:outerpath}
Every outerpath has a balanced circle-contact representation. Such a representation
can be found in linear time.
\end{theorem}

\begin{proof} Let $G$ be an outerpath. Assume for now that all the faces of $G$ are triangles.
 Then the consecutive faces of $G$ along the weak dual path can be partitioned into a sequence of fans. Define a path $P$ in
 $G$ that consists of all the edges that are shared by two maximal fans in $G$, along with the two
 terminal edges; see \autoref{fig:outerpath-a}. We will find a contact representation of $G$ in which
 each vertex of $P$ is realized by a unit circle. Number the vertices from
 one end to the other along $P$ as $1,2,\ldots, p$. Then for each vertex $i$ on $P$, draw
 a unit circle $C_i$ with the center at $(i,y)$, where $y=0$ if $i$ is even, and otherwise
 $y=\sqrt{3}/2$; see~\autoref{fig:outerpath-b}.

\begin{figure}[htbp]
\centering
\subfigure[]
	{\includegraphics[width=0.62\textwidth]{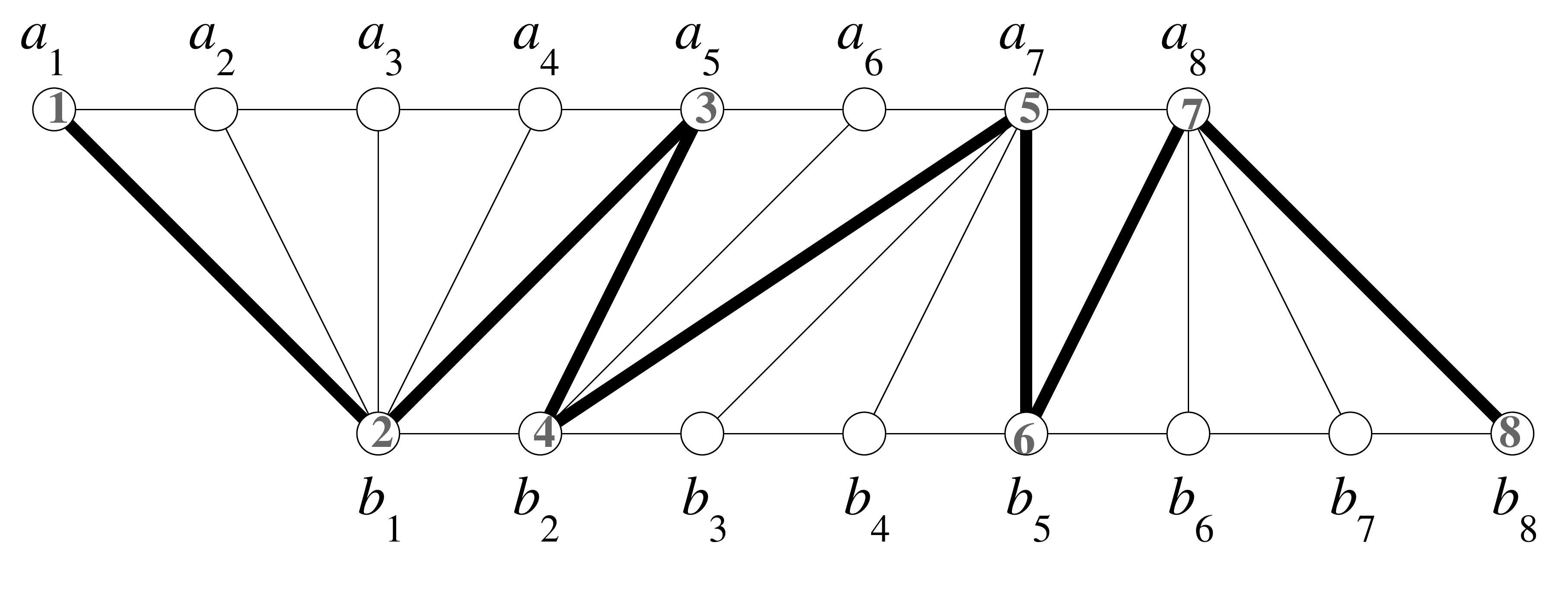}
	\label{fig:outerpath-a}}
\subfigure[]
	{\includegraphics[width=0.47\textwidth]{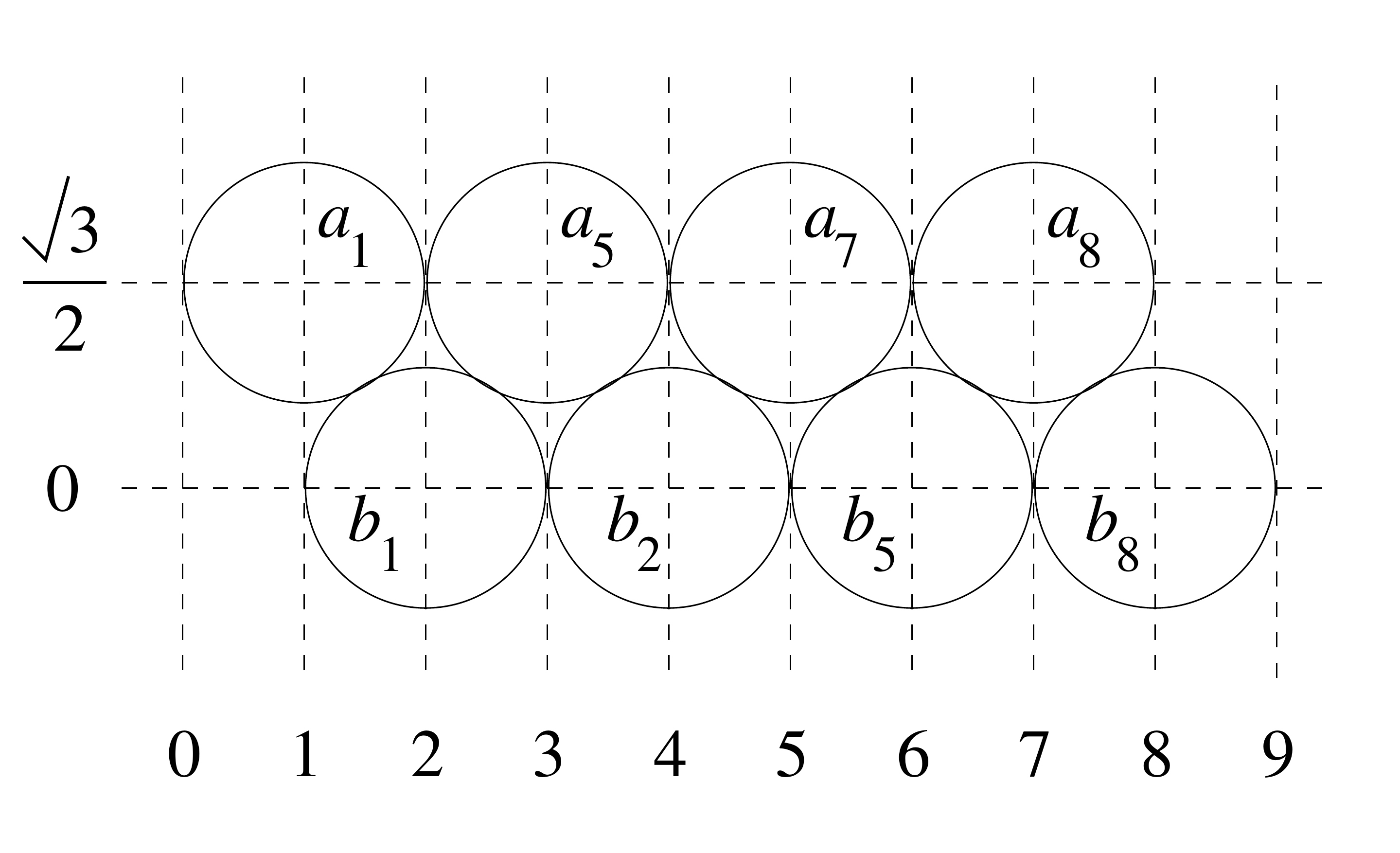}
	\label{fig:outerpath-b}}
\subfigure[]
	{\includegraphics[width=0.47\textwidth]{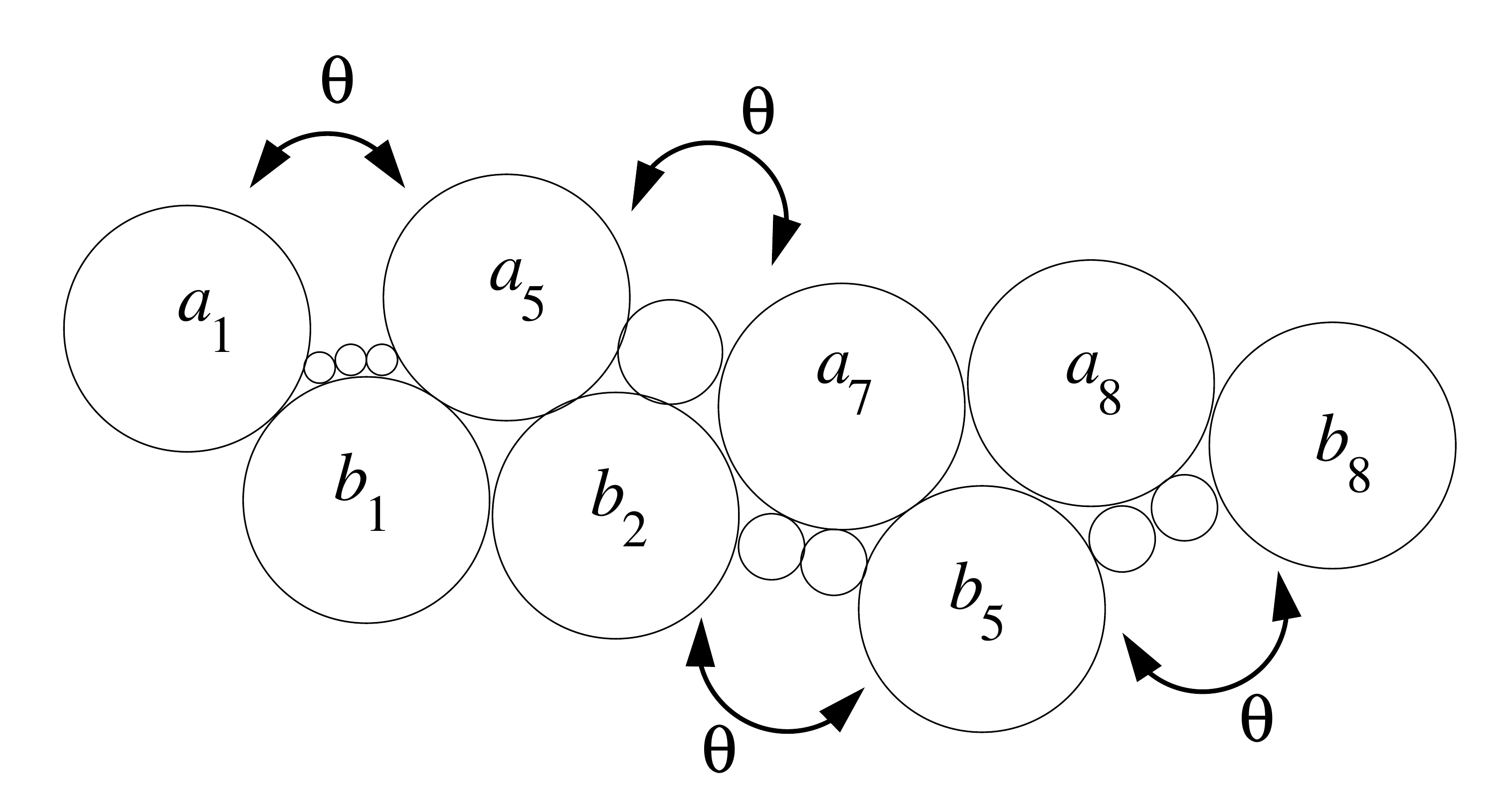}
	\label{fig:outerpath-c}}
\caption{(a)~An outerpath $G$; (b)--(c) construction of a balanced circle-contact representation for~$G$}
\end{figure}

 In this drawing, each circle $C_i$ touches the circles $C_{i-2}$, $C_{i-1}$, $C_{i+1}$ and $C_{i+2}$.
 Thus we have some additional and unwanted tangencies in this realization: namely, for some circles $C_i$, the
 tangency with $C_{i+2}$ (or symmetrically with $C_{i-1}$) does not correspond to an edge in $G$. When $i$ is odd, we may remove
an extra tangency between circles $C_i$  and $C_{i+2}$ by rotating $C_{i+2}$, and all circles to the left of it, clockwise by a small angle $\theta$ around the center of circle $C_{i+1}$. Similarly, when $i$ is even, we may remove an extra tangency between circles $C_i$ and $C_{i+2}$ by rotating $C_{i+2}$, and all circles to the left of it, counterclockwise by a small angle $\theta$ around the center of $C_{i+1}$. We choose $\theta$ to be $90^\circ/n$ so that these rotations
 do not create any overlap between the other circles.

 Next, we draw the circles representing the remaining vertices of each fan, within the regions created
 by these rotations; see \autoref{fig:outerpath-c}.
 It is easy to see that the sizes for these circles are at least $\Omega(1/n^2)$.
 Finally, if $G$ has some non-triangular faces, then
 we can add additional edges to augment $G$ into triangulated outerpath $G'$, and perform the construction above. A small perturbation of the construction suffices to remove its extra adjacencies. Clearly, the running time is linear.
\end{proof} 
\else
In the full version of this paper~\cite{AEGKP14}, we provide a linear-time algorithm for balanced circle-contact representation of outerpaths. The main idea of this construction is to partition a given outerpath into a sequence of fans, use unit circles to represent the zigzag outerpath formed by the vertices at the ends of each fan, and then perturb these circles by small rotations to make room for the other circles that should go between them.

\begin{theorem}
\label{th:outerpath}
Every outerpath has a balanced circle-contact representation. Such a representation
can be found in linear time.
\end{theorem}
\fi

\section{Bounded Tree-Depth}
\label{sec:tree-depth}

\ifFull

The \emph{tree-depth} of a graph~\cite{NesOss-12} is a measure of its complexity, related to but weaker than its treewidth. Thus, problems that are hard even for graphs of bounded treewidth, such as testing 1-planarity, can be solved efficiently for graphs of bounded tree-depth~\cite{BanCabEpp-WADS-13}, and constructions that might be impossible for graphs of bounded treewidth, such as finding balanced circle-contact representations even when the vertex degree is unbounded, might become possible for bounded tree-depth.

One definition of tree-depth involves a certain kind of forest, which we call an \emph{ancestral forest} for a given graph $G$; see~\autoref{fig:treedepth-a}. An ancestral forest is a forest $F$ having the vertices of $G$ as its vertices, such that every edge of $G$ connects a vertex $v$ with one of the ancestors of $v$ in the forest. For instance, every depth-first-search forest is ancestral; however, unlike depth-first search forests, an ancestral forest is allowed to have edges that do not belong to $G$. The tree-depth of $G$ is then the minimum height of an ancestral forest for~$G$, where the height of a tree is the maximum number of vertices on a root-leaf path. Tree-depth can also be defined inductively (a connected graph has tree-depth $\le k$ if it is possible to remove a vertex so that each remaining connected component has tree-depth $\le k-1$) or alternatively it can be defined in terms of the existence of certain graph colorings, called \emph{centered colorings}~\cite{NesOss-12}.

\begin{figure}[h!]
\centering
\subfigure[]
	{\includegraphics[height=4cm]{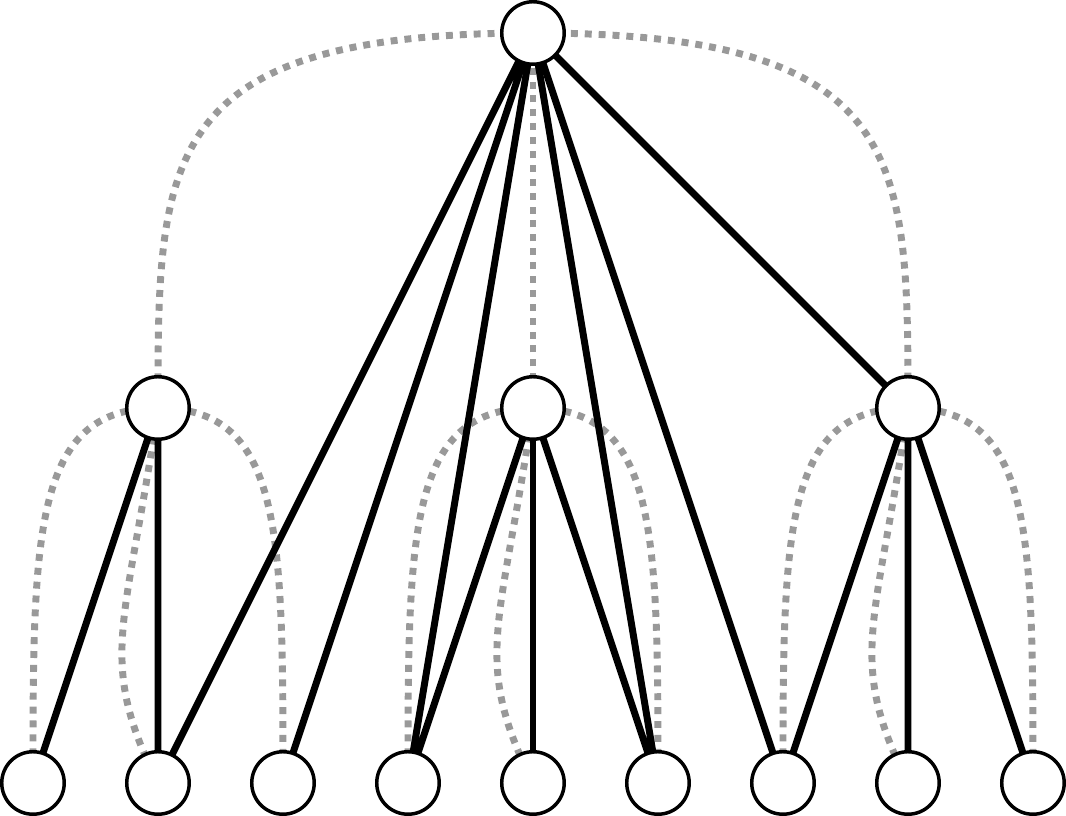}
	\label{fig:treedepth-a}}
~~~~~~~~~~~~~~~
\subfigure[]
	{\includegraphics[height=4cm]{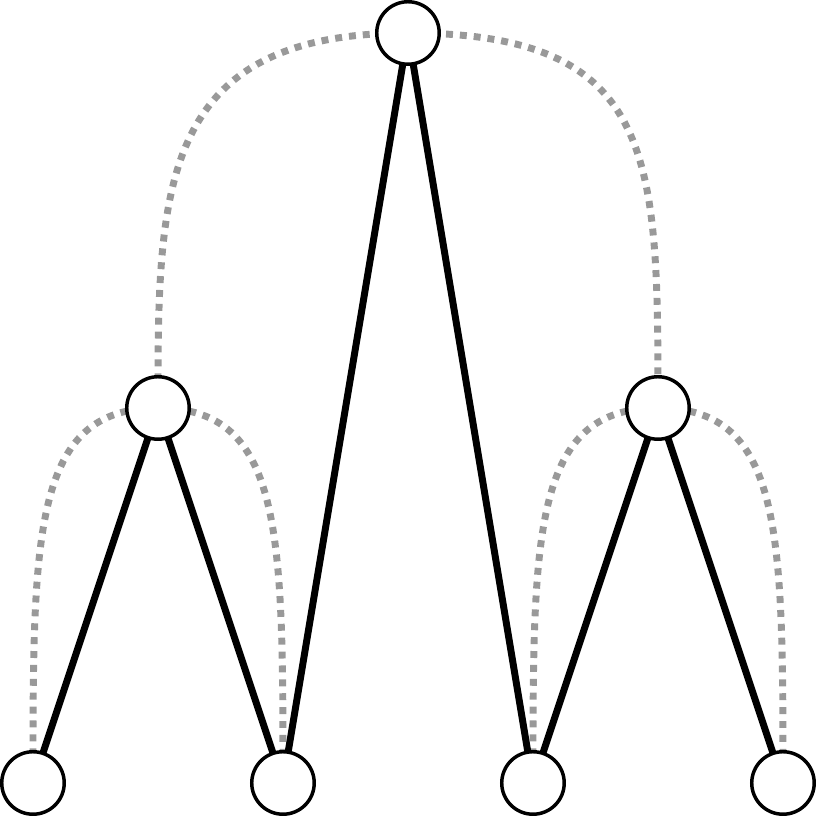}
	\label{fig:treedepth-b}}
\caption{(a)~A planar graph of tree-depth~$3$ and one of its possible ancestral trees, (b)~the longest path in a graph of tree-depth~$3$ can have at most $7=2^3-1$ vertices}
\end{figure}

If the number of vertices in the longest simple path in a graph $G$ is $\ell$, then $G$ has tree-depth at most $\ell$, for this will be the maximum possible height of a depth-first-search forest. In the other direction, the tree-depth of $G$ must be at least $\log_2(\ell+1)$, because the path itself requires this height in any ancestral forest; see~\autoref{fig:treedepth-b}. Therefore, for any infinite family of graphs, it is equivalent to state that either the family has bounded tree-depth or that it has bounded longest path length~\cite{NesOss-12}.

\subsection{Tree structures for graph connectivity}

Our results in this section depend on a characterization of the planar graphs of bounded tree-depth, which we state in terms of two standard data structures for graph connectivity, the block-cut tree and the SPQR tree.

In any undirected graph $G$, one can define an equivalence relation on the edges of $G$ according to which two distinct edges are equivalent whenever they belong to a simple cycle of $G$. The equivalence classes of this relation form subgraphs of $G$, called the \emph{blocks} or \emph{2-vertex-connected components} of~$G$, and are separated from each other by \emph{cut vertices}, vertices that belong to more than one block. The collection of blocks and cut vertices may be represented by a tree structure (or, in a disconnected graph, a forest) called the \emph{block-cut tree}~\cite{KhuThu-JAlg-93}: this tree has a node for each block and each cut vertex of $G$, and it has an edge connecting each cut vertex to the blocks that contain it. A graph is \emph{2-vertex-connected} if it has only a single block (and no isolated vertices).

Similarly, a graph is $3$-vertex-connected if it does not have any \emph{separation pairs}, pairs of vertices whose removal would disconnect the graph. If a graph~$G$ is 2-connected but not 3-connected, it may be represented by a structure called an SPQR tree~\cite{DiBTam-ICALP-90}. In the formulation of SPQR trees that we use here, there are nodes of three types, each of which is associated with a \emph{triconnected component} of the given graph~$G$, graphs derived from $G$ and having as their vertices subsets of the vertices of~$G$~\cite{HopTar-SJC-73,Mac-DMJ-37}. In an $S$-node of an SPQR tree, the associated graph is a simple cycle. In a $P$-node, the associated graph is a \emph{dipole graph}, a two-vertex multigraph with three or more edges connecting its two vertices. And in an $R$-node of an SPQR tree, the associated graph is a $3$-vertex-connected graph that is neither a cycle nor a dipole. In each of the graphs associated with the nodes of an SPQR tree, some of the edges are \emph{real} (belonging to the original graph~$G$) and some are \emph{virtual}. Each edge of the SPQR tree identifies a pair of virtual edges from the graphs associated with its two nodes, and each virtual edge is identified in this way by one SPQR tree edge. The original graph $G$ can be reconstructed by gluing together each identified pair of virtual edges (merging each endpoint of one edge with an endpoint of the other edge to form a supervertex), and then deleting these edges. With the additional constraints that $S$ nodes not be adjacent to other $S$ nodes, and that $P$ nodes not be adjacent to other $P$ nodes, this decomposition is uniquely determined from~$G$. A 2-connected graph $G$ is planar if and only if each of the graphs associated with the nodes of its SPQR tree is planar, and the SPQR tree can be used to represent all of its possible planar embeddings.

\subsection{Characterization of planar bounded tree-depth}

We use the block-cut tree and SPQR tree structures to characterize the planar graphs of bounded tree-depth, as follows.

\begin{lemma}
A family, $\mathcal{F}$, of planar graphs has bounded tree-depth if and only if the following three conditions are satisfied:
\begin{itemize}
\item The block-cut trees of the graphs in $\mathcal{F}$ have bounded height.
\item The SPQR trees of the blocks of the graphs in $\mathcal{F}$ have bounded height.
\item For each node of an SPQR tree of a block of a graph in $\mathcal{F}$, the number of vertices in the graph associated with the node is bounded.
\end{itemize}
\end{lemma}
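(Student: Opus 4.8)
The plan is to prove the characterization by establishing both directions of the equivalence, treating tree-depth as being equivalent to bounded longest-path length (as the excerpt has already argued). The key observation I would exploit is that each of the three connectivity structures—the block-cut tree, the SPQR trees of the blocks, and the individual node-graphs inside each SPQR tree—contributes multiplicatively to the length of a longest simple path in $G$. So the overall strategy is to show that a long path in $G$ forces at least one of these three structures to be large, and conversely that bounding all three forces every simple path in $G$ to be short.

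\textbf{The forward direction (bounded tree-depth $\Rightarrow$ the three conditions).}
First I would argue the contrapositive for each condition separately: if any one of the three quantities is unbounded over the family $\mathcal{F}$, then the longest-path length is unbounded, hence so is the tree-depth. If the block-cut trees have unbounded height, then one can thread a long simple path through a long chain of blocks glued at cut vertices, since a path entering a block at one cut vertex can exit at another. If the SPQR tree of some block has unbounded height, then a long root-to-leaf chain of triconnected components likewise yields a long simple path, because the virtual-edge gluing lets a path cross from one triconnected component into the next through the shared separation pair. Finally, if a single node-graph has unboundedly many vertices, then—since that node-graph is a cycle, a dipole, or a $3$-connected planar graph—it contains arbitrarily long simple paths on its own (a cycle trivially does; a large $3$-connected graph does by standard results). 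In each case I obtain a long simple path in $G$ and conclude the tree-depth is unbounded.

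\textbf{The converse direction (the three conditions $\Rightarrow$ bounded tree-depth).}
Here I would assume all three quantities are bounded, say by constants $h_{\mathrm{bc}}$, $h_{\mathrm{spqr}}$, and $s$ respectively, and bound the length of an arbitrary simple path $\pi$ in $G$. The idea is to decompose $\pi$ according to the block-cut tree: the blocks visited by $\pi$ form a path in the block-cut tree, so $\pi$ meets at most $O(h_{\mathrm{bc}})$ blocks, and it suffices to bound the number of vertices $\pi$ uses within a single block. Within a block, I recurse on the SPQR tree: the portion of $\pi$ inside the block is controlled by how the path traverses the triconnected components, and because the SPQR tree has height at most $h_{\mathrm{spqr}}$ with each node-graph having at most $s$ vertices, one can bound the vertices of $\pi$ inside the block by a function of $s$ and $h_{\mathrm{spqr}}$. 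Multiplying the per-block bound by the number of blocks gives a bound on $|\pi|$ depending only on $h_{\mathrm{bc}}$, $h_{\mathrm{spqr}}$, and $s$, which bounds the tree-depth.

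\textbf{The main obstacle.}
I expect the delicate step to be making the SPQR-tree recursion rigorous: a simple path in $G$ restricted to one block does not literally decompose into independent pieces in each triconnected component, because the path may enter and leave a virtual edge's separation pair and weave between sibling subtrees. The careful accounting needs to track, for each SPQR node, how many times the path can pass through the two poles of a virtual edge (at most twice, since those are single vertices on a simple path), and use this to bound the total contribution of the subtree hanging below that virtual edge. This ``at most two crossings per separation pair'' bookkeeping, combined with the height bound, is what makes the per-block contribution finite; getting the induction hypothesis stated so that it composes cleanly across the tree is the crux of the argument.
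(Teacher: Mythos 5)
Your proposal follows essentially the same route as the paper's proof: both directions reduce bounded tree-depth to bounded longest-path length, the forward direction argues the contrapositive (unbounded block-cut or SPQR tree height, or unbounded node-graphs---via known results on long cycles in $3$-connected planar graphs, where the paper cites Chen and Yu---forces arbitrarily long simple paths), and the converse bounds any simple path by a recursion of bounded height and branching factor over the block-cut and SPQR trees. The only cosmetic difference is the bookkeeping in the converse: you cap the number of times a path can cross a separation pair (at most twice), while the paper caps the number of times a path can pass through an SPQR node (at most $s$ times, since each pass consumes a vertex of the node's associated graph, giving at most $\Oh(s b (s+1)^h)$ path vertices overall); both accountings yield the same constant bound.
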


\begin{proof}
The requirements that the block-cut trees and SPQR trees have bounded height are clearly both necessary for the tree-depth to be bounded. For, if the graphs in $\mathcal{F}$ had block-cut trees or SPQR trees that could be arbitrarily high, we could follow the structure of these trees to find paths in the graphs that were arbitrarily long, which in turn would imply that the tree-depth (which is at least logarithmic in the path length) would necessarily itself be arbitrarily high. In an SPQR tree of any graph, a P-node cannot have an unbounded number of vertices. An S-node with many vertices would necessarily again lead to a long path, as any virtual edge in the cycle associated with the S-node can be replaced by a path of real edges. To show that it is also necessary to have a bounded number of vertices in the graphs associated with R nodes, we invoke a result of Chen and Yu~\cite{CheYu-JCTB-02}, who proved that a 3-connected planar graph with $k$ nodes necessarily contains a cycle of length $\Omega(k^{\log_2{3}})$. If the graphs associated with R-nodes could have nonconstant values of~$k$, this would again necessarily imply the existence of paths of non-constant length in the graphs of $\mathcal{F}$, and therefore it would also imply that the tree-depth of $\mathcal{F}$ would not be bounded.

In the other direction, suppose that we have a family $\mathcal{F}$ of graphs whose block-cut tree height is at most $b$, SPQR tree height is at most $h$, and SPQR node vertex count is at most $s$. Then, any simple path $\Pi$ in such a graph can pass through at most $b$ blocks: $\Pi$ must follow a path in the block-cut tree, alternating between blocks and cut vertices, and the total number of blocks and cut vertices in any such path is at most $2b-1$.
Within any block of the block-cut tree, $\Pi$ must follow a walk through the SPQR tree that passes through each node at most $s$ times, because each time it passes through a node it uses up one of the vertices in the associated graph. Therefore, the subtree of the SPQR tree followed by $\Pi$ has maximum degree at most $s+1$, and can include at most $\Oh((s+1)^h)$ nodes. Within each of these nodes, $\Pi$ can have at most $s$ vertices per node. Therefore, the total number of nodes in $\Pi$ can be at most $\Oh(sb(s+1)^h)=\Oh(1)$. Any depth-first-search forest of a graph in  $\mathcal{F}$ has height bounded by this quantity, showing that all graphs in $\mathcal{F}$ have bounded tree-depth.
\end{proof}

\subsection{M\"obius gluing}

For any planar graph $G$ of bounded tree-depth, we will form a circle-contact representation of $G$ by decomposing $G$ into its block-cut tree, decomposing each block into its SPQR tree, using Koebe's circle packing theorem to find a circle-contact representation for each node of each SPQR tree, and gluing these packings together to give a representation of $G'$. Because each SPQR tree node has an associated graph of bounded size, the packings that we glue together will be balanced. When we glue two packings together, we will ensure that we only form a polynomial imbalance in the ratio of sizes of their circles. This, together with the bounded height of the trees that guide our overall gluing strategy, will ensure that the eventual packing we construct is balanced.

To glue two packings together, we use \emph{M\"obius transformations}, a group of geometric transformations of the plane (plus one point at infinity) that includes inversion by a circle as well as the familiar Euclidean translations, rotations, and scaling transformations. These transformations take circles to circles (allowing for some degenerate cases in which a line must be interpreted as an infinite-radius circle through the point at infinity) and preserve contacts between circles. The specific transformation we need is described by the following lemma.

\begin{figure}[t]
\centering\includegraphics[width=0.5\textwidth]{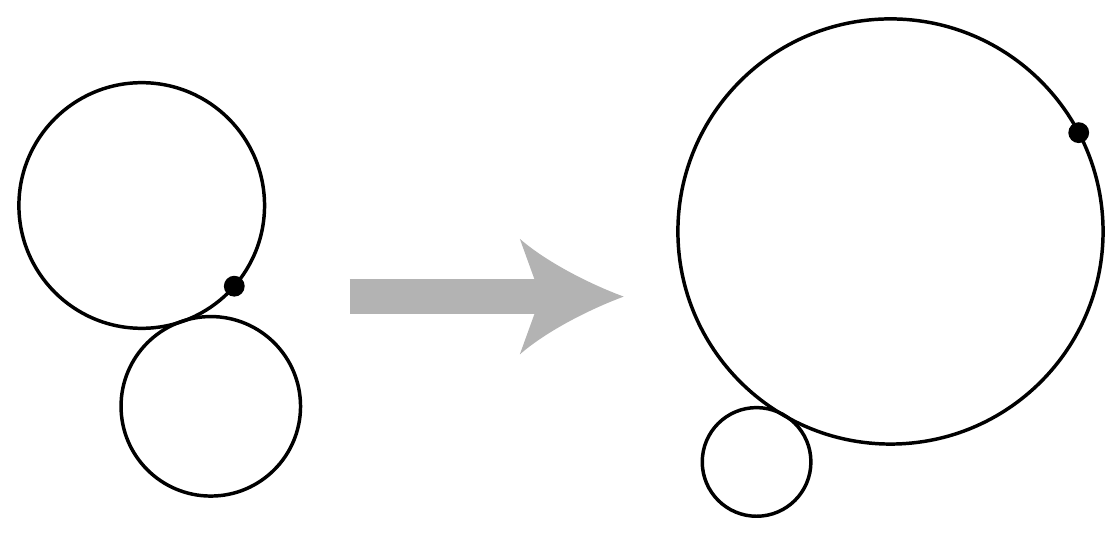}
\caption{\autoref{lem:mobius-matches-marks}: Any two configurations of two tangent circles and a marked point on one of the two circles can be M\"obius-transformed into each other}
\label{fig:tanmark}
\end{figure}

\begin{lemma}
\label{lem:mobius-matches-marks}
Let $X$ and $Y$ be two geometric configurations, each consisting of two tangent circles and a marked point on one but not both of the circles. Then there exists a M\"obius transformation that takes $X$ to $Y$.
\end{lemma}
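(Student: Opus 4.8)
The plan is to exploit the fact that the group of Möbius transformations acts with enough freedom to match any finite configuration of the type described, and to prove the lemma by a dimension-counting / transitivity argument made constructive. The key observation is that the real Möbius group (generated by Euclidean similarities together with inversion) is a $4$-real-parameter group, and the data to be matched---two tangent circles together with a marked point on one of them---also has exactly $4$ real degrees of freedom once we mod out by the ambiguities. So the plan is to show that the Möbius group acts \emph{simply transitively} on such configurations, which immediately gives both existence and uniqueness (though only existence is claimed).

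Concretely, I would proceed by reduction to a normal form. First I would show that any configuration $X$ consisting of two tangent circles with a marked point on one of them can be carried by a Möbius transformation to a fixed canonical configuration $Z$. The natural choice for $Z$ is to send the point of tangency of the two circles to the point at infinity: under such an inversion the two tangent circles become two parallel lines (infinite-radius circles tangent ``at infinity''), and the marked point becomes an ordinary point lying on one of these two parallel lines. Then a Euclidean similarity (rotation, translation, scaling) normalizes the two parallel lines to, say, the lines $y=0$ and $y=1$, and positions the marked point at the origin. This shows every $X$ is Möbius-equivalent to $Z$. Applying the same argument to $Y$, we obtain Möbius transformations $\phi$ with $\phi(X)=Z$ and $\psi$ with $\psi(Y)=Z$; then $\psi^{-1}\circ\phi$ is the desired transformation taking $X$ to $Y$, since the Möbius transformations form a group and are closed under composition and inversion.

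The main subtlety, and the step I expect to require the most care, is verifying that the normalization of the marked point can always be achieved \emph{after} the two lines have been fixed, using only the residual freedom that preserves the pair of parallel lines. After sending the tangency to infinity and fixing the two lines as $y=0$ and $y=1$, the stabilizer of this unordered pair of lines within the similarity group consists of horizontal translations (and possibly a reflection/half-turn swapping the lines). A horizontal translation can slide the marked point---which lies on one specific line---to the origin, so the normalization goes through. I would need to confirm that the one remaining discrete choice (which of the two circles carries the marked point, and which tangent line it becomes) does not obstruct matching: since the problem stipulates the marked point is on exactly one of the two circles, both $X$ and $Y$ have the same combinatorial type, and the reflection symmetry of the canonical $Z$ (swapping the two lines) lets us align these choices. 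This reflection is itself a Möbius transformation, so it stays within the group.

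Finally I would remark that all the maps used---inversion in a circle, Euclidean similarities, and the reflection---are Möbius transformations that carry circles to circles (with lines treated as infinite-radius circles through the point at infinity) and preserve tangencies, exactly as stated in the paragraph preceding the lemma; hence the composite $\psi^{-1}\circ\phi$ has all the required properties and maps $X$ to $Y$. The argument is essentially the standard transitivity of the Möbius group on flags, specialized to the tangent-circle-plus-marked-point data, and the whole proof is short once the canonical form is fixed.
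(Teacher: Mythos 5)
Your proof is correct and follows essentially the same route as the paper's: invert through a circle centered at the point of tangency to turn both circles into parallel lines, then use Euclidean similarities (scaling, rotation, a reflection if needed, and translation) to align the lines and the marked points, composing the two normalizing maps via the group structure. The only cosmetic difference is that you normalize both configurations to a fixed canonical form $Z$ and take $\psi^{-1}\circ\phi$, whereas the paper matches the two transformed configurations to each other directly; the content is identical.
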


\begin{proof}
It is equivalent to state that there exists a M\"obius transformation that takes both of these configurations to the same configuration as each other. Both $X$ and $Y$ may be transformed to a pair of parallel lines, by an inversion through a circle centered on their point of tangency. A scaling transformation makes these two pairs of parallel lines the same distance apart from each other, a rotation makes all four lines parallel, a reflection (if necessary) makes both transformed configurations have the marked point on the upper of the two lines, and then a translation matches the positions of the marked points.
\end{proof}

\autoref{lem:mobius-matches-marks} is illustrated in \autoref{fig:tanmark}.
To control the sizes of the transformed packings generated by this lemma, we use the following lemma:

\begin{lemma}
\label{lem:place-marks}
Let two circles $C_0$ and $C_1$ be given, and suppose that we wish to apply \autoref{lem:mobius-matches-marks}
to transform a system of at most $n$ balanced packings, each containing at most $n$ circles, so that some two tangent circles in each packing are placed at the positions of $C_0$ and $C_1$.
Then there exists a sequence of marks on circle $C_0$ such that,
if each packing is placed with one of its points of contact on one of these marks,
then no two of the transformed packings intersect each other,
and the transformed size of the circles in each packing is smaller than the size of $C_0$ and $C_1$ by at most a polynomial factor.
\end{lemma}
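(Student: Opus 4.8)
```latex
\textbf{Proof plan for \autoref{lem:place-marks}.}
The plan is to reduce the whole statement to a packing/counting argument on the boundary of a fixed circle. First I would fix the two target circles $C_0$ and $C_1$ and apply \autoref{lem:mobius-matches-marks} once, in the abstract, to understand the single-packing picture: for each packing $P_j$ we choose one of its contact points (together with the circle carrying the mark) and there is a unique M\"obius transformation taking that marked tangent-circle configuration to the configuration consisting of $C_0$, $C_1$, and a chosen mark on $C_0$. The key observation is that moving the mark along $C_0$ corresponds to sliding the transformed packing along the outside of $C_0$: the transformed packing sits in a bounded region ``hanging off'' $C_0$ near the mark, and different marks produce transformed packings that are essentially translated (and mildly distorted) copies of one another. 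So the problem becomes: place $n$ such bounded regions around $C_0$ without overlap, while keeping each one large.

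The second step is a size estimate for a single transformed packing as a function of how much room it is given on $C_0$. Because each packing is balanced and contains at most $n$ circles, its two designated tangent circles have comparable size to the whole packing up to a factor polynomial in $n$; hence the \emph{entire} packing, when one of its contact points is pinned to the position of $C_0$ and $C_1$, occupies an angular window on $\partial C_0$ whose width is at least some $1/\mathrm{poly}(n)$ fraction of the available window, provided its circles are shrunk by at most a polynomial factor. Concretely, I would show that if we are willing to scale the transformed $j$th packing down so that its largest circle has size $\Oh(\mathrm{size}(C_0)/n^c)$ for a suitable constant $c$, then the packing fits inside an arc of $\partial C_0$ of angular width $\Omega(2\pi/n)$. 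This is where the ``at most a polynomial factor'' in the conclusion comes from.

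The third step is the placement itself: since each packing needs only an $\Omega(2\pi/n)$-width arc and there are at most $n$ packings, the total angular demand is $\Oh(2\pi)$, so with a small constant in the arc widths the $n$ arcs fit disjointly around the full circle $\partial C_0$. I would lay the marks at the centers (or left endpoints) of $n$ equally spaced arcs of width $2\pi/n$, assign one arc to each packing, and scale each packing down just enough that it is contained in a thin collar of its own arc and does not cross into the neighboring arcs. Properties of M\"obius transformations guarantee the images are still genuine circle packings with the same contacts, so no packing self-intersects; disjointness of the arcs then gives pairwise non-intersection of distinct packings. The uniform polynomial bound on the scaling across all $n$ packings yields the claimed polynomial bound on circle sizes relative to $C_0$ and $C_1$.

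The main obstacle I anticipate is step two: controlling the shape of a transformed packing under inversion. The Euclidean ``footprint'' of a balanced packing on $\partial C_0$ is not obviously bounded, because inversion can badly distort regions far from the point of tangency, and the packing's circles furthest from the pinned contact point could in principle be mapped to large regions sweeping over much of $C_0$. The careful part is to bound, uniformly over all balanced $n$-circle packings, both the angular extent and the radial protrusion of the transformed packing in terms of the ratio between the designated tangent circles and the packing's diameter; this requires a quantitative, rather than merely qualitative, version of the distortion estimate for the M\"obius maps produced by \autoref{lem:mobius-matches-marks}, and it is what converts the clean counting argument of step three into an honest proof.
```
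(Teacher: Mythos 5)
Your proposal has a genuine gap, and it sits exactly where you flag it in your closing paragraph: the ``quantitative distortion estimate'' you defer is not a technical afterthought but the entire content of the lemma, and the placement scheme you build on top of it rests on a degree of freedom that does not exist. Once you decide which two circles of a packing are sent to $C_0$ and $C_1$ and where on $C_0$ the marked contact point lands, the M\"obius transformation of \autoref{lem:mobius-matches-marks} is determined up to reflection: in the normalized picture where $C_0$ and $C_1$ become two parallel lines, the only maps preserving the configuration are horizontal translations, and choosing among these is the same thing as choosing the mark. So you cannot ``scale each packing down just enough that it is contained in a thin collar of its own arc'' --- position and scale are coupled, and the only way to make a transformed packing smaller is to move its mark closer to the point of tangency of $C_0$ and $C_1$. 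In particular, marks equally spaced around $\partial C_0$ fail outright: a packing whose mark lies at angular distance $\Theta(1)$ from the tangency point is forced to have circles of size comparable to $C_0$ itself, and since its footprint in the parallel-lines picture can have width up to $n$, its image can sweep around most of $C_0$ and overlap every other packing. The problem is not a counting problem on disjoint arcs; all the packings must be pushed into the crevice between $C_0$ and $C_1$, and the issue is how deep into the crevice the last one must go.

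The paper's proof resolves exactly this. It places the marks successively deeper in the crevice, separating consecutive marks by a chain of $2n$ ``guard'' circles, each tangent to $C_0$, to $C_1$, and to its predecessor. Non-interference then follows not from disjoint arcs plus rescaling, but from a maximality argument: any circle lying between $C_0$ and $C_1$ is at most as large as the guard circle at its position, so a transformed packing with at most $n$ circles cannot stretch past $n$ guard circles, i.e., more than halfway to the next mark. The quantitative estimate you are missing is then obtained by an explicit inversion computation: the whole guard system is the image, under inversion in a unit circle, of a row of $2n(n-1)+1$ unit-diameter circles packed between two parallel lines, and the $i$th circle of that row inverts to a circle of diameter $\Th(1/i^2)$. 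Hence even the deepest mark is only polynomially far into the crevice --- the smallest guard circle has radius $\Th(1/n^4)$ times the smaller of the two given radii --- which is what bounds the shrinkage of all the transformed packings by a polynomial factor. Your observation that balance makes the designated tangent circles polynomially comparable to the rest of their packing is correct and is indeed needed, but without the guard-chain construction (or an equivalent explicit mechanism coupling mark positions to sizes), your steps two and three remain a plan rather than a proof.
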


\begin{proof}
Place a line tangent to the two circles and place the first mark at the point where it touches $C_0$. Then, to place each successive mark, pack a sequence of $2n$ circles into the triangular gap between $C_0$, $C_1$, and the circle or line defining the previous mark, in such a way that each circle is tangent to $C_0$, $C_1$, and the previous circle in the sequence. Place the next mark at the point where the last of these circles touches $C_0$. These circles are as large as possible for their position in any packing that lies between $C_0$ and $C_1$, so it is not possible for any of the transformed packings to stretch more than half-way from one mark to the next; therefore, no two transformed packings interfere with each other.

To analyze the size of the circles used to generate this system of marks, it is helpful to construct the same sequence of circles in a different way. Place two parallel lines in the plane, a unit distance apart, and pack a sequence of $2n(n-1)+1$ unit-diameter circles between them, so that each of the circles in this sequence is tangent to both lines and to its neighbors in the sequence.
Then, invert this system of lines and circles through a unit-radius circle centered on a boundary point of the first circle in the sequence of unit-diameter circles; see~\autoref{fig:beads}.
This first circle will invert to a line, tangent to the circles $C_0$ and $C_1$ formed by the inverted images of the two parallel lines, and the remaining circles will invert to a sequence of circles between $C_0$ and $C_1$ as constructed above. By choosing the inversion center appropriately, it is possible to choose arbitrarily the ratio of radii of the circles $C_0$ and $C_1$ formed by inverting the two parallel lines, so we can construct a system of circles in this way that is similar to the previous construction for any arbitrary pair of circles $C_0$ and $C_1$.

\begin{figure}[t]
\centering\includegraphics[scale=0.65]{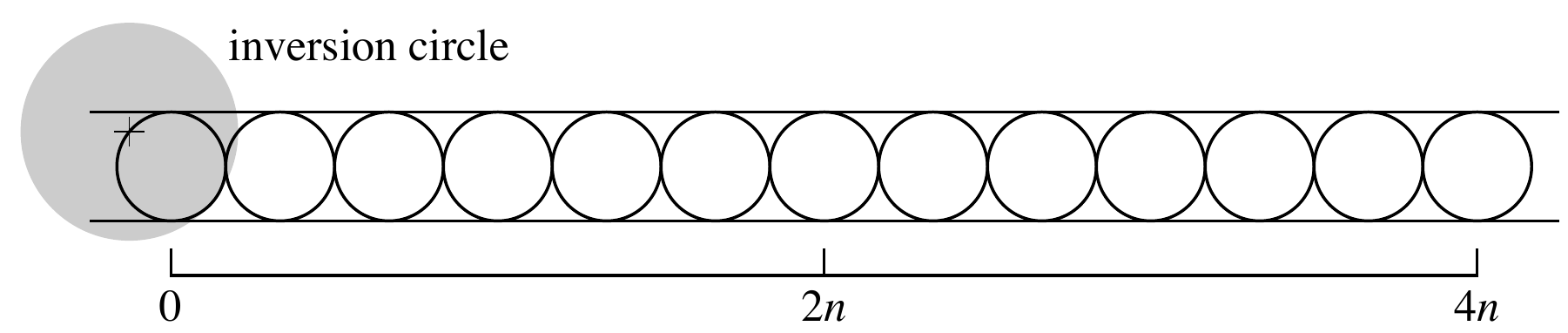}
\caption{An inverted view of the construction in \autoref{lem:place-marks}, for $n=3$}
\label{fig:beads}
\end{figure}

When we invert by a unit circle, the distances of each point from the inversion center become inverted: the closer of the two parallel lines becomes the larger of the two circles $C_0$ and $C_1$, which may be arbitrarily large. Because the other one of the two parallel lines has all points at distance at least $1/2$ from the inversion center, some points within distance $1$ of the inversion center, and some points arbitrarily far from the inversion center, it becomes transformed into a circle that passes from the inversion center to some points outside the inversion circle, but that remains within a radius-$2$ disk centered at the inversion center. Thus, its radius is necessarily between $1/2$ and~$1$. The $i$th of the $2n(n-1)+1$ circles in the sequence of circles has its nearest point to the inversion center at a distance between $i-1/2$ and $i$, and its farthest point to the inversion center at a distance between $i+1/2$ and $i+1$, so when these distances are inverted its transformed diameter becomes $\Th(1/i^2)$. Therefore, the smallest of the $2n(n-1)$ circles used to generate this system of marks is only polynomially small: its radius is $\Th(1/n^4)$ times the smaller of the radii of $C_0$ and $C_1$.
\end{proof}

\subsection{Adjacent separating pairs}

As a warm-up to our main result on bounded tree-depth, we show here the existence of balanced circle-contact representations for a special subclass of the planar graphs of bounded tree-depth, the graphs $G$ in which the separation pairs are all adjacent. That is, if we
construct the block-cut tree of $G$, and for each block construct the SPQR tree,
then each separation pair of each SPQR tree should be the endpoints of an edge somewhere in~$G$. Equivalently, among each two adjacent nodes in any of the SPQR trees for the blocks of~$G$, one of the two nodes should be a P-node, and one of the edges of this P-node should be a real edge.

\begin{lemma}
\label{lem:adj-separation}
For every constant bound $d$, every planar graph~$G$ with tree-depth at most $d$ in which all separation pairs are adjacent has a balanced circle-contact representation.
\end{lemma}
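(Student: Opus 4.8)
The plan is to build the representation of $G$ by a bottom-up gluing guided by the two tree structures from the preceding characterization of bounded tree-depth: the block-cut tree of $G$ and, inside each block, its SPQR tree. Because $G$ has tree-depth at most the constant $d$, that characterization guarantees that all of these trees have height $\Oh(1)$ and that the graph associated with each SPQR-tree node has $\Oh(1)$ vertices. For each such node I would apply Koebe's theorem~\cite{Koebe36} to its associated graph; since the graph has constant size, the resulting circle packing is automatically balanced. The representation of $G$ is then assembled by gluing these constant-size balanced packings together, one tree edge at a time, working upward from the leaves of the trees.

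The gluing relies on the hypothesis that every separation pair is adjacent. Root each SPQR tree, and for a node $N$ let $\{a,b\}$ be the separation pair shared with its parent, i.e.\ the endpoints of the virtual reference edge. I maintain the invariant that the packing $P_N$ built for the subtree at $N$ is balanced and represents $a$ and $b$ by two \emph{tangent} circles $C_a, C_b$, with the remainder of $P_N$ confined to one side of their point of tangency. This invariant is exactly what the adjacency hypothesis makes available: since $\{a,b\}$ is a separation pair it is joined by a real edge of $G$, so realizing $a$ and $b$ as tangent circles is legitimate. To build $P_N$, I first Koebe-pack the associated graph of $N$ so that its reference separation pair and every child's separation pair are realized as tangent pairs of circles; then, for every virtual edge leading to a child $N_i$, I glue the recursively built balanced packing $P_{N_i}$ into the gap at the corresponding tangency, on the side lying in the face where that child belongs. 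Each such gluing is the M\"obius transformation supplied by \autoref{lem:mobius-matches-marks}, mapping the tangent pole-pair of $P_{N_i}$ onto the two tangent circles bounding the gap, with the marked point fixing the orientation and the side on which the body of $P_{N_i}$ lands. At a $P$-node, whose poles $a,b$ may carry many parallel edges and hence many children, all of the children share the single tangent pair $C_a, C_b$; here I invoke \autoref{lem:place-marks} to scatter the (at most $n$) child packings into disjoint slots around the common tangency so that no two overlap, while keeping each shrunk by only a polynomial factor. At an $S$- or $R$-node each virtual edge is simple and contributes a single child into its own gap, so \autoref{lem:place-marks} is applied one gap at a time.

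Gluing across the block-cut tree is the same idea at a single shared circle rather than a tangent pair: at a cut vertex $c$ the circle $C_c$ is common to all incident blocks, and an analogous marking construction attaches the blocks in disjoint angular sectors around $C_c$, again with only polynomial shrinkage. Assembling the block-cut tree bottom-up then finishes the representation of $G$.

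The step I expect to be the main obstacle is the size accounting that certifies balance. Every gluing, through \autoref{lem:place-marks}, shrinks the inserted packing relative to the receiving circles by at most a polynomial factor in $n$ (on the order of $n^4$, per that lemma's proof). The essential observation is that such shrinkages compound only along a single root-to-leaf path through the combined hierarchy, whose depth is bounded by the product of the $\Oh(1)$ block-cut-tree height and the $\Oh(1)$ SPQR-tree height and is therefore still $\Oh(1)$. A circle thus undergoes only $\Oh(1)$ successive polynomial shrinkages, so the ratio of the largest to the smallest radius in the final packing is $\bigl(\mathrm{poly}(n)\bigr)^{\Oh(1)} = \mathrm{poly}(n)$, as required for a balanced representation. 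Crucially, a node having many children, or a cut vertex having many blocks, increases the \emph{number} of gluings but not the \emph{depth} at which they stack; this is precisely why \autoref{lem:place-marks} was stated to tolerate up to $n$ packings at once. The remaining bookkeeping, which the M\"obius framework is designed to absorb, is to respect the planar embedding throughout, so that the cyclic order of children around each separation pair and of blocks around each cut vertex is preserved and no spurious tangencies are introduced.
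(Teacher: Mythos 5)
Your proposal is correct and follows essentially the same approach as the paper's proof: decompose via the block-cut tree and SPQR trees, pack each constant-size node graph (so each piece is trivially balanced), glue with the M\"obius transformation of \autoref{lem:mobius-matches-marks} using marks placed as in \autoref{lem:place-marks}, and conclude balance because the $\Oh(1)$ tree heights mean only $\Oh(1)$ polynomial shrinkage factors ever compound. The only differences are cosmetic — you organize the gluing as a rooted bottom-up recursion and use Koebe packings for all node types, whereas the paper centers the gluing at P-nodes and uses explicit layouts for S- and P-nodes — but the key lemmas and the balance accounting are identical.
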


\begin{proof}
We form the block-cut tree of $G$, form the SPQR tree of each block, and form a circle-contact representation for the (constant size) graph associated with each SPQR tree node. For the R nodes, we use the circle packing theorem; for the S nodes, we pack equal-size circles in a circular layout; and for the P nodes, we place two equal circles in contact with each other (representing the one real edge of the P node).
We must then glue these packings together so that, when two or more nodes contain the same separation pair, this pair is represented by the same two circles; because of our assumption about the adjacency of separation pairs, we can assume that these circles are tangent to each other.

To perform this gluing step, at one of the P-nodes of the SPQR tree, we consider all of the neighboring nodes connected to it (corresponding to the virtual edges of the P node). For each neighboring node $X$, we form a configuration consisting of the two circles representing the separation vertices in its packing, and a marked point where another circle is tangent to these two circles.  For the P-node itself, we have the corresponding two circles, on which we may place a mark at an arbitrary point. By \autoref{lem:mobius-matches-marks}, we may transform the packing representing $X$ (and anything else that has already been glued to it) so that its separation pair is represented by the same two circles as the $P$ node and its marked point of contact with these two circles is at the given point. By placing the marked point on the P-node sufficiently close to its two circles' point of tangency, the packing representing $X$ will also be transformed to be close to the point of tangency, allowing it to avoid overlapping with any other part of the circle packing that has already been glued to the P-node.

In this way, the packings for each node of an SPQR tree can be glued together, step by step, to form a single packing representing the whole SPQR tree. At each P-node, in order to keep the packings glued to that node from interfering with each other, it is sufficient to place the marks polynomially close to each other and to the point of tangency, causing the glued-in packings to be reduced in size (compared to the two circles representing the separation vertices) by a polynomial factor. Because the SPQR tree has bounded height, at most a constant number of these polynomial factors are compounded together, resulting in polynomial balance overall.

The process for gluing together blocks of the block-cut tree is very similar, but simpler, because each gluing step must match only one circle (the articulation vertex between two blocks) rather than two circles. The analysis is the same.
\end{proof}

\subsection{Inversive-distance circle packings}

To apply the same method to graphs with non-adjacent separation pairs, we would like to modify the graphs associated with each SPQR tree node by removing the virtual edges between each non-adjacent separation pair, before finding a circle-contact representation for each of these graphs and then gluing these representations together by M\"obius transformations.
In the case of adjacent separation pairs, this gluing used \autoref{lem:mobius-matches-marks}, in which a one-parameter family of M\"obius transformations allows any tangent pair of circles to be mapped to any other tangent pair of circles, with an additional degree of freedom that allows two marked points to be aligned with each other. However, for non-adjacent separation pairs, the problem is made more complicated by the fact that pairs of disjoint circles cannot always be mapped into each other by a M\"obius transformation. Instead, an invariant of these pairs called the \emph{inversive distance} controls the existence of such a mapping. Two Euclidean circles with radii $r$ and $R$, whose centers are at distance $\ell$ from each other, have inversive distance
$$I=\frac{\ell^2-r^2-R^2}{2rR}.$$
This is an invariant under M\"obius transformations, and can also be defined for circles in hyperbolic and spherical geometry~\cite{BowHur-VM3-03}. The inversive distance of two tangent circles is one; it is less than one for circles that cross each other and greater than one for disjoint circles.
Two pairs of circles that both have equal inversive distances can be mapped into each other in the same way as \autoref{lem:mobius-matches-marks} (with the freedom to align two marked points) but two pairs with unequal inversive distances cannot be mapped into each other.

Bowers and Stephenson~\cite{BowSte-UD-04} suggested the use of inversive distance in circle packings. An inversive-distance circle packing is specified by a planar graph $G$ and a label $I(uv)$ for each edge $uv$ of $G$; the goal is to represent the vertices of $G$ by circles such that, for each edge $uv$, the circles representing $u$ and $v$ are at inversive distance exactly $I(uv)$ from each other. Bowers and Stephenson write that ``the theoretical underpinnings are not yet in place'' for this type of packing. In particular it is still not known under what conditions an inversive-distance circle packing exists, although a numerical method briefly suggested by Collins and Stephenson~\cite{ColSte-CGTA-03} and detailed by Bowers and Hurdal~\cite{BowHur-VM3-03} is reported to work well in practice for finding such packings. Notwithstanding this gap in our theoretical knowledge, we will prove that, in the case of interest for us, inversive-distance circle packings do exist and can be used to find balanced packings of graphs of bounded tree-depth.

One thing that is known about inversive-distance circle packings is that, when they do exist for a given maximal planar graph, they are unique.

\begin{lemma}[Luo~\cite{Luo-GT-11}]
\label{lem:idcp-uniqueness}
Let $G$ be a maximal planar graph, with outer face $\Delta$, let $G$ have an assignment $I$ of inversive distances to its edges, and fix a non-collinear placement of the three vertices of $\Delta$ in the Euclidean plane. If an inversive-distance circle packing exists for $G$ and $I$, with the circles representing $\Delta$ centered at the fixed placement of the vertices of $\Delta$, then this packing is uniquely determined by $G$, $I$, and the placement of~$\Delta$.
\end{lemma}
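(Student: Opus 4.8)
The plan is to prove uniqueness by a variational (energy) argument, the standard engine behind rigidity theorems for circle packings. First I would parametrize a candidate packing by the logarithms of its radii, $u_i=\log r_i$, with the three vertices of $\Delta$ held at their prescribed positions. Because $G$ is maximal planar, every interior face is a triangle; given the radii at the corners of a face $ijk$ and the prescribed inversive distances on its edges, the three edge lengths are determined by the inversive-distance law of cosines $\ell_{ij}^2 = r_i^2 + r_j^2 + 2 I_{ij}\, r_i r_j$, and hence so is the Euclidean triangle (up to congruence) and its corner angles $\theta_i^{ijk}(u)$, whenever the lengths satisfy the strict triangle inequality. A flat, consistent packing is then exactly a vector $u$ for which the angles around every interior vertex sum to $2\pi$, since the pinned outer triangle removes the residual rigid-motion and scaling freedom. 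Uniqueness of the packing thus reduces to uniqueness of the solution of this angle-sum system subject to the fixed boundary.

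Next I would exhibit the system as the gradient of a single scalar energy. The analytic heart is the differential cosine law: for a fixed triangle with fixed inversive distances, the Jacobian $\partial(\theta_i^{ijk})/\partial(u_j)$ is symmetric. Symmetry makes the $1$-form $\sum_i \theta_i^{ijk}\,du_i$ closed on the open region where the face is realizable, hence locally the differential of a function $F_{ijk}(u)$. Summing the $F_{ijk}$ over all faces and subtracting $2\pi\sum_i u_i$ over interior vertices gives an energy $\mathcal{E}(u)$ whose partial derivative in $u_i$ equals the discrepancy $\bigl(\sum \theta_i\bigr)-2\pi$ between the total angle at interior vertex $i$ and $2\pi$. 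Packings are therefore precisely the critical points of $\mathcal{E}$, and the claim becomes the statement that $\mathcal{E}$ has at most one critical point.

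I would then finish by convexity. The same differential cosine law, refined to track signs, shows that the Hessian of $\mathcal{E}$ (equivalently the Jacobian of the curvature map in the $u$-variables) is definite on the open set of realizable configurations, so that, with the standard sign convention, $\mathcal{E}$ is strictly convex there; and a strictly convex function has at most one critical point. Fixing the boundary triangle kills the one remaining degeneracy, the global scaling $u\mapsto u+c$, so on the admissible interior the argument is nondegenerate.

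The hard part, and the reason the naive argument does not quite close, is that the domain of realizable configurations -- the $u$ for which every face satisfies its triangle inequality -- is \emph{not} convex, so local strict convexity of $\mathcal{E}$ on it does not by itself forbid two distinct critical points separated by non-realizable configurations. The remedy I would adopt is Luo's continuous extension of the angle functions: each $\theta_i^{ijk}$ extends continuously to all of $\RR^3$ in its three face variables by declaring a degenerate angle to be $0$ or $\pi$, the extended $1$-form remains closed, and the resulting extended energy $\widetilde{\mathcal{E}}$ is a $C^1$ convex function on the full convex domain $\RR^n$. Convexity of $\widetilde{\mathcal{E}}$, together with strictness along the directions transverse to its degeneracies and the fact that a genuine packing is necessarily a critical point lying in the strictly convex realizable region, forces at most one genuine packing and yields the asserted global uniqueness. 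Matching the signs in the Hessian computation and verifying that the extended form stays closed is the main technical burden, but it is exactly the content established in~\cite{Luo-GT-11}.
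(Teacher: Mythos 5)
Your proposal and the paper's treatment diverge in kind: the paper does not prove this lemma at all. It is stated as Luo's theorem, and the paper's entire ``proof'' is the remark following the statement, which explains how this planar, fixed-boundary version specializes from Luo's more general rigidity theorem for triangulated surfaces, where the packing is determined up to scale by the angular defects at the vertices: for a triangulated disk with outer face $\Delta$, the defects are zero at interior vertices, the outer angle sums are forced by the fixed placement of $\Delta$, and that placement also fixes the scale. What you have written instead is an outline of the internal proof of Luo's theorem itself: log-radius coordinates, the inversive-distance law of cosines (consistent with the paper's definition of inversive distance), the closed $1$-form $\sum_i \theta_i\,du_i$ coming from the symmetry of the angle Jacobian, an energy whose critical points are the packings, convexity modulo the scaling direction, and---crucially---Luo's continuous extension of the angle functions to obtain a $C^1$ convex energy on all of $\RR^n$, which is needed precisely because the realizable region is not convex. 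That outline is faithful to Luo's actual method, and you correctly identify the non-convexity of the domain as the reason the naive convexity argument does not close. Note, however, that the two steps you defer to \cite{Luo-GT-11}---closedness of the extended form and the sign analysis of the Hessian---are the entire technical content, so your route, like the paper's, ultimately rests on the citation; the paper's version is simply the economical one for a result attributed to someone else, while yours explains why the result is true.

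One gloss in your sketch deserves attention: the boundary condition. ``The pinned outer triangle removes the residual rigid-motion and scaling freedom'' is not yet a constraint on the vector $u$ of log-radii, and the angle-sum system at interior vertices alone does not encode the prescribed placement of $\Delta$. The clean translation is exactly the observation the paper uses for its specialization: in any packing whose outer face is $\Delta$, the triangles of centers incident to an outer vertex fill precisely the corner of the fixed triangle at that vertex, so the angle sum at each outer vertex equals the corresponding angle of $\Delta$. Hence two packings satisfying the hypotheses have identical prescribed angle sums at \emph{every} vertex, the correct energy is the one whose critical points are radius vectors realizing those angle sums, and strict convexity transverse to the scaling direction gives uniqueness up to scale, with the scale then fixed by the size of $\Delta$. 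Without this (or an equivalent) translation, your reduction of ``uniqueness of the packing'' to ``uniqueness of the solution of the angle-sum system subject to the fixed boundary'' is not well-posed.
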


The fixed placement of $\Delta$ simplifies the statement of this result by eliminating the possibility of M\"obius transformations that would change the position of the packing without changing its inversive distances.
Luo actually proved this result in greater generality, for triangulated topological surfaces with Euclidean or hyperbolic geometry; the analogous statement for spherical geometry turns out not to be true~\cite{MaSch-DCG-12}. In Luo's more general setting, the packing is determined (up to scale) by the system of angular defects at each vertex of the surface. In the version that we need for the Euclidean plane, the angular defects are zero except at the vertices of $\Delta$, and the fixed placement of $\Delta$ determines both the angular defects at these vertices and the scale of the packing, so the result as stated above follows from Luo's more general theorem.

\begin{corollary}
\label{cor:idcp-open}
Let $G$ be a maximal planar graph, with outer face $\Delta$, and fix a non-collinear placement of the three vertices of $\Delta$ in the Euclidean plane. Let $m$ be the number of edges in~$G$. Then the set of assignments of distances to the edges of~$G$ that have inversive-distance circle packings is an open subset of $\RR^m$, and there exists a continuous function from these distance assignments to the circle centers and radii of the (unique) corresponding packings.
\end{corollary}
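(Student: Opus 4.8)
The plan is to derive the corollary from Brouwer's \emph{Invariance of Domain}, using the uniqueness statement of \autoref{lem:idcp-uniqueness} as the crucial injectivity input. First I would set up the relevant configuration space. Writing $n$ for the number of vertices of $G$, maximality of $G$ gives $m = 3n-6$. A candidate packing is determined by the centers and radii of its $n$ circles, a total of $3n$ real parameters; fixing the placement of the three vertices of $\Delta$ pins down their centers and removes exactly $6$ of these, leaving $3n-6 = m$ free parameters (the $n$ radii together with the $2(n-3)$ coordinates of the remaining centers). Let $\mathcal{P} \subseteq \RR^m$ be the set of those parameter vectors for which all radii are positive and which realize the planar embedding of $G$ correctly (non-degenerate circles, each triangular face positively oriented, and the circles around each vertex occurring in the embedded cyclic order). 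Each of these is a strict-inequality condition, so $\mathcal{P}$ is an open subset of $\RR^m$.

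Next I would define the evaluation map $F\colon \mathcal{P} \to \RR^m$ sending a configuration to the vector $\bigl(I(uv)\bigr)_{uv \in E}$ of its inversive distances over the $m$ edges $E$ of $G$. Using the formula $I=\frac{\ell^2-r^2-R^2}{2rR}$, each coordinate of $F$ is a rational function of the centers and radii that is smooth wherever the radii are positive, so $F$ is continuous (indeed smooth) on $\mathcal{P}$. The set of distance assignments admitting a packing, in the sense of the corollary, is exactly the image $F(\mathcal{P})$, and \autoref{lem:idcp-uniqueness} says precisely that $F$ is injective: two configurations in $\mathcal{P}$ with the same inversive-distance vector are both packings of $G$ for that vector with $\Delta$ at the fixed placement, hence by uniqueness they coincide.

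With these pieces in hand the conclusion is immediate from Invariance of Domain: a continuous injective map from an open subset of $\RR^m$ into $\RR^m$ is an open map and a homeomorphism onto its image. Thus $F(\mathcal{P})$ is open in $\RR^m$, establishing the first assertion, and the inverse $F^{-1}\colon F(\mathcal{P}) \to \mathcal{P}$ is continuous; composing $F^{-1}$ with the projection onto the circle centers and radii yields the desired continuous function from distance assignments to packings, establishing the second assertion.

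I expect the main obstacle to be the bookkeeping in the first paragraph: one must verify carefully that $\mathcal{P}$ really is an open subset of $\RR^m$ of the correct dimension and, above all, that the notion of \emph{packing} used to define $\mathcal{P}$ agrees with the one in \autoref{lem:idcp-uniqueness}, so that the cited uniqueness genuinely delivers injectivity of $F$ on all of $\mathcal{P}$ and not merely on some smaller subset. Once injectivity, continuity, and the matching domain and codomain dimensions are secured, Invariance of Domain does the remaining work, and in particular there is no need to establish nondegeneracy of the Jacobian of $F$ or to invoke the inverse function theorem directly.
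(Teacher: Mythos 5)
Your proposal follows essentially the same route as the paper's own proof: a dimension count via Euler's formula showing that both the space of packings (with $\Delta$ pinned down) and the space of distance assignments are $m$-dimensional, injectivity of the distance-evaluation map from Luo's uniqueness result (\autoref{lem:idcp-uniqueness}), and Brouwer's invariance of domain to conclude that the image is open and the inverse is continuous. Your write-up merely spells out the configuration-space bookkeeping that the paper's terser argument leaves implicit, so it is correct and not a different approach.
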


\begin{proof}
The space of distance assignments is $m$-dimensional, as is (by Euler's formula) the space of possible placements of circle centers (obeying the fixed placement of $\Delta$) and radii in a packing. Thus, by \autoref{lem:idcp-uniqueness}, the function from circle placements to inversive distances is a continuous injective function between two sets of the same dimension. By Brouwer's theorem on invariance of domain~\cite{Bro-MA-12}, this function has an open image and a continuous inverse function.
\end{proof}

By using the fact that the set of allowable inversive-distance assignments is open, we can find the packing we need:

\begin{lemma}
\label{lem:idcp-existence}
Let $G=(V,E)$ be a planar graph, and let $F\subset E$ be a subset of ``virtual'' edges in $G$.
Then, for all sufficiently small $\eps>0$, there exists a collection of circles in the plane, and a one-to-one correspondence between these circles and the vertices of $G$, such that:
\begin{itemize}
\item For each edge $e\in E\setminus F$, the two circles representing the endpoints of $e$ are tangent (inversive distance~$1$), and
\item for each edge $e\in F$, the two circles representing the endpoints of $e$ have inversive distance exactly $1+\eps$.
\end{itemize}
Additionally, this circle packing can be chosen to vary continuously with $\eps$, in such a way that as $\eps\to 0$ the packing converges to a circle-contact representation for~$G$.
\end{lemma}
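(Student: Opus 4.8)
The plan is to reduce this statement to the openness-and-continuity result already established in \autoref{cor:idcp-open}, using the all-tangent Koebe packing as the anchor at $\eps=0$. First I would augment $G$ into a maximal planar graph $G'$ by the standard triangulation described in the introduction, which adds only ``dummy'' vertices inside the faces of a fixed planar embedding of $G$ together with edges incident to those dummy vertices, never an edge between two original vertices. Consequently $E$ (and in particular the distinguished subset $F$) is contained in the edge set of $G'$, and the restriction of $G'$ to the original vertex set $V$ recovers $G$ exactly, so that among the vertices of $V$ adjacency in $G'$ coincides with adjacency in $G$. I would then fix a triangular face $\Delta$ of $G'$ and a non-collinear placement of its three vertices, as required by \autoref{cor:idcp-open}, and keep this normalization fixed throughout.

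Next I would define, for each $\eps\ge 0$, an inversive-distance assignment $I_\eps$ on the edges of $G'$ by setting $I_\eps(e)=1+\eps$ for $e\in F$ and $I_\eps(e)=1$ for every other edge, including the edges of $E\setminus F$ and all dummy-incident edges. At $\eps=0$ the assignment $I_0$ is identically $1$, and an inversive-distance circle packing realizing $I_0$ is exactly a circle-contact representation of the maximal planar graph $G'$, which exists by Koebe's circle packing theorem. Hence $I_0$ lies in the set $\mathcal{O}\subseteq\RR^{m'}$ of realizable distance assignments, where $m'$ is the number of edges of $G'$.

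By \autoref{cor:idcp-open}, $\mathcal{O}$ is open and carries a continuous map $\phi$ sending each realizable assignment to the centers and radii of its unique normalized packing. Since $I_0\in\mathcal{O}$, the set $\mathcal{O}$ is open, and $I_\eps\to I_0$ as $\eps\to 0$, there is an $\eps_0>0$ with $I_\eps\in\mathcal{O}$ for all $\eps\in[0,\eps_0)$. For each such $\eps$ the packing $\phi(I_\eps)$ realizes the prescribed inversive distances; deleting the circles of the dummy vertices leaves one circle per vertex of $G$, and because the inversive distance of a pair of circles depends only on that pair, the surviving circles are tangent along $E\setminus F$ and at inversive distance exactly $1+\eps$ along $F$, as required. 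Continuity of $\phi$ (followed by the projection that discards dummy circles) makes this restricted family vary continuously in $\eps$, and as $\eps\to 0$ it converges to $\phi(I_0)$ restricted to $V$, that is, to the Koebe packing of $G'$ restricted to the original vertices, which is a circle-contact representation of $G$.

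The one genuinely self-contained point to verify is that for small $\eps$ the restricted configuration is an honest interior-disjoint packing rather than merely a solution of the inversive-distance equations: the Koebe packing at $\eps=0$ has every pair of non-adjacent circles \emph{strictly} disjoint, strict disjointness is an open condition, and the tangencies forced by $I_\eps=1$ are preserved exactly, so by continuity the whole configuration remains embedded for $\eps$ below some threshold. The main obstacle, however, is not any of these limiting steps, which follow formally once \autoref{cor:idcp-open} is available; it is the content feeding that corollary, where Luo's uniqueness theorem (\autoref{lem:idcp-uniqueness}) and invariance of domain supply the openness of $\mathcal{O}$ and the continuity of $\phi$. Given those, the remaining care is bookkeeping: holding the placement of $\Delta$ fixed across all $\eps$ so that $\phi$ is single-valued, and arranging the triangulation so that no spurious tangency is created between two original vertices.
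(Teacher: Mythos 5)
Your proposal is correct and follows essentially the same route as the paper's own proof: augment $G$ to a maximal planar graph with dummy vertices, realize the all-tangent assignment by Koebe's theorem, invoke the openness and continuity of \autoref{cor:idcp-open} to perturb the inversive distances on $F$ to $1+\eps$, and restrict the resulting packing to the original vertices. Your added check that non-adjacent circles stay strictly disjoint for small $\eps$ is a sound (and welcome) observation, though the paper defers exactly that continuity argument to the proof of the main tree-depth theorem rather than including it in this lemma.
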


\begin{proof}
Add extra vertices to $G$, and extra edges incident to those vertices, to make the augmented graph maximal planar; include as part of this augmentation a triangle $\Delta$ of added vertices.
Choose a placement of $\Delta$ at the vertices of an equilateral triangle in the plane.
Let $D$ be the distance assignment that sets all inversive distances on edges of the augmented graph to equal~$1$; then there exists a circle packing for~$D$, by the original circle packing theorem of Koebe.
By \autoref{cor:idcp-open}, there exists an open neighborhood of $D$ within which all distance assignments are achievable; let $\eps$ be sufficiently small that all vectors within $L_\infty$ distance $\eps$ of $D$ are achievable.
In particular, the distance assignment that sets all distances in $F$ equal to $1+\eps$, and all remaining distances equal to~$1$, has an inversive circle packing. The subset of this packing that corresponds to the original vertices in $V$ then satisfies the conditions of the lemma. Once the augmentation of $G$ and the placement of $\Delta$ is fixed, the continuous variation of the packing on $\eps$ follows from the continuity of the function from distance assignments to packings stated in \autoref{cor:idcp-open}.
\end{proof}

\subsection{Circle packings for graphs of bounded tree-depth}

\begin{theorem}
For every constant bound $d$, every planar graph with tree-depth at most $d$ has a balanced circle-contact representation.
\end{theorem}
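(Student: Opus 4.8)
The plan is to follow the M\"obius-gluing strategy already used for \autoref{lem:adj-separation}, but to replace the Koebe packings of the SPQR-tree nodes with the inversive-distance packings of \autoref{lem:idcp-existence}, so that non-adjacent separation pairs can be represented by disjoint circles. First I would invoke the characterization lemma: since $G$ has tree-depth at most the constant $d$, its block-cut tree has bounded height, the SPQR tree of each block has bounded height, and the graph $H$ associated with each SPQR-tree node has a bounded number of vertices. For each such node graph $H$, I would mark as \emph{virtual} the edges of $H$ corresponding to separation pairs that are \emph{not} adjacent in $G$, and apply \autoref{lem:idcp-existence} with a single small parameter $\eps>0$ (chosen small enough to work for all of the boundedly many node types) to obtain a packing in which every real edge of $H$ is realized by a tangent pair of circles (inversive distance $1$) and every virtual edge by a \emph{disjoint} pair at inversive distance exactly $1+\eps$. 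The disjointness is exactly what we want for the final representation, since a non-adjacent separation pair must carry no contact. Because $H$ has bounded size, the $\eps=0$ packing is a Koebe packing with bounded radius ratio, and by the continuity asserted in \autoref{cor:idcp-open} the perturbed packing for small $\eps$ remains balanced.

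Next I would glue these per-node packings together, processing the SPQR tree of each block and then the block-cut tree, recursively as in the proof of \autoref{lem:adj-separation}. The essential new point is that each SPQR-tree edge identifies the \emph{same} separation pair $\{u,v\}$ in the two packings it joins, and by construction the two circles representing $\{u,v\}$ are drawn at the same inversive distance in both packings (either $1$, if the pair is adjacent, or the common value $1+\eps$ otherwise). Since inversive distance is the only M\"obius invariant of a pair of circles, the generalization of \autoref{lem:mobius-matches-marks} to pairs of equal inversive distance (with the same extra freedom to align one marked point) lets me transform one packing so that its copy of $\{u,v\}$ coincides circle-for-circle with the other. When several packings share one separation pair (a P-node, or a cut vertex of the block-cut tree, where only a single circle need be matched), I would distribute them around the shared circles at a sequence of marks produced by the analogue of \autoref{lem:place-marks}: packing a sequence of auxiliary circles into the narrow gap between the two almost-tangent circles $u,v$ yields marks at which the child packings attach so that no two of them overlap while each is shrunk by only a polynomial factor.

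Finally I would bound the imbalance. Each gluing step reduces the size of the glued-in packing relative to the shared pair by a polynomial factor, of the form $n^{\Oh(1)}$ exactly as in \autoref{lem:place-marks}, and every root-to-leaf path in the combined block-cut/SPQR hierarchy passes through only a bounded number $\Oh(bh)=\Oh(1)$ of gluings. Compounding a bounded number of polynomial factors leaves a polynomial ratio between the largest and smallest circle, so the final contact representation of $G$ is balanced, and by construction its tangencies are precisely the edges of $G$.

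I expect the main obstacle to be exactly the treatment of the non-adjacent separation pairs: two disjoint circles cannot in general be M\"obius-mapped onto another disjoint pair, so the entire scheme hinges on first arranging that every occurrence of a given separation pair is drawn at an \emph{identical} inversive distance. This is what forces the use of the inversive-distance packings of \autoref{lem:idcp-existence}, whose existence was not previously known in general, and it is what distinguishes the argument from the purely tangent-circle case of \autoref{lem:adj-separation}. Once the common inversive distance $1+\eps$ is secured, the alignment step and the polynomial size-control step are close variants of \autoref{lem:mobius-matches-marks} and \autoref{lem:place-marks}, respectively.
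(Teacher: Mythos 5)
Your proposal is correct and takes essentially the same route as the paper's proof: the block-cut/SPQR characterization of planar bounded tree-depth, per-node inversive-distance packings from \autoref{lem:idcp-existence} with non-adjacent separation pairs placed at the common inversive distance $1+\eps$, M\"obius gluing via the equal-inversive-distance analogue of \autoref{lem:mobius-matches-marks}, size control via the analogue of \autoref{lem:place-marks}, and polynomial balance from compounding a bounded number of polynomial distortion factors. The only point you gloss over (and which the paper dispatches in one sentence) is that an inversive-distance packing does not automatically keep disjoint the circles of vertex pairs joined by \emph{no} edge, real or virtual; this, too, follows for sufficiently small $\eps$ by continuity, since the packing converges to the Koebe packing as $\eps\to 0$.
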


\begin{proof}
We follow the same outline as the proof of \autoref{lem:adj-separation}, except that we choose a sufficiently small value of $\eps$ (the same $\eps$ for all nodes of all SPQR trees)
and use \autoref{lem:idcp-existence} to construct circle packings for each node in which the real edges of the graph associated with the node, and the virtual edges corresponding to adjacent separation pairs, have inversive distance one (that is, their circles are tangent) while the remaining edges in each node (non-adjacent separation pairs) have inversive distance exactly $1+\eps$ (their circles are disjoint).

The analogue of \autoref{lem:mobius-matches-marks} holds for any two pairs of tangent circles with the same inversive distance as each other. By continuity of the packing with respect to $\eps$, for small enough values of $\eps$, the analogues of \autoref{lm:bounded-packing} and \autoref{lem:place-marks} also remain valid. Additionally, although inversive-distance circle packings in general are not guaranteed to have disjoint circles for pairs of vertices that are not connected by an edge in the underlying graph, this property also follows for our circle packings, for sufficiently small $\eps$, by continuity. With these ingredients in hand, the proof proceeds as before.
\end{proof}

We remark that the proof described above does not require $\eps$ to be significantly smaller than the balance of the resulting packing. Indeed, we only need to apply the inversive circle packing method of \autoref{lem:idcp-existence} to the R-nodes of SPQR trees, as the other nodes have associated graphs that are easy to represent directly. Because each R-node has $\Oh(1)$ vertices, there are only $\Oh(1)$ combinatorially distinct choices of an associated graph and set of virtual edges for which we need packings, and in order to apply \autoref{lem:idcp-existence} we can choose any $\eps$ that is small enough to work for each of these inputs, a value that is independent of~$n$. The limiting factor controlling our actual choice of $\eps$ is  the variant of \autoref{lem:place-marks} that applies to inversive-distance circle packings, which only requires $\eps$ to be polynomially small.

Although our proof is not constructive, the report by Collins and Stephenson~\cite{ColSte-CGTA-03} that their inversive-distance circle packing algorithm works well in practice gives us hope that the same would be true of its application to this problem.

\else

A graph $G$ has \emph{tree-depth} $t$ if there exists a supergraph of $G$, and a depth-first search tree $T$ of the supergraph, with at most $t$ vertices on every root--leaf path in~$T$. A family of graphs has bounded tree-depth if and only if there is a constant bound on the length of the longest path that can be found in any of its graphs~\cite{NesOss-12}.

\begin{theorem}
\label{thm:tree-depth}
For every constant bound $d$, every planar graph with tree-depth at most $d$ has a balanced circle-contact representation.
\end{theorem}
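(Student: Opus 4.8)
The plan is to follow the strategy already developed for \autoref{lem:adj-separation}, but to handle the general case in which separation pairs need not be adjacent. First I would invoke the structural characterization of bounded-tree-depth planar graphs from earlier in this section: every graph in the family has a block-cut tree of bounded height, each of its blocks has an SPQR tree of bounded height, and every SPQR-tree node has an associated graph of bounded size. This decomposition reduces the problem to two tasks: (i)~building a balanced packing for each constant-size node graph, and (ii)~gluing these packings together along the block-cut and SPQR tree structure while keeping the size blowup at each gluing step polynomial.

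For step~(i), the difficulty compared to \autoref{lem:adj-separation} is that a non-adjacent separation pair is shared between two SPQR nodes as a \emph{virtual} edge, and the two circles representing such a pair must be disjoint rather than tangent. Since disjoint pairs of circles cannot in general be M\"obius-mapped into one another---the obstruction being their inversive distance---I would fix a single small parameter $\eps>0$, the same for every node of every SPQR tree, and apply \autoref{lem:idcp-existence} to each node graph. This realizes the real edges and the virtual edges at adjacent separation pairs by tangent circles (inversive distance~$1$), while realizing the virtual edges at non-adjacent separation pairs at inversive distance exactly $1+\eps$. Because every node graph has $\Oh(1)$ vertices, there are only constantly many combinatorial types of (node graph, virtual-edge set) to consider, so a single $\eps$ independent of~$n$ works for all of them.

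For step~(ii), I would glue along the trees exactly as in the adjacent case, but now using the version of \autoref{lem:mobius-matches-marks} valid for any two circle pairs of \emph{equal} inversive distance: at each virtual edge the two incident nodes share two circles at the common inversive distance $1+\eps$, so a one-parameter M\"obius family maps one representation onto the other, with an extra degree of freedom to align the marked points of contact. The size shrinkage at each gluing step is controlled by the inversive-distance analogue of \autoref{lem:place-marks}, which remains valid for small $\eps$ by continuity and keeps each step's reduction polynomial. With the block-cut and SPQR trees both of bounded height, only $\Oh(1)$ polynomial factors compound, yielding polynomial balance overall.

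The main obstacle is justifying that the construction degenerates correctly as $\eps\to 0$. I must ensure not only that the prescribed inversive-distance packings exist (guaranteed by \autoref{lem:idcp-existence}) but also that pairs of vertices \emph{not} joined by any edge are represented by disjoint circles---a property that inversive-distance packings do not enjoy in general. Here I would appeal to continuity: at $\eps=0$ the packing converges to an honest circle-contact representation in which non-adjacent vertices already have disjoint circles, so for all sufficiently small $\eps$ this disjointness persists, together with the analogues of \autoref{lm:bounded-packing} and \autoref{lem:place-marks}. Since $\eps$ need only be polynomially small, it does not degrade the final balance bound, and the proof concludes exactly as in the adjacent-separation-pair case.
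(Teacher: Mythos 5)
Your proposal is correct and follows essentially the same route as the paper's own proof: the block-cut/SPQR characterization, inversive-distance packings via \autoref{lem:idcp-existence} with a single $\eps$ at non-adjacent separation pairs, M\"obius gluing with the equal-inversive-distance analogues of \autoref{lem:mobius-matches-marks} and \autoref{lem:place-marks}, and a continuity argument for disjointness of non-adjacent circles as $\eps\to 0$. Even your closing observations (constantly many combinatorial node types, so $\eps$ is independent of $n$, and only polynomially small $\eps$ is needed) match the paper's concluding remarks.
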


We sketch the proof from the full version of this paper~\cite{AEGKP14}.
The first step characterizes the planar graphs with bounded tree-depth, using block-cut trees and  SPQR trees to represent the 2-vertex-connected and 3-vertex-connected components of a graph. We show that a family of planar graphs has bounded tree-depth if and only if the block-cut trees of graphs in the family have bounded depth, the SPQR trees of 2-connected components of these graphs have bounded depth, and each 3-connected component has a bounded number of vertices.
If all three conditions are true, the longest path length can be bounded by a recursion of bounded height and branching factor. Conversely, if any one of these conditions is violated, then there exist paths of unbounded length: a long path in one of the trees leads directly to a long path in the graph and large 3-connected components have long paths by results of Chen and Yu~\cite{CheYu-JCTB-02}.

Because each 3-connected component must have bounded size, the circle packing theorem gives it a balanced circle packing. Next, we construct a contact representation for a supergraph of the given graph, by using M\"obius transformations to glue together these packings. The virtual edge representing two adjacent components in an SPQR tree should be represented by a pair of tangent circles shared by the packings for the two components; two tangent circles may be shared by an unbounded number of components. We find a family of M\"obius transformations that pack all these components into the space surrounding the two shared tangent circles, so that the components are otherwise disjoint from each other, and each is distorted by a polynomial factor. By using this method to combine adjacent nodes of the block-cut and SPQR trees, we obtain a balanced circle packing for the whole graph in which each component is transformed a constant number of times with polynomial distortion per transformation. However, we may have additional unwanted tangencies between circles, coming from virtual edges in an SPQR tree node that do not correspond to graph edges.

The final part of our proof of \autoref{thm:tree-depth} shows how to perturb these glued-together packings, in a controlled way, to eliminate the contacts between pairs of vertices that are connected by virtual edges but not by edges of the input graph while still allowing the M\"obius gluing to work correctly. The existence of a M\"obius transformation from one pair of circles to another is controlled by an invariant of pairs of circles called their \emph{inversive distance} that equals~$1$ for tangent circles, is less than~$1$ for crossing circles, and is greater than~$1$ for disjoint circles. The theory of inversive distance circle packings is not as well-developed as the theory of tangent circle packings, but a theorem of Luo~\cite{Luo-GT-11} implies that, for a maximal planar graph with specified positions for the centers of the three circles representing the outer face of the graph and specified inversive distances on each edge of the graph, a circle packing of this type is unique when it exists. By combining this fact with Brouwer's theorem of invariance of domain, we show that for any fixed maximal planar graph (and fixed three outer circle centers) the space of feasible assignments of inversive distances to edges of the graph forms an open set. Therefore, for all sufficiently small $\eps>0$, there exist packings for which all virtual-but-not-actual edges have inversive distance $1+\eps$ and all actual edges have inversive distance~$1$. Choosing $\eps$ to be inverse-polynomially small allows the same gluing method to complete the construction and the proof.

\fi

\section{Conclusion}

We studied balanced circle packings for planar graphs, showing that several
rich classes of graphs have balanced circle
\ifFull
packings, while several simple
classes of graphs do not.
\else
packings.
\fi
One interesting open problem is whether or not
every outerplanar graph has a balanced circle packing representation.
While we identified several subclasses of outerplanar graphs that admit
such representations, the question remains open for general outerplanar graphs.

\medskip\noindent {\bf Acknowledgments.}
This work is supported
in part by the National Science Foundation under
grants CCF-1228639, CCF-1115971, DEB 1053573, and by the Office of Naval Research
under Grant No. N00014-08-1-1015.

\ifFull
{\raggedright
\bibliographystyle{abuser}}
\else
\bibliographystyle{splncs03}
\fi
\bibliography{circle-contact}

\end{document}